\documentclass[smallextended]{svjour3}

\usepackage{mathtools}
\usepackage{amsmath,amsfonts,amssymb,mathrsfs}
\usepackage[T1, OT1]{fontenc}
\DeclareTextSymbolDefault{\DJ}{T1}
\usepackage{booktabs}
\usepackage{tabularx}
\usepackage{float}
\usepackage{siunitx}
\usepackage{subcaption}
\usepackage{caption}

\captionsetup[subfigure]{skip=-20pt}
\sisetup{output-exponent-marker=\ensuremath{\mathrm{e}}}
\usepackage{xcolor}
\usepackage{graphicx}
\usepackage{subcaption}
\AtBeginDocument{
\heavyrulewidth=.08em
\lightrulewidth=.05em
\cmidrulewidth=.03em
\belowrulesep=.65ex
\belowbottomsep=0pt
\aboverulesep=.4ex
\abovetopsep=0pt
\cmidrulesep=\doublerulesep
\cmidrulekern=.5em
\defaultaddspace=.5em
}
\usepackage{hyperref}
\hypersetup{unicode=true, 
  pdftoolbar=true, 
  pdfmenubar=true, 
  pdffitwindow=false, 
  pdfstartview={FitH}, 
  pdftitle={My title}, 
  pdfauthor={Author}, 
  pdfsubject={Subject}, 
  pdfcreator={Creator}, 
  pdfkeywords={keyword1, key2, key3}, 
  pdfnewwindow=true, 
  colorlinks=true, 
  linkcolor=blue, 
  citecolor=blue, 
  filecolor=blue, 
  urlcolor=blue, 
  pdfencoding=auto, psdextra }
\usepackage{booktabs}
\newtheorem{defi}{Definition}
\smartqed
\newtheorem{thm}{Theorem}
\newtheorem{lem}[thm]{Lemma}
\usepackage{tabstackengine}
\stackMath%
\usepackage{algpseudocode}  

\algrenewcommand{\algorithmiccomment}[1]{\hskip2em\textcolor{black}{//\scriptsize\tt #1}}

\usepackage{newfloat}
\usepackage{caption}
\AtEndEnvironment{algorithm}{\vskip-6pt\noindent\rule{\linewidth}{1pt}\par\nobreak\vskip-8pt}

\DeclareFloatingEnvironment[
    fileext=loa,
    listname=List of Algorithms,
    name=ALGORITHM,
    placement=htbp,
]{algorithm}
\DeclareCaptionFormat{algorithms}{\vskip-4pt\rule{\linewidth}{1pt}\par\centering#1#2#3\vskip-5pt\rule{\linewidth}{0.6pt}\vskip-6pt}
\captionsetup[algorithm]{singlelinecheck=off,format=algorithms}

\algblock[Input]{Input}{EndInput}
\algblockdefx[Input]{Input}{EndInput}%
    [1]{\textbf{Input} #1}%
    {}
\newcommand{\real}{\mathbb{R}} \newcommand{\complex}{\mathbb{C}} 
\newcommand{\rank}[1]{\mathrm{rank} (#1)} \newcommand{\range}[1]{\mathcal{R} (#1)}
\newcommand{\norm}[1]{\left\lVert#1\right\rVert} 
\DeclareMathOperator{\trace}{Tr}

\DeclareMathOperator*\lowlim{\underline{lim}}

\usepackage{mathtools}

\def\symbtype{0} \ifcase\symbtype%
\newcommand{\adj}[1]{{#1}^{*}}

\DeclarePairedDelimiter\bra{\langle}{\rvert}
\DeclarePairedDelimiter\ket{\lvert}{\rangle}
\DeclarePairedDelimiterX\braket[2]{}{}{(#1, #2)}
\DeclarePairedDelimiterX\mket[2]{}{}{#1\otimes#2}

\or%
\newcommand{\adj}[1]{{#1}^{\dagger}}

\DeclarePairedDelimiter\bra{\langle}{\rvert}
\DeclarePairedDelimiter\ket{\lvert}{\rangle}
\DeclarePairedDelimiterX\mket[2]{\lvert}{\rangle}{#1,#2}
\DeclarePairedDelimiterX\braket[2]{\langle}{\rangle}{#1 \delimsize\vert#2}


\fi

\usepackage{cancel}
\makeatletter

\makeatother

\renewcommand{\intercal}{\mathsf{T}}
\usepackage{dsfont}
\newcommand{\I}{\mathds 1}
\newcommand{ \trans}[1]{{#1}^{\intercal}}
\newcommand{\myH}[1][{}]{\mathcal{H}_{#1}}
\newcommand{\etal}{{\sl et~al.}}
\newcommand\symmg{\mathcal{S}}
\newcommand{\symmf}{\mathcal{G}}
\newcommand{\abs}[1]{\left|#1\right|}
\newcommand{\sqp}{\text{\tiny SQP}}
\newcommand{\nt}{\text{\tiny NT}}
\newcommand{\Lag}{\mathcal{L}}



%

\begin{document}
\title{Decomposition of completely symmetric states}

\author{Qian Lilong\and Chu Delin}

\institute{L. Qian
  \at Department of Mathematics, National University of Singapore, Singapore \\
  \email{qian.lilong@u.nus.edu} 
  \and D. Chu
  \at 
  Department of Mathematics, National University of Singapore, Singapore \\\email{matchudl@nus.edu.sg} }
\date{\today}
\maketitle
\begin{abstract}
	Symmetry is a fundamental milestone of quantum physics and the relation between entanglement  is one of the central mysteries of quantum mechanics. In this paper, we consider a subclass of symmetric  quantum states in the bipartite system, namely, the completely symmetric 
 states, which is invariant under the index permutation.  We  investigate the separability of these states.
 After studying some examples, we conjecture that completely symmetric state is  separable if and only if it is 
  S-separable, i.e., each term in this decomposition is a    symmetric pure product
 state $\ket{x,x}\bra{x,x}$.  It was proved to be true when the rank is
 less than or equal to $4$ or $N+1$.  
   After studying the properties of these state, we 
 propose a numerical algorithm which is able to detect S-separability. This algorithm is based on the best separable approximation,  which furthermore turns out to be applicable to test the
 separability of quantum states in bosonic system.
 Besides, we analysis  the convergence behaviour
 of this algorithm. Some numerical examples are tested to show the effectiveness of the algorithm.
\end{abstract}

\keywords{Quantum entanglement\and Completely symmetric states\and Symmetrically separable\and Best separable approximation}

\section{Introduction}
\label{intro}
Quantum entanglement, first recognized by Eistein~\cite{einstein1935can} and
Schr\"odinger~\cite{schrodinger1935gegenwartige}, plays  a crucial role in  the field of quantum computation and quantum
information. It is  the resources of many applications  in  quantum 
cryptography, quantum teleportation, and quantum key distribution\cite{nielsen2002quantum}. Therefore, the question of
whether a given quantum state is entangled or separable is both of fundamental importance. For a given quantum state
$\rho$ acting on the finite-dimensional Hilbert space $\mathcal{H}_1\otimes \mathcal{H}_{2}$, it is said to be separable if
it can be written as a convex linear combination of pure product quantum states, i.e.,
\begin{equation}
  \label{goesqp:eq:83}
  \rho = \sum_i \lambda_i\rho_{i}^{(1)}\otimes \rho_{i}^{(2)},
\end{equation}
where $\sum_i\lambda_i=1,\lambda_i>0$ and $\rho_i^{(1)}$ and $\rho_i^{(2)}$ are the pure  states in the subspaces $\mathcal{H}_{1}$ and
$\mathcal{H}_2$, respectively.

Despite its wide importance,  to find an efficient and effective  method to solve this question  in general case  is
considered to be NP-hard~\cite{gurvits2003classical,gharibian2008strong}.  After remarkable efforts  over recent
years,  many  methods to test entanglement  are proposed for some  subclasses of quantum states. For example,
one of the most famous criteria is positive-partial-transpose (PPT)~\cite{Peres1996a}. It tells that if a quantum state
is separable then its partial transposed state is necessarily positive, which is called PPT state. Moreover, the PPT
criterion is also sufficient for the case 
$\mathrm{dim}(\mathcal{H}_{1})\cdot\mathrm{dim}(\mathcal{H}_{2})\leqslant6$~\cite{Horodecki1996,Woronowikz1976}. A
natural generalization of PPT criterion is the permutation criterion~\cite{Horodecki2006}, namely, whatever permutation
of indices of a state written in product basis leads to a separability criterion. Besides, a subclass of PPT states,
strong PPT states, are considered in Refs.~\cite{chruscinski2008quantum,ha2010entangled,Qian_2018}. The strong PPT
states are proved to 
be separable when $\mathrm{dim}(\myH[1])=2$ and $\mathrm{dim}(\myH[2])\leqslant4$. The low-rank
quantum states are also investigated.  In the the bipartite system, it was proved  that the state $\rho$ is separable if
$\rank{\rho} = \mathrm{dim}(\myH[2])$~~\cite{kraus2000separability,horodecki2000operational}. See
Ref~\cite{fei2003separability,Li_2014}
for generalized results in the multipartite system.

Some numerical methods to test the entanglement are also suggested. Doherty {\sl et al.}~\cite{doherty2004complete}
introduced an iterative algorithm which is based on symmetric extension. That is, if a state $\rho$ is
separable on $\myH[1] \otimes \myH[2]$, then it must have a symmetric extension on $\myH[1]\otimes\myH[2]\otimes\myH[1]$.
It gives a hierarchy condition for separability  and can be applied recursively if the extension is found. During the
process,  each test is  at least as powerful as all the previous ones, where the  first one is equivalent to the  PPT
criterion. If the state is entangled, then this algorithm will terminate after finite steps. But, if it is
separable,  this algorithm will never stop. 
Another numerical algorithm, analytical cutting-plane method, was proposed by Ioannou \etal~\cite{ioannou2004improved},
which is based on the entanglement witness. It finds the entanglement witness by reducing the set of traceless operators
step by step. 
Dahl \etal \cite{dahl2007tensor} also proposed a feasible descent  method to find the closest separable state from the
aspect of
geometric structure. They utilized the  Frank-Wolfe method~\cite{bertsekas1999nonlinear} to solve  this problem.

Due to the close relation between symmetric and entanglement, it is of great interest to study the entanglement of 
the “symmetric” states. 
In this paper, we focus on a subclass of quantum states, completely symmetric  states, i.e.,  invariant under any index permutation. The conception  of completely symmetric is inspired by the “supersymmetric tensor” in the field of tensor decomposition~\cite{kofidis2002best,qi2005eigenvalues,kolda2009tensor}. In fact, any multipartite state can be regarded as a $4$-th order symmetric. Then the state is separable if it admit a special kind of decomposition.
It is thus hopeful to borrow some powerful technologies in the field of tensor analysis to tackle the entanglement problem.
Here, we investigate their properties on the bipartite system. It is expected to find more special structure with respect to the completely symmetric property.
Moreover, it is believed that the completely symmetric state is separable iff it is S-separable, which is a 
conjecture we made in this paper. In fact, we proved that this conjecture holds for
the states whose ranks are at most $4$ or $N+1$. In addition, we propose a numerical method to
test the S-separability, with the similar idea form~\cite{dahl2007tensor}. This algorithm can be used to  find the closest
S-separable states, where the distance between them can be regarded as a measurement of the entanglement. During the algorithm, we need to solve a sub-problem at each iteration, i.e., finding the maximizer
of the following optimization problem
\begin{equation}
  \label{S-sep-v7:eq:2}
  \begin{array}{rl}
    \max\limits_{x} & \bra{x,x} \rho - \rho_k\ket{x,x}\\
    \text{s.t.} &  \norm{x}=1,
  \end{array}
\end{equation}
where $\rho_k$ is the S-separable state at $k$-th step to approximate the closest S-separable state of $\rho$.
 For this sub-problem,  we suggest a  sequential quadratic programming (SQP)
algorithm. Numerical experiments are tested to show the efficiency of this method. It is hopeful that our study would
shed new lights on understanding the structure of the entangled states.

This paper is organized as follows. In the next section, we summarize the necessary preliminaries. In
Sec.~\ref{sec:prop-symm-separ}, we investigate the properties of  S-separable states and study theoretically the problem
 that whether completely symmetric states admit the S-decomposition. In Sec.~\ref{sec:real-super-symmetric}, we propose a feasible
  descent method to check S-separability. In Sec.~\ref{sec:algor-optim-probl}, we
  suggest an SQP algorithm for the subproblem, which is the key step in the feasible descent method. That is to find
  maximizer of the optimization problem~\eqref{S-sep-v7:eq:2}.
  In Sec.~\ref{converge}, we study the convergence behavior of the algorithm proposed in the previous section.  In
  Sec.~\ref{numerical}, the 
  numerical examples are tested to show 
  the effectiveness of the proposed algorithm. Some suggestions for improving the numerical method are discussed in Sec.~\ref{sec:improve}. Finally, we give some concluding remarks in Sec.~\ref{remarks}.
 
  \section{Preliminaries}%
  \label{sec:pre}
  In this paper, we consider the quantum states in  the system $\myH[1]\otimes\myH[2]$, where $\myH[1]$ and
  $\myH[2]$ are two Hilbert spaces.  Mathematically, any quantum states can be represented by the Hermitian positive matrices
  with trace one, which are called the density matrices. For pure state $\rho$, its density matrix  is of rank one,
  \begin{equation}
    \label{goesqp:eq:84}
    \rho = \ket{\phi}\bra{\phi},
  \end{equation}
  where $\ket{\phi}$ is the vector in the  space $\myH[1]\otimes \myH[2]$. In particular, $\rho$ is said to be a pure product 
  state if
  \begin{equation}
    \label{goesqp:eq:85}
    \ket{\phi} = \ket{x}\ket{y},
  \end{equation}
  where $\ket{x}$ and $\ket{y}$ are vectors in the spaces $\myH[1]$ and $\myH[2]$, respectively.

  In this paper, we are interested in the quantum states which are invariant under any index permutation. In this case,
  $\myH[1]=\myH[2]$, denoted by 
  $\myH$.  Here 
   the definition of completely symmetric states is given formally as follows.

  \newcommand{\bH}{\myH\otimes \myH} Suppose $\rho$ is a quantum state in the bipartite system $\bH$ and
  $\mathrm{dim}(H)=N$, then it can be written as
  \begin{equation}
    \label{symmetrictensor:eq:1}
    \rho = \sum_{i,j,k,l=0}^{N-1}\rho_{ijkl}\ket{i,k}\bra{j,l},
  \end{equation}
  where $\ket{0},\ket{1},\ldots,\ket{N-1}$ is a natural basis in the space $\myH$.

\begin{defi}
  Let $\rho$ be a quantum state as in Eq.\eqref{symmetrictensor:eq:1}. Then $\rho$ is said to be completely symmetric if
  \begin{equation}
    \label{symmetrictensor:eq:9}
    \rho_{\pi(ijkl)} = \rho_{ijkl},
  \end{equation}
  for any index permutation $\pi$.
\end{defi}
On the other hand,  $\rho$ can be written as a block matrix:
\begin{equation}
  \label{goesqp:eq:90}
  \rho =
  \begin{pmatrix}
    \rho_{00}&\cdots&\rho_{1,N-1}\\
    \vdots&\ddots&\vdots\\
    \rho_{N-1,0}&\cdots&\rho_{N-1,N-1}
  \end{pmatrix},
\end{equation}
where each block is an $N\times N $ matrix:
\begin{equation}
  \label{goesqp:eq:91}
  \rho_{ij} =
  \begin{pmatrix}
    \rho_{ij00}&\cdots&\rho_{ij0,N-1}\\
    \vdots&\ddots&\vdots\\
    \rho_{ij,N-1,N-1}&\cdots&\rho_{ij,N-1,N-1}
  \end{pmatrix}.
\end{equation}
Hence, we have $\rho_{ijkl} = (\rho_{ij})_{kl},\forall i,j,k,l<N$. The coefficients $\rho_{ijkl}$ thus
correspond to the density matrix $\rho$ in  an intuitive way.  
Note that, in this paper, we index from 0 instead of 1
in order to be consistent with the standard notations in quantum computation. Since any quantum state is a Hermitian
matrix, we have $\rho=\rho^{\dagger}$. By 
Eq. \eqref{symmetrictensor:eq:9}, we have $\rho=\rho^{\intercal}$ for the completely symmetric state, where $\intercal$ is the
partial transpose operator. It follows that $\rho$ is a
real matrix. Therefore,  hereafter in this paper, we
consider in details only the real case, i.e., $\myH=\real^{N}$.

Let us, for the moment, restrict ourselves to supported  states, i.e., the states are supported on
$\bH$~\cite{kraus2000separability}. For completely symmetric states, it is easy to check that their reduced states are
identical, that is,
\begin{equation}
  \label{goesqp:eq:86}
  \rho_1 = \rho_2,
\end{equation}
where
\begin{equation*}
  \label{goesqp:eq:87}
  \begin{split}
    \rho_1 &= (\I\otimes \trace)\rho\\
    &=  \sum_{i,j=0}^{N-1}\left(\sum_{k=0}^{N-1}\rho_{ijkk}\right)\ket{i}\bra{j} \\
    &=
    \begin{pmatrix}
       \trace\rho_{ij00}&\cdots&\trace\rho_{ij0,N-1}\\
    \vdots&\ddots&\vdots\\
   \trace \rho_{ij,N-1,N-1}&\cdots&\trace\rho_{ij,N-1,N-1}
    \end{pmatrix},\\
    \rho_2 &= (\trace\otimes\I)\rho\\
    &= \sum_{k,l=0}^{N-1}\left(\sum_{i=0}^{N-1}\rho_{iikl}\right)\ket{k}\bra{l}\\
    & = \sum_{i=0}^{N-1}\rho_{ii}.
  \end{split}
\end{equation*}

Then $\rho$ is said to be supported on $\bH$ iff $\range{\rho_1} = \myH$. Note that for completely symmetric state,
$\rho^{\intercal_1} = (\intercal\otimes \I) = \rho$,  $\rho$ is then necessarily PPT.\@

A state is said to be separable if it can be written as the convex combination of pure product states as in
Eq. \eqref{goesqp:eq:83}. Compared with this general case, completely symmetric states may have some better properties. Now we
introduce a subclass of completely symmetric states:
\begin{defi}
	Let $\rho=\ket{\phi}\bra{\phi} $ be a pure state in $\bH$. $\rho$ is said to be a completely symmetric pure product state if 
	\begin{equation}
		\phi = \alpha \ket{x,x},\alpha\in\real,x\in\myH.
	\end{equation}
\end{defi}
For the mixed states, we have:
\begin{defi}
   The quantum state $\rho$ in the   system  $\bH$ is said to be symmetrically separable (S-separable) if it can be written as a 
  convex combination of completely symmetric pure product state:
  \begin{equation}
    \label{goesqp:eq:63}
    \rho = \sum_{i=0}^{L-1} p_i \ket{x_i,x_i}\bra{x_i,x_i},\;x_i\in\myH.
  \end{equation}
\end{defi}
Here  the decomposition of Eq. \eqref{goesqp:eq:63} is called
S-decomposition and $L$ is called the length of the decomposition.

Let $\symmf$ denote the space of completely symmetric matrices (not necessarily positive) and let $\symmg$ denote the convex set
generated by the completely symmetric pure product states. Then $\symmg$ is a compact convex subset of $\symmf$. 
The space $\symmf$ is equipped with the standard inner product
\begin{equation}
  \label{goesqp:eq:88}
  \langle\rho,\sigma\rangle = \trace(\rho\sigma ),\rho,\sigma\in\symmf.
\end{equation}
The associated matrix norm is the Frobenius norm
\begin{equation}
  \label{goesqp:eq:89}
  \norm{\rho}_{F} ={\langle\rho,\rho\rangle}^{\frac{1}{2}} =  {(\trace(\rho^2))}^{\frac{1}{2}}.
\end{equation}
For this norm, we have
\begin{equation}
  \label{goesqp:eq:118}
  \norm{\rho}_{F}\leqslant \norm{\rho^{\frac{1}{2}}}_{F} = \sqrt{\trace{\rho}},
\end{equation}
which implies that any   quantum state has norm less than or equal to 1. Besides, the spectral norm
$\norm{\rho}_{2}$ are 
also used for matrices in this paper, which is the largest singular value of $\rho$. In this paper, for vectors, we use $l_2$
norm, denoted by $\norm{x}$.

\section{Properties of completely symmetric states}%
\label{sec:prop-symm-separ}
In this section, consider the completely symmetric state $\rho$ in the system  $\bH$, where $\myH=\real^{N}$. We investigate 
the properties of the completely symmetric states and S-separable states. We prove that any 
rank 1, 2, and 3 completely symmetric states are S-separable. Moreover, any rank $N$ completely symmetric  state supported on the
$\bH$ space is S-separable.  Forward,  stronger 
results will be obtained    for separable completely symmetric states. Therefore, a natural question  arises, during the
analysis, whether the completely symmetric 
states  are S-separable.

To begin with, we show that the completely symmetric and S-separable properties  are invariant  under the invertible local operator
$A\otimes A,A\in \real^{N\times N}$.
\begin{lem}
  Suppose $\rho$ is a quantum state in the system $\bH$ and $A$ is an invertible operator on $\myH$. Then
  \begin{enumerate}
    \item $\rho$ is completely symmetric if and only if  $(A\otimes A)\rho (A\times A)^{\intercal}$ is completely symmetric.
          
    \item $\rho$ is S-separable if and only if  $(A\otimes
          A)\rho (A\otimes A)^{\intercal}$ is S-separable.    
  \end{enumerate}
  \end{lem}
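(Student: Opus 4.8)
The plan is to reduce conjugation by the local operator $A\otimes A$ to a single tensorial action on the four-index array $\rho_{ijkl}$, after which both claims become bookkeeping. The crucial first step is to compute how $(A\otimes A)\rho(A\otimes A)^\intercal$ looks at the level of coefficients. Writing $\rho = \sum_{ijkl}\rho_{ijkl}(\ket i\bra j)\otimes(\ket k\bra l)$ and using that in the real case $(A\otimes A)^\intercal = A^\intercal\otimes A^\intercal$, one obtains $(A\ket i\bra j A^\intercal)\otimes(A\ket k\bra l A^\intercal)$; since $(A\ket i\bra j A^\intercal)_{pq}=A_{pi}A_{qj}$, this yields for $\sigma := (A\otimes A)\rho(A\otimes A)^\intercal$ the formula
\[
  \sigma_{ijkl} = \sum_{p,q,r,s} A_{ip}A_{jq}A_{kr}A_{ls}\,\rho_{pqrs}.
\]
In other words, conjugation by $A\otimes A$ contracts every one of the four indices with the same matrix $A$; identifying the coefficient array with an element of $\myH^{\otimes 4}$, the map $T_A:\rho\mapsto\sigma$ is exactly $A^{\otimes 4}$. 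I expect verifying this formula --- in particular getting the transpose and the ket/bra index roles right, which is precisely where the real-matrix hypothesis enters --- to be the only delicate point; everything else then follows formally.

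For the first assertion, let $P_\pi$ denote the action of a permutation $\pi\in S_4$ on the four indices. Because $A^{\otimes 4}$ treats all four slots identically, $T_A$ commutes with every $P_\pi$. Hence if $\rho$ is completely symmetric, i.e.\ $P_\pi\rho=\rho$ for all $\pi$, then $P_\pi(T_A\rho)=T_A(P_\pi\rho)=T_A\rho$, so $\sigma=T_A\rho$ is completely symmetric as well. Since $A$ is invertible, $T_A$ is invertible with $T_A^{-1}=T_{A^{-1}}$ (concretely $\rho=(A^{-1}\otimes A^{-1})\sigma(A^{-1}\otimes A^{-1})^\intercal$), so applying the same argument to $A^{-1}$ gives the converse and hence the equivalence.

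For the second assertion, suppose $\rho=\sum_i p_i\ket{x_i,x_i}\bra{x_i,x_i}$ is an S-decomposition. Then $(A\otimes A)\ket{x_i,x_i}=\ket{Ax_i,Ax_i}$, and in the real case $\bra{x_i,x_i}(A\otimes A)^\intercal=\bra{Ax_i,Ax_i}$, so
\[
  \sigma = \sum_i p_i\,\ket{Ax_i,Ax_i}\bra{Ax_i,Ax_i}.
\]
Writing $Ax_i=\norm{Ax_i}\,y_i$ with $y_i$ a unit vector (possible since $A$ is invertible and $x_i\neq 0$), each term equals $p_i\norm{Ax_i}^4\ket{y_i,y_i}\bra{y_i,y_i}$; the coefficients $p_i\norm{Ax_i}^4$ are nonnegative, and after rescaling to restore unit trace this is again a convex combination of completely symmetric pure product states. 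Hence $\sigma$ is S-separable, and applying the same reasoning to $A^{-1}$ yields the converse.
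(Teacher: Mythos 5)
Your proposal is correct and follows essentially the same route as the paper: both derive the coefficient transformation $\sigma_{i'j'k'l'}=\sum_{ijkl}\rho_{ijkl}a_{i'i}a_{j'j}a_{k'k}a_{l'l}$ (your commutation of $A^{\otimes 4}$ with index permutations is just the paper's dummy-index relabeling stated abstractly), handle the converse via $A^{-1}\otimes A^{-1}$, and transform an S-decomposition term by term into $\sum_i p_i\ket{Ax_i,Ax_i}\bra{Ax_i,Ax_i}$. Your only addition is the explicit renormalization $Ax_i=\norm{Ax_i}y_i$, a minor point of care the paper glosses over.
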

  \begin{proof}
    It suffice to prove only one side for the two conclusions. Otherwise, we can consider the state which is  applied by the
    invertible local operator $(A^{-1}\otimes A^{-1})$.
    \begin{enumerate}
      \item  Suppose $\rho$ is a completely symmetric states as in Eq.\eqref{symmetrictensor:eq:1}.
  Let $\sigma = (A\otimes A)\rho (A\otimes A)^{\intercal}$, which can  be written as 
  \begin{equation}
    \label{goesqp:eq:66}  
    \sigma  = \sum_{i',j',k',l'=0}^{N-1} \sigma_{i'j'k'l'}\ket{i',k'}\bra{j',l'}.
  \end{equation}
  Let
  \begin{equation}
    \label{goesqp:eq:67}
    A =
    \begin{pmatrix}
      a_{00} & \cdots & a_{0,N-1}\\
      \vdots & \ddots & \vdots\\
      a_{N-1,0} & \cdots & a_{N-1,N-1}
    \end{pmatrix}.
  \end{equation}
  Hence,
  \begin{equation}
    \label{goesqp:eq:68}
    A\ket{i} = \sum_{i'=0}^{N-1}a_{i'i}\ket{i'}.
  \end{equation}
            Therefore,
  \begin{equation}
    \label{goesqp:eq:69}
    \sigma = \sum_{\substack{i,j,k,l=0\\i',j',k',l'=0}}^{N-1}\!\!\!\!\!\!\rho_{ijkl}a_{i'i}a_{j'j}a_{k'k}a_{l'l}\ket{i',k'}\bra{j',l'}.
  \end{equation}
  In order to prove $\sigma$ is  completely symmetric, we only need to prove $\sigma_{i'j'k'l'} = \sigma_{\pi(i'j'k'l')}$
            where $\pi$ is an arbitrary index  permutation.
            
  Note that
  \begin{equation}
    \label{goesqp:eq:70}
    \sigma_{i'j'k'l'}  = \sum_{i,j,k,l}\rho_{ijkl}a_{i'i}a_{j'j}a_{k'k}a_{l'l}.
  \end{equation}
  Then
  \begin{equation}
    \label{goesqp:eq:71}
    \begin{split}
      \sigma_{j'i'k'l'} &= \sum_{i,j,k,l}\rho_{ijkl}a_{j'i}a_{i'j}a_{k'k}a_{l'l}\\
      & = \sum_{i,j,k,l}\rho_{jikl}a_{j'i}a_{i'j}a_{k'k}a_{l'l}.\\   
      \end{split}
    \end{equation}
            Therefore, $   \sigma_{j'i'k'l'} = \sigma_{i'j'k'l'}$. Similarly, 
            \[\sigma_{i'j'k'l'} = \sigma_{\pi(i'j'k'l')}, \]
            for any index permutation $\pi$.
      \item If $\rho$ is S-separable, then it can be written as
            \begin{equation}
              \label{goesqp:eq:20}
              \rho = \sum_i p_i\ket{x_i,x_i}\bra{x_i,x_i}.
            \end{equation}
            Let $\sigma = (A\otimes A)\rho(A\otimes A)^{\intercal}$, then we have
            \begin{equation}
              \label{goesqp:eq:21}
              \sigma = \sum_ip_i\ket{Ax_i,Ax_i}\bra{Ax_i,Ax_i}.
            \end{equation}
            Hence, $\sigma $ is also S-separable.
    \end{enumerate}
  \end{proof}
In this paper, the states we considered are, in fact, real states. According to Proposition 13 in Ref.~\cite{Chen_2013Dim}, we have the following lemma, which enable us to consider the separability in real case.
\begin{lem}
	\label{real} 
	Suppose $\rho$ is separable over the $\complex$ and 
	\begin{equation}
	\label{ginvariant}
		\rho = (\intercal \otimes \I)\rho.
	\end{equation}
then $\rho$ is separable over $\real$, that is, $\rho$ can be written as a sum of real pure product states.
\end{lem}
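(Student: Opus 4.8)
The plan is to start from an explicit finite decomposition of $\rho$ over $\complex$ and to convert it into a real one by splitting each product vector into its real and imaginary parts and then tracking how the partial transpose $\intercal\otimes\I$ acts on the resulting pieces. Since $\rho$ is separable over $\complex$ it admits a finite decomposition (separable states always do), so I would write
\begin{equation}
  \rho = \sum_{m} p_m \ket{a_m,b_m}\bra{a_m,b_m},\qquad p_m>0,
\end{equation}
with finitely many complex product vectors $\ket{a_m}\otimes\ket{b_m}$. Recall that for the states under consideration $\rho$ is a real matrix (this is forced by Hermiticity together with complete symmetry), so $\rho=\mathrm{Re}(\rho)$; this reality is exactly what makes the argument close.

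Next I would decompose each factor into real and imaginary parts. Writing $\ket{a_m}=\ket{u_m}+\mathrm{i}\ket{w_m}$ and $\ket{b_m}=\ket{v_m}+\mathrm{i}\ket{z_m}$ with real vectors $u_m,w_m,v_m,z_m\in\real^{N}$, the real part of $\ket{a_m}\bra{a_m}$ is the symmetric positive matrix $R^a_m:=\ket{u_m}\bra{u_m}+\ket{w_m}\bra{w_m}$ and its imaginary part is the antisymmetric matrix $S^a_m:=\ket{w_m}\bra{u_m}-\ket{u_m}\bra{w_m}$, and similarly for $\ket{b_m}\bra{b_m}$. Using the identity $\mathrm{Re}(X\otimes Y)=\mathrm{Re}(X)\otimes\mathrm{Re}(Y)-\mathrm{Im}(X)\otimes\mathrm{Im}(Y)$ and taking real parts of the decomposition gives
\begin{equation}
  \rho = \sum_m p_m\bigl(R^a_m\otimes R^b_m - S^a_m\otimes S^b_m\bigr).
\end{equation}

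The key step is to apply the hypothesis $\rho=(\intercal\otimes\I)\rho$. The transpose fixes symmetric matrices and negates antisymmetric ones, so $\intercal\otimes\I$ sends $R^a_m\otimes R^b_m$ to itself and $S^a_m\otimes S^b_m$ to $-S^a_m\otimes S^b_m$, whence
\begin{equation}
  (\intercal\otimes\I)\rho = \sum_m p_m\bigl(R^a_m\otimes R^b_m + S^a_m\otimes S^b_m\bigr).
\end{equation}
Equating this with $\rho$ forces $\sum_m p_m\, S^a_m\otimes S^b_m=0$, so $\rho=\sum_m p_m R^a_m\otimes R^b_m$. Expanding $R^a_m=\ket{u_m}\bra{u_m}+\ket{w_m}\bra{w_m}$ and $R^b_m=\ket{v_m}\bra{v_m}+\ket{z_m}\bra{z_m}$ then writes $\rho$, with positive coefficients, as a sum of the four real pure product states $\ket{u_m,v_m}\bra{u_m,v_m}$, $\ket{u_m,z_m}\bra{u_m,z_m}$, $\ket{w_m,v_m}\bra{w_m,v_m}$, $\ket{w_m,z_m}\bra{w_m,z_m}$, which is precisely a real separable decomposition.

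The part requiring the most care is the bookkeeping: one must check that the real/imaginary splitting of $\ket{a_m}\bra{a_m}$ genuinely yields a symmetric part and an antisymmetric part, and that the reality of $\rho$ is available so that $\mathrm{Re}(\rho)$ returns $\rho$ itself rather than merely its symmetric part. Everything else is routine, and the crucial cancellation $\sum_m p_m S^a_m\otimes S^b_m=0$ is exactly what partial-transpose invariance buys. This self-contained route can equally be read as a verification that the present hypotheses match those of Proposition~13 in Ref.~\cite{Chen_2013Dim}, which is the result we ultimately invoke.
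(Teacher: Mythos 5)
Your proof is correct, but it takes a genuinely different route from the paper, for the simple reason that the paper gives no proof at all: Lemma~\ref{real} is stated as an import, justified only by citing Proposition~13 of Ref.~\cite{Chen_2013Dim}. Your argument is a self-contained, elementary replacement: split each projector $\ket{a_m}\bra{a_m}$ into its symmetric real part $R^a_m$ and antisymmetric imaginary part $S^a_m$ (and likewise for $\ket{b_m}\bra{b_m}$), note that $\intercal\otimes\I$ fixes each $R^a_m\otimes R^b_m$ and negates each $S^a_m\otimes S^b_m$, and let G-invariance force the aggregate cancellation $\sum_m p_m\,S^a_m\otimes S^b_m=0$; what survives is manifestly a positive combination of real pure product states. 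An equivalent packaging of the same idea is the averaging trick: replace the decomposition by $\tfrac12(\rho+(\intercal\otimes\I)\rho)$ to make the first factors real, then by $\tfrac12(\rho+\conj{\rho})$ to make the second factors real. What the paper's citation buys is brevity and a pointer into the literature on lengths of separable states; what your argument buys is transparency, and in particular an explicit accounting of exactly which hypotheses are used where.

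That accounting exposes one point worth stating plainly: your proof uses the reality of $\rho$, and it must, but reality does \emph{not} follow from the lemma's stated hypotheses. Separability plus $\rho=(\intercal\otimes\I)\rho$ alone do not force $\rho$ to be real: take $\rho=\tfrac12\I\otimes\ket{\psi}\bra{\psi}$ with $\ket{\psi}=\tfrac{1}{\sqrt2}(1,-\mathrm{i})^{\intercal}$, which is separable and G-invariant but has imaginary entries, hence cannot be a sum of real pure product states. So reality is a necessary implicit hypothesis that the lemma statement omits; it is supplied in this paper by the standing restriction to $\myH=\real^{N}$ (every completely symmetric state is real), not by G-invariance itself. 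Your attribution of reality to ``Hermiticity together with complete symmetry'' is the right reading of the paper's intent, even though complete symmetry is not literally among the lemma's hypotheses; with that caveat made explicit, your proof is complete and correct.
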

The states which satisfy Eq. \eqref{ginvariant} are said to be G-invariant. 
 
We begin with the simplest case where $\rho$ is a pure state.
\begin{lem}%
  \label{rank1}
Any completely symmetric pure states are S-separable. 
\end{lem}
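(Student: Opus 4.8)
The plan is to exploit the pure-state structure together with complete symmetry to force the underlying vector $\ket{\phi}$ into the product form $\alpha\ket{x,x}$. Write $\rho = \ket{\phi}\bra{\phi}$ and expand $\ket{\phi} = \sum_{i,k}\phi_{ik}\ket{i,k}$, so that the coefficients in \eqref{symmetrictensor:eq:1} become $\rho_{ijkl} = \phi_{ik}\overline{\phi_{jl}}$. Since it was already argued that completely symmetric states are real, I may take $\phi_{ik}\in\real$ and collect them into a real matrix $\Phi = (\phi_{ik})_{i,k=0}^{N-1}$, giving $\rho_{ijkl} = \phi_{ik}\phi_{jl}$.

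First I would read off the two transpositions of the index tuple that generate what we need. Applying the swap $i\leftrightarrow k$ to the symmetry relation \eqref{symmetrictensor:eq:9} yields $\phi_{ki}\phi_{jl} = \phi_{ik}\phi_{jl}$ for all indices, hence $\phi_{ki}=\phi_{ik}$ and $\Phi$ is symmetric. Next, applying the swap $i\leftrightarrow j$ yields $\phi_{jk}\phi_{il} = \phi_{ik}\phi_{jl}$ for all $i,j,k,l$, i.e. $\phi_{ik}\phi_{jl} - \phi_{il}\phi_{jk} = 0$. This is precisely the vanishing of every $2\times 2$ minor of $\Phi$ taken from rows $i,j$ and columns $k,l$, so $\rank{\Phi}\leqslant 1$.

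A real symmetric matrix of rank at most one can be written as $\Phi = \lambda\, x x^{\intercal}$ for some $\lambda\in\real$ and some $x\in\myH$ (with $\Phi=0$ being the trivial case $\lambda=0$). Translating back, $\ket{\phi} = \sum_{i,k}\phi_{ik}\ket{i,k} = \lambda\sum_{i,k}x_i x_k\ket{i,k} = \lambda\ket{x,x}$, so $\ket{\phi}$ is a completely symmetric pure product vector in the sense of the preceding definition (with $\alpha=\lambda$ real). Consequently $\rho = \ket{\phi}\bra{\phi} = \lambda^{2}\ket{x,x}\bra{x,x}$ is a single completely symmetric pure product state, which is trivially an S-decomposition of length one; hence $\rho$ is S-separable.

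The only delicate point is the reduction to real coefficients before factoring $\Phi$; without it one must carry the conjugation in $\rho_{ijkl}=\phi_{ik}\overline{\phi_{jl}}$ and argue that the symmetry still forces $\Phi$ to be a phase times a symmetric rank-one matrix. Because the realness of completely symmetric states is established earlier in the excerpt, this obstacle disappears and the minor argument applies directly, so I expect the remainder to be routine.
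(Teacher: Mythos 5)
Your proof is correct, but it takes a genuinely different route from the paper's. The paper begins by writing the completely symmetric pure state as a product vector, $\rho=\ket{x,y}\bra{x,y}$ --- a step that is only implicitly justified (a completely symmetric state is PPT, and a pure PPT state must be a product state) --- then applies a local operator $A\otimes A$ sending $\ket{x}\mapsto\ket{0}$, and finally uses the entrywise symmetry $\rho_{00kl}=\rho_{kl00}=0$ to force $\ket{y}=\pm\ket{0}$. You instead work directly with the coefficient matrix $\Phi=(\phi_{ik})$ of the vector $\ket{\phi}$: the swap $i\leftrightarrow k$ gives $\Phi=\Phi^{\intercal}$, the swap $i\leftrightarrow j$ gives the vanishing of all $2\times 2$ minors, hence $\rank{\Phi}\leqslant 1$ and $\Phi=\lambda xx^{\intercal}$, i.e.\ $\ket{\phi}=\lambda\ket{x,x}$. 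Your approach buys two things: it never assumes the product structure $\ket{x,y}$ (it derives it, closing the gap behind the paper's ``Hence''), and it is a self-contained linear-algebra argument needing no local transformation lemma. The paper's approach, in exchange, stays within the toolbox (local operators $A\otimes A$, block representation) that is reused in the later rank-$N$ and rank-$3$ lemmas. Two small points you should make explicit: cancelling $\phi_{jl}$ in the step $\phi_{ki}\phi_{jl}=\phi_{ik}\phi_{jl}$ requires choosing $j,l$ with $\phi_{jl}\neq 0$, which exists since $\phi\neq 0$; and the reduction to real $\phi_{ik}$ follows because $\rho$ is a real symmetric positive semidefinite matrix of rank one, so its range is spanned by a real vector and $\ket{\phi}$ can be taken real up to an irrelevant global phase.
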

\begin{proof}
  Note that any quantum states are assumed to be positive. Suppose that $\rho$ is a  completely symmetric pure state in the
  $N\otimes N $ system.  Hence, it can be written as
  \begin{equation}
    \label{goesqp:eq:64}
    \rho =  \ket{x,y}\bra{x,y},
  \end{equation}
  where $x,y$ are unit vectors in $\myH$.
  Therefore, there exists a unitary  operator $A$ such that $\ket{Ax} = \ket{0}$.
  We  can thus  assume that
  \begin{equation}
    \label{goesqp:eq:72}
    \begin{split}
      \rho &=  \ket{0}\bra{0}\otimes \ket{y}\bra{y}\\
      & = 
      \begin{pmatrix}
        yy^{\intercal} &\cdots & 0\\
        \vdots & \ddots & \vdots \\
        0& \cdots & 0
      \end{pmatrix}\\
      & =
      \begin{pmatrix}
        \rho_{00}   &\cdots&\rho_{0,N-1}\\
        \vdots     &\ddots&\vdots\\
        \rho_{N-1,0}&\cdots&\rho_{N-1,N-1}
      \end{pmatrix},
      \end{split}
    \end{equation}
    where
    \begin{equation*}
      \rho_{00} = yy^{\intercal},\rho_{ij} = 0,\forall i,j\neq 0.
    \end{equation*}
    By the  symmetry of $\rho$ and Eq.~\eqref{goesqp:eq:91}, we have
    \begin{equation}
      \label{goesqp:eq:76}
      \rho_{00kl}= \rho_{kl00} = 0,\;\forall k,l\neq 0.
    \end{equation}
    Let $y = (y_0,y_1,\cdots,y_{N-1})^{\intercal}$, then
    \begin{equation}
      \label{goesqp:eq:78}
      \rho_{00kk} = y_k^2 = 0,\;\forall k\neq 0.
    \end{equation}
    It follows that $\ket{y} = \pm\ket{0}$, which implies that $\rho=\ket{0,0}\bra{0,0}$. Hence, 
    $\rho$ is  S-separable.
  \end{proof}

  We can also prove that rank $N$ states  supported in the $N\otimes N$ space are S-separable.
  Before proving this result, we need a lemma to prove that $\rho$ can be written as a sum of $N$ real pure product
  states.
  \begin{lem}
    \label{nsum}
    Suppose $\rho$ is supported on $\bH$, where $\myH = \real^N$. If $\rank{\rho}=N$, then $\rho$ is a sum of $N$
    real pure product states. 
  \end{lem}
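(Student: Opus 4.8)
The plan is to combine the low-rank separability criterion cited in the introduction with the reality reduction of Lemma~\ref{real}. First I would record the structural hypotheses on $\rho$. Since $\rho$ is completely symmetric it is real, PPT, and G-invariant (it satisfies Eq.~\eqref{ginvariant}), and its reduced states coincide, $\rho_1=\rho_2$. Being supported on $\bH$ means $\range{\rho_1}=\myH$, so $\rank{\rho_1}=\rank{\rho_2}=N$; together with the hypothesis $\rank{\rho}=N$ this yields
\[
  \rank{\rho}=\rank{\rho_2}=\dim\myH=N.
\]

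Next I would invoke the low-rank separability result of Kraus \etal~\cite{kraus2000separability} and Horodecki \etal~\cite{horodecki2000operational}. Because $\rho$ is PPT with $\rank{\rho}$ equal to the full local rank $N$, these references guarantee that $\rho$ is separable over $\complex$ and, more precisely, that its decomposition into product vectors drawn from $\range{\rho}$ is \emph{unique} and consists of exactly $N$ terms. The point enabling the count is that every pure product vector appearing in any decomposition of $\rho$ necessarily lies in the $N$-dimensional subspace $\range{\rho}$; in the regime where the rank equals the maximal local rank, such a state admits a single decomposition, whose product vectors form a basis of $\range{\rho}$, so their number is precisely $N$.

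Finally I would pass to the real field. Since $\rho$ satisfies Eq.~\eqref{ginvariant}, Lemma~\ref{real} supplies a decomposition of $\rho$ into \emph{real} pure product states, $\rho=\sum_j q_j\ket{c_j,d_j}\bra{c_j,d_j}$ with $c_j,d_j\in\myH$. This is in particular a decomposition into product vectors of $\range{\rho}$, hence by the uniqueness established in the previous step it must coincide with the unique complex decomposition. Consequently that decomposition is already real and has exactly $N$ terms, so $\rho$ is a sum of $N$ real pure product states, as claimed.

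The main obstacle is the counting step: showing that exactly $N$ product states suffice, rather than merely that $\rho$ is separable. I would address this by leaning on the uniqueness part of the cited low-rank theorem, which is valid precisely because $\rank{\rho}$ matches the maximal local rank $N$; this simultaneously fixes the length of the decomposition at $N$ and, after intersecting with the reality reduction of Lemma~\ref{real}, forces the product vectors to be real.
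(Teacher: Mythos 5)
Your proposal is correct, and it rests on the same two pillars as the paper's own proof: the low-rank separability theorem of Refs.~\cite{kraus2000separability,horodecki2000operational} and the reality reduction of Lemma~\ref{real}. The difference is in the finishing move. The paper first proves, inline, that $\range{\rho}$ contains exactly $N$ product vectors up to scale: writing $\rho=\sum_{i=0}^{N-1}\ket{x_i,y_i}\bra{x_i,y_i}$, supported-ness forces both $\{\ket{x_i}\}$ and $\{\ket{y_i}\}$ to be linearly independent, so any combination $\sum_i\lambda_i\ket{x_i,y_i}$ with two or more nonzero coefficients is entangled. It then uses Lemma~\ref{real} to extract a single real product vector from the range, matches it with one of the $\ket{x_i,y_i}$, subtracts it, and recurses on an $(N-1)\otimes(N-1)$ supported state. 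You instead invoke uniqueness of the length-$N$ decomposition once and identify the entire real decomposition supplied by Lemma~\ref{real} with the complex one; this is arguably cleaner, since it avoids the recursion and the attendant (and in the paper unstated) verification that the residual state remains PPT, G-invariant and supported after each subtraction. The one point where you lean on the citations harder than the paper does is the uniqueness itself: what those references give most directly is the \emph{existence} of a length-$N$ decomposition, and the uniqueness you need is then a two-line consequence — the range contains only the $N$ product vectors $\ket{x_i,y_i}$ (the linear-independence argument above), and rank-one projectors onto linearly independent vectors are themselves linearly independent, so the weights are determined as well. Spelling out those two lines, which is exactly what the paper's inline argument provides, would make your proof self-contained rather than dependent on the precise wording of the cited theorem.
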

  \begin{proof}
    By the result in Ref.\cite{horodecki2000operational}, $\rho$ is a sum of $N$ pure product states:
    \begin{equation}
      \label{S-sep-v5.1:eq:14}
      \rho = \sum_{i=0}^{N-1}\ket{x_i,y_i}\bra{x_i,y_i}.
    \end{equation}
  
    Since $\rho$ is supported on $\bH$, then $\ket{x_i} (\ket{y_i}),i=0,1,\ldots,N-1$ are linearly independent, which implies that 
   any vector
    \begin{equation}
      \label{S-sep-v5.1:eq:15}
      v = \sum_{i=0}^{N-1}\lambda_i\ket{x_i,y_i},v\neq\lambda_i \ket{x_i,y_i}
    \end{equation}
    is entangled. That is $\range{\rho}$   only contains $N$  product vectors: $\ket{x_i,y_i},i=0,1,\ldots,\linebreak N-1$.

    On the other hand, by Lemma~\ref{real}, $\rho$ is separable over $\real$. Hence, there exists a real  product vector $\ket{x,y}$ in the
    range of $\rho$. Therefore, $\ket{x,y}=\alpha_i\ket{x_i,y_i},\alpha_i\in\complex$ for some $i$. Apply the same
    discussion recursively on the state
    $ \rho - \ket{x_i,y_i}\bra{x_i,y_i} = \rho - \abs{\alpha_i}^2\ket{x,y}\bra{x,y}$, which is supported on $(N-1)\otimes
    (N-1)$ space, we conclude  that $\rho$ can be written as a sum of $N$ real pure product states.
  \end{proof}
  \begin{lem}
    \label{rankN}
  Suppose $\rho$ is a completely symmetric state supported  in $N\otimes N$ space. If $\rank{\rho}= N$, then it is
  S-separable.
\end{lem}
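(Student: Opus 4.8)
The plan is to bootstrap from Lemma~\ref{nsum} together with the invariance of complete symmetry and of S-separability under invertible local operators $A\otimes A$. By Lemma~\ref{nsum} I may write $\rho=\sum_{m=0}^{N-1}\ket{x_m,y_m}\bra{x_m,y_m}$ as a sum of $N$ real pure product states, and supportedness (as in the proof of Lemma~\ref{nsum}) forces $x_0,\dots,x_{N-1}$ to be linearly independent, hence a basis of $\myH=\real^N$. Taking $A$ to be the inverse of the matrix whose columns are the $x_m$ and passing to $(A\otimes A)\rho(A\otimes A)^{\intercal}$, which is again completely symmetric and is S-separable if and only if $\rho$ is, I may assume without loss of generality that $x_m$ is the $m$-th natural basis vector, so that $(x_m)_i=\delta_{im}$ while the $y_m$ stay nonzero.

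With this normalization the coefficients collapse to $\rho_{ijkl}=\sum_m (x_m)_i(x_m)_j(y_m)_k(y_m)_l=\delta_{ij}(y_i)_k(y_i)_l$. The crux is then to invoke one genuinely nontrivial permutation from the definition of complete symmetry, namely the transposition of the first and third indices, giving $\rho_{ijkl}=\rho_{kjil}$, i.e.\ $\delta_{ij}(y_i)_k(y_i)_l=\delta_{kj}(y_k)_i(y_k)_l$ for all $i,j,k,l$. Fixing $j=i$ and choosing any $k\neq i$ kills the right-hand side and forces $(y_i)_k(y_i)_l=0$ for every $l$; since $y_i\neq0$ this gives $(y_i)_k=0$ for all $k\neq i$, so $y_i$ is a scalar multiple of the $i$-th basis vector. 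Consequently each product vector satisfies $\ket{x_i,y_i}\propto\ket{i,i}$ and $\rho=\sum_i(y_i)_i^2\,\ket{i,i}\bra{i,i}$ with nonnegative weights, an explicit S-decomposition; the invariance lemma then returns an S-decomposition of the original $\rho$.

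The step I expect to be the real obstacle is choosing the right permutation in the second paragraph and recognizing that the merely ``physical'' symmetries are too weak. Bare SWAP-invariance ($\rho_{ijkl}=\rho_{klij}$) only yields $(y_i)_k(y_i)_l=0$ for $k\neq l$, i.e.\ each $y_i$ is proportional to \emph{some} basis vector $e_{\phi(i)}$ with $\phi$ an involutive permutation; this still permits ``cross'' pairs $\ket{i,j}\bra{i,j}+\ket{j,i}\bra{j,i}$ that are not themselves completely symmetric and block an S-decomposition. What removes this residual freedom is the extra transposition $(i\,k)$ drawn from the full symmetric group $S_4$, which pins $\phi$ to the identity and forces $y_i\parallel x_i$. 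The preliminary normalization $x_m=\ket{m}$ is precisely what converts this permutation identity into the transparent scalar condition above; without it one would instead have to untangle the two coupled linearly independent families $\{x_m\}$ and $\{y_m\}$ directly.
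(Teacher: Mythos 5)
Your proposal is correct and follows essentially the same route as the paper's proof: invoke Lemma~\ref{nsum} to write $\rho$ as a sum of $N$ real pure product states, normalize $x_i=\ket{i}$ via an invertible local operator $A\otimes A$ (using the invariance lemma), and then apply a single nontrivial index transposition from the complete symmetry to force each $y_i$ to be proportional to $\ket{i}$. The only cosmetic difference is which transposition you use --- you swap the first and third indices ($\rho_{ijkl}=\rho_{kjil}$) while the paper swaps the second and third ($\rho_{kkjj}=\rho_{kjkj}$) --- and the conclusion is identical.
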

\begin{proof}
  According to Lemma~\ref{nsum}, $\rho$ can be written as a sum of $N$ real pure product states:
  \begin{equation}
    \label{goesqp:eq:73}
    \rho = \sum_{i=0}^{N-1}p_i \ket{x_i}\bra{x_i} \otimes \ket{y_i}\bra{y_{i}},\; p_i>0,x_i,y_i\in\myH.
  \end{equation}
  And $\{x_i\}_{i=0}^{N-1}$ is set of basis vectors in $\myH$. Hence, there exists an invertible operator $A$ such that
  $A\ket{x_{i}} = \ket{i}$.
  Apply the invertible local operator $A\otimes A$ to $\rho$,  we can assume that
  \begin{equation}
    \label{goesqp:eq:74}
    \begin{split}
      \rho &= \sum_{i=0}^{N-1}p_i \ket{i}\bra{i}\otimes \ket{y_i}\bra{y_i}\\
      &=
      \begin{pmatrix}
        p_0y_0y_0^{\intercal}&0 &\cdots&0\\
        0&p_1y_1y_1^{\intercal}&\cdots&0\\
        \vdots&\vdots&\ddots&\vdots\\
        0&0&\cdots&p_{N-1}y_{N-1}y_{N-1}^{\intercal}
      \end{pmatrix},
      \end{split}
    \end{equation}
    Therefore, we have
  \begin{equation}
    \label{goesqp:eq:80}
    \rho_{ijkl} = 0,\text{ if }i\neq j.
  \end{equation}
  Let
  \begin{equation}
    \label{goesqp:eq:81}
    y_i =
    \begin{pmatrix}
      y_{i0}&
      y_{i1}&
      \cdots&
      y_{i,N-1}
    \end{pmatrix}^{\intercal}.
  \end{equation}
  By the  super symmetry of $\rho$ , we have
  \begin{equation}
    \label{goesqp:eq:82}
    \rho_{kkjj} =  p_k y_{kj}y_{kj} = \rho_{kjkj} = 0, \text{ if } k\neq j.
  \end{equation}
  It follows that
  \begin{equation}
    \label{goesqp:eq:75}
    \ket{y_{i}} = \alpha_{i}\ket{i},\,\alpha_i\in\real,\,\forall i = 0,1,\ldots,N-1.
  \end{equation}
  Hence, $\rho = \sum_{i=0}^{N-1}p_i\alpha_i^2\ket{i,i}\bra{i,i}$, which completes our proof.
\end{proof}
Note that for the completely symmetric state $\rho$, the two reduced states  $\rho_1=(\I\otimes \mathrm{Tr})\rho$ and
$\rho_{2}=(\mathrm{Tr}\otimes 
\I)\rho$ are  identical. It follows that
$\rho$ has  identical local ranges. Hence, any rank 2 completely symmetric states must be supported on $2\otimes 2$
subspace. Utilizing the above lemma, we
have the following corollary.
\begin{corollary}%
\label{rank2}
  Any rank 2 completely symmetric states are S-separable.
\end{corollary}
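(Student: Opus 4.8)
The plan is to reduce the statement to Lemma~\ref{rankN} with $N=2$. The preceding discussion already supplies the essential geometric fact: for a completely symmetric state the two reduced states coincide, $\rho_1=\rho_2$, so their ranges — the local supports on the two factors — are one and the same subspace $\mathcal V\subseteq\myH$, and $\rho$ is therefore supported on $\mathcal V\otimes\mathcal V$. When $\rank{\rho}=2$ this common local range is two-dimensional, so $\rho$ lives on a $2\otimes2$ subspace. Granting this, the rest is a short reduction.

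First I would normalise the support. Choose an orthogonal $A\in\real^{N\times N}$ carrying $\mathcal V$ onto $\mathrm{span}(\ket{0},\ket{1})$ and pass to $\sigma=(A\otimes A)\rho(A\otimes A)^{\intercal}$. Since $A$ is orthogonal, this conjugation preserves positivity, the trace, and the rank, so $\sigma$ is again a rank $2$ state; by the first lemma of this section (invariance under $A\otimes A$) it is completely symmetric, and it is S-separable if and only if $\rho$ is. By construction $\sigma$ is supported on $\mathrm{span}(\ket{0},\ket{1})^{\otimes2}$ with reduced state of range exactly $\mathrm{span}(\ket{0},\ket{1})$; that is, $\sigma$ is a completely symmetric state supported on the standard $2\otimes2$ block whose rank $\rank{\sigma}=2$ equals the dimension of the restricted one-factor space.

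Now Lemma~\ref{rankN} applies verbatim with $N=2$: a completely symmetric state supported on $2\otimes2$ whose rank equals $2$ is S-separable. Hence $\sigma$, and therefore $\rho=(A^{-1}\otimes A^{-1})\sigma(A^{-1}\otimes A^{-1})^{\intercal}$ (again by the first lemma), is S-separable, which is the assertion.

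The only nontrivial point is the one imported from the preceding remark — that the common local range has dimension exactly $2$, i.e.\ $\rank{\rho_1}=\rank{\rho}=2$. This is where \emph{full} symmetry, and not merely swap invariance, is needed: a swap-invariant rank $2$ state can have strictly larger local rank, and it is the partial-transpose invariances $\rho=(\intercal\otimes\I)\rho=\rho^{\intercal_2}$ together with the remaining crossed index symmetries of the completely symmetric condition that force $\range{\rho_1}$ down to two dimensions. I would therefore treat this support claim as the main obstacle and verify it carefully; once it is in hand, the corollary follows immediately from Lemma~\ref{rankN} as above.
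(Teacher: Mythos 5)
Your proof is correct and essentially identical to the paper's: the paper likewise notes that the identical reduced states (together with the local-rank bound $\rank{\rho_1}\leqslant\rank{\rho}$, which it also invokes, via PPT-ness, in the rank-3 case) force a rank-2 completely symmetric state to be supported on a $2\otimes 2$ subspace, and then applies Lemma~\ref{rankN} with $N=2$. Your explicit orthogonal normalization of the support and your flagging of the local-rank claim as the key point are just more careful spellings-out of steps the paper leaves implicit.
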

Before considering the rank $3$ states, we introduce a useful lemma.
\begin{lem}
  \label{lem:Ginv}
	Let $\rho$ be a G-invariant quantum state in the $2\otimes 2$ system. Then $\rho$ can be written as 
a sum of 	$\rank{\rho}$ pure product states. 
\end{lem}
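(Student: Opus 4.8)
The plan is to run an induction on $r=\rank{\rho}$, peeling off one real pure product state at a time while staying inside the class of G-invariant positive operators. The starting observation is that G-invariance forces separability: since $(\intercal\otimes\I)\rho=\rho\succeq0$, the state $\rho$ is PPT, and in the $2\otimes2$ system PPT is equivalent to separability~\cite{Horodecki1996,Woronowikz1976}; Lemma~\ref{real} then upgrades this to a real decomposition $\rho=\sum_i p_i\ket{a_i,b_i}\bra{a_i,b_i}$ with $a_i,b_i\in\real^2$. In particular $\range{\rho}=\mathrm{span}\{\ket{a_i,b_i}\}$ contains a real product vector whenever $\rho\neq0$.

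The crucial structural point I would exploit is that a real product state is itself fixed by the partial transposition: $(\intercal\otimes\I)(a a^{\intercal}\otimes b b^{\intercal})=a a^{\intercal}\otimes b b^{\intercal}$, because $aa^{\intercal}$ is symmetric. Hence if $\rho$ is G-invariant and $\ket{a,b}$ is real, then $\rho':=\rho-\lambda\ket{a,b}\bra{a,b}$ is again G-invariant for every $\lambda$, so once we know $\rho'\succeq0$ we automatically get $(\intercal\otimes\I)\rho'=\rho'\succeq0$, i.e.\ $\rho'$ is PPT and therefore again separable and, by Lemma~\ref{real}, real-separable. This is what keeps the induction self-feeding: the reduced operator lands back in exactly the same class, so its range again contains a real product vector.

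For the inductive step I pick a real product vector $\ket{a,b}\in\range{\rho}$ and let $\lambda_{\max}=1/\bra{a,b}\rho^{+}\ket{a,b}$ be the largest $\lambda$ with $\rho-\lambda\ket{a,b}\bra{a,b}\succeq0$, where $\rho^{+}$ is the pseudoinverse. Restricting to $W=\range{\rho}$, where $\rho$ is positive definite, the determinant $\det_{W}(\rho-\lambda\ket{a,b}\bra{a,b})$ is linear in $\lambda$ with a single simple root at $\lambda_{\max}$, and eigenvalue interlacing for the rank-one downward perturbation shows that all but the smallest eigenvalue stay bounded below by the least eigenvalue of $\rho|_{W}$; hence at $\lambda=\lambda_{\max}$ exactly one eigenvalue reaches zero and $\rank{\rho'}=r-1$. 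By the previous paragraph $\rho'$ is a G-invariant PPT positive operator of rank $r-1$, so by the induction hypothesis it is a sum of $r-1$ real pure product states, and adding back $\lambda_{\max}\ket{a,b}\bra{a,b}$ exhibits $\rho$ as a sum of $r$ such states. The base cases $r=0$ and $r=1$ are immediate (a rank-one separable state is a single product state), and $r$ is also a lower bound, since the rank of a sum of rank-one terms is at most the number of terms.

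The step I expect to be the main obstacle is guaranteeing that the subtraction keeps us separable \emph{and} drops the rank by exactly one at the same time; for a generic $2\otimes2$ state these two requirements pull in opposite directions, since one must then control both $\rho'$ and its partial transpose independently. The G-invariance hypothesis is precisely what dissolves this difficulty: it collapses the two positivity constraints into one, so that maximal positive subtraction of a single real product vector both preserves PPT (hence separability) and, via the interlacing argument, lowers the rank by exactly one. Carefully verifying these two facts is the technical heart of the argument.
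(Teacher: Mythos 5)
Your proposal is correct and follows essentially the same route as the paper's proof: G-invariance gives PPT, PPT in $2\otimes 2$ gives separability, Lemma~\ref{real} gives a real product decomposition, and then one peels off real pure product states (which preserve G-invariance under $\intercal\otimes\I$) while dropping the rank by one at each step. The only difference is that you explicitly justify, via the pseudoinverse formula for $\lambda_{\max}$ and eigenvalue interlacing, the rank-reduction step that the paper simply asserts, which is a welcome filling-in of detail rather than a different argument.
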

\begin{proof}
	Note that $\rho$ is PPT if $\rho = (\intercal \otimes \I)\rho$. And it was proved that any PPT state in the $2\otimes 2$ system is separable~\cite{Peres1996a}. By Lemma~\ref{real}, $\rho$ can be written as a sum of real pure product states.
	Then there exists a product vector  $\ket{x,y}\in\bH$ and a real number $\lambda$ such that 
	\begin{equation}
		\rho - \lambda\ket{x,y}\bra{x,y} >0
	\end{equation}
	and 
	\begin{equation}
		\rank{\rho-\lambda\ket{x,y}\bra{x,y}}=\rank{\rho}-1.
	\end{equation}
	Note that $\rho-\lambda\ket{x,y}\bra{x,y}$ is still G-invariant for real vectors $\ket{x},\ket{y}$, hence it remains separable.
	Repeat the step above, $\rho$ thus  can be written as a sum of $\rank{\rho}$ real pure product states. 
      \end{proof}
      With the above lemma, we are ready to prove that any $2\otimes 2$ completely symmetric states are S-separable.
      \begin{lem}
   \label{product22}
  Suppose  $\rho$ is a completely symmetric state  in the $2\otimes 2$ system, then there exists a vector $\ket{x}\in\myH$  such
  that
  \begin{equation}
    \label{goesqp:eq:65}
    \ket{x,x}\in\range{\rho}.
  \end{equation}
\end{lem}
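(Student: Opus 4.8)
We want to show that for any completely symmetric state $\rho$ on the $2\otimes 2$ system, there is a product vector of the symmetric form $\ket{x,x}$ lying in the range of $\rho$. This is the key ingredient that will let us peel off a completely symmetric pure product term and run an induction for the rank-3 case, mirroring how Lemma~\ref{rankN} and Lemma~\ref{lem:Ginv} were used. Let me think about how to prove it.
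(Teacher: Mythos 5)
Your proposal contains no proof: after restating the lemma and correctly noting its downstream role (peeling off a completely symmetric pure product term in the rank-3 argument), it stops at ``Let me think about how to prove it.'' Nothing in what you wrote constitutes a mathematical argument, so there is nothing to verify --- the entire content of the proof is the gap.

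For comparison, the paper's proof is genuinely nontrivial and proceeds by case analysis on $\rank{\rho}$. For $\rank{\rho}\leqslant 2$ it invokes Lemma~\ref{rank1} and Corollary~\ref{rank2} (such states are already S-separable, so any term of the S-decomposition works). For $\rank{\rho}=3$ it uses Lemma~\ref{lem:Ginv} to write $\rho$ as a sum of three real pure product states $\sum_{i=0}^2\ket{x_i,y_i}\bra{x_i,y_i}$, then splits into two subcases: if the $\ket{x_i}$ are affinely dependent, a symmetric product vector $\ket{x_0,x_0}$ is exhibited directly as a combination of $\ket{x_0,y_0}$ and $\ket{x_1,y_1}$; if they are affinely independent, a local invertible operator $A\otimes A$ is applied to normalize $\ket{x_0}=\ket{0}$, $\ket{x_1}=\ket{1}$, $\ket{x_2}=(1,1)^{\intercal}$, after which the complete-symmetry constraints on the entries $\rho_{ijkl}$ yield an explicit algebraic system (Eq.~\eqref{goesqp:eq:101}) whose solution (Eq.~\eqref{goesqp:eq:102}) permits the explicit construction of coefficients $d_0,d_1,d_2$ and a vector $\ket{z}$ with $\ket{z,z}\in\range{\rho}$ (Eqs.~\eqref{goesqp:eq:103}--\eqref{goesqp:eq:104}). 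The case $\rank{\rho}=4$ is immediate since then $\range{\rho}$ is the whole space. If you want to complete your attempt, the affinely-independent rank-3 subcase is where the real work lies: you must exploit the symmetry of the coefficients $\rho_{ijkl}$, not merely the separable decomposition, to force a symmetric product vector into the range.
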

\begin{proof}
  We discuss this question with respect to the rank of $\rho$.

  If $\rank{\rho}\leqslant 2$,  by Lemma~\ref{rank1} and Corollary \ref{rank2}, 
  $\rho$ is S-separable. Then each term  in the S-decomposition of $\rho$ will satisfy Eq.~\eqref{goesqp:eq:65}.

  If $\rank\rho =3$, then by Lemma~\ref{lem:Ginv}, we have
  \begin{equation}
    \label{goesqp:eq:92}
    \rho = \sum_{i=0}^2\ket{x_i}\bra{x_{i}}\otimes\ket{y_i}\bra{y_i},
  \end{equation}
  where $x_i,y_{i}$ are vectors in $\myH$, which may not be unit.

  Consider the case where $\ket{x_i},i=0,1,2$ are affinely dependent. That is, there exist $i_{0}\neq j_{0}$, such that
  \begin{equation}
    \label{goesqp:eq:93}
    \ket{x_{i_0}} =\alpha \ket{x_{j_{0}}},\alpha\in\real,\,\alpha\neq 0.
  \end{equation}
  Assume $i_{0}=0,j_{0}=1$ for simplicity. Since $\rank{\rho}=3$,  $\ket{x_i,y_i},i=0,1,2$ are linearly
  independent. Furthermore, $\ket{y_0},\ket{y_1}$ should be also linearly independent. Hence, there exists  $\beta,\gamma\in\real$ such
  that
  \begin{equation}
    \label{goesqp:eq:94}
    \ket{x_0}=\beta\ket{y_0}+\gamma\ket{y_1}.
  \end{equation}
  Forward,
  \begin{equation}
    \label{goesqp:eq:95}
    \begin{split}
      \ket{x_0,x_0} & =  \beta \ket{x_0,y_0} + \frac{\gamma}{\alpha}\ket{x_1,y_1}\\
      & = \beta \ket{x_0,y_0} + \frac{\gamma}{\alpha}\alpha\ket{x_0,y_1}\\
      & = \ket{x_0}({\beta\ket{y_0}+\gamma\ket{y_1}}).
      \end{split}
    \end{equation}
    This implies that $\range{\rho}$ contains the completely symmetric product vector $\ket{x_0,x_0}$.

    Next, we consider the case where $\ket{x_i} (\ket{y_i}),i=0,1,2$ are affinely independent. Then there exist an invertible operator $A$
    such that
    \begin{equation}
      \label{goesqp:eq:96}
      A\ket{x_i} = \ket{i},i=0,1,
    \end{equation}
    and
    \begin{equation}
      \label{goesqp:eq:97}
      Ax_2 =
      \begin{pmatrix}
        x_{20}\\x_{21}
      \end{pmatrix}.
    \end{equation}
    Since $A\ket{x_i},i=0,1,2$ are also affinely independent, $x_{2j}\neq 0,j=0,1$. Absorb the coefficient $x_{20}$ in
    $\ket{y_2}$, we can assume that $\ket{x_i}=\ket{i},i=0,1$, and $\ket{x_2} = (1,d)^{\intercal},d\neq 0$. Apply
    another local invertible operator $B\otimes B$ in $\rho$, where
    \begin{equation}
      \label{goesqp:eq:99}
      B =
      \begin{pmatrix}
        1&0\\
        0&\frac{1}{d}
      \end{pmatrix}.
    \end{equation}
    We can further assume that
    \begin{equation}
      \label{goesqp:eq:100}
      \rho = \ket{0,y_0}\bra{0,y_0} + \ket{1,y_1}\bra{1,y_1}+\ket{x_2,y_2}\bra{x_2,y_2},
    \end{equation}
    where
    \begin{equation*}
      \ket{x_2 } =
      \begin{pmatrix}
        1&1
      \end{pmatrix}^{\intercal}.
    \end{equation*}
    Suppose $y_0 = (a_0,a_1)^{\intercal}$, $y_1 = (b_0,b_1)^{\intercal}$, and $y_2 = (c_0,c_1)^{\intercal}$. If $a_1=0$, then $\ket{0,0}$ is in the range of
    $\rho$. Hence, we assume that $a_1\neq 0$. Since $\frac{1}{a_1}\rho$ is also completely symmetric, we further assume that $a_1=1$.
    According to Eq.~\eqref{goesqp:eq:91} and by the symmetry of $\rho$, we have
    \begin{equation}
      \label{goesqp:eq:101}
      \left\lbrace
      \begin{split}
        c_0^2 = & a_0a_1+c_0c_1,\\
        c_1^2 = & b_0b_1 + c_0c_1,\\
        c_0c_1 = & 1+c_1^2=b_0^2+c_0^2.
      \end{split}\right.
    \end{equation}
    We can further  assume that $c_1>0$ otherwise we can replace $\ket{y_2}$ with $-\ket{y_2}$.
    Hence, by solving Eq.~\eqref{goesqp:eq:101}, we have
    \begin{equation}
      \label{goesqp:eq:102}
      \begin{split}
        a_0 & = 1 + \frac{1}{c_1^2},\\
        b_0 &= \sqrt{1+\frac{1}{c_1^2}},\\
        b_1 & = \frac{1}{\sqrt{1+\frac{1}{c_1^2}}},\\
        c_0 & = c_1 + \frac{1}{c_1}.
        \end{split}
      \end{equation}
      Here we assume that $c_1\neq 0$. Otherwise $b_0=0$ by Eq.~\eqref{goesqp:eq:101}, it follows that
      $\ket{1,1}\in\range{\rho}$.

      In order to find $\ket{x}\in\myH$  such that $\ket{x,x}\in\range{\rho}$, we consider the following equation:
      \begin{equation}
        \label{goesqp:eq:103}
        d_0
        \begin{pmatrix}
          a_0\\1\\0\\0
        \end{pmatrix}
        +d_1
        \begin{pmatrix}
          0\\0\\b_0\\b_1
        \end{pmatrix}
        + d_2
        \begin{pmatrix}
          c_0\\c_1\\c_0\\c_1
        \end{pmatrix}
        = \ket{z,z}
        =
        \begin{pmatrix}
          z_0^2\\z_0z_1\\z_0z_1\\z_1^2
        \end{pmatrix},
      \end{equation}
      where
      \begin{equation*}
        z =
        \begin{pmatrix}
          z_0\\z_1
        \end{pmatrix}.
      \end{equation*}
      There is a solution for Eq.\eqref{goesqp:eq:103} when
      \begin{equation}
        \label{goesqp:eq:104}
        \begin{split}
          d_0 & = 1,\\
          d_1 & = \frac{c_1 - d_2}{\sqrt{1+c_1^2}},\\
          z_0 & = \frac{\sqrt{(1+c_1^2)(1+c_1d_2)}}{c_1},\\
          z_1 & = c_1 \sqrt{\frac{1+c_1d_2}{1+c_1^2}}.
        \end{split}
      \end{equation}
      It follows that the range of $\rho$ contains a symmetric product vector $\ket{z,z}$.

      It remains to prove the case $\rank{\rho}=4$, which is obvious since its range spans the whole space.
      
      Above all, our proof completes.
    \end{proof}
    Note that for a $2\otimes 2$ completely symmetric state, we can always find a completely symmetric product vector in its range,
    which implies the following corollary.
    \begin{corollary}
      \label{cor:22}
      Any completely symmetric states in $2\otimes 2$ are S-separable.
    \end{corollary}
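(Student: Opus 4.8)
The plan is to argue by induction on $\rank{\rho}$, peeling off one completely symmetric pure product state at each stage while strictly lowering the rank, with Lemma~\ref{product22} supplying the vector to peel off. The base cases are already in hand: a rank~$1$ completely symmetric state is S-separable by Lemma~\ref{rank1}, and a rank~$2$ one by Corollary~\ref{rank2}. Since $\myH\otimes\myH$ is four-dimensional, only the ranks $3$ and $4$ remain, and I would dispatch both uniformly through the inductive step rather than treating them case by case.

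For the inductive step, suppose every completely symmetric state in $2\otimes 2$ of rank strictly less than $r$ is S-separable, and let $\rho$ be completely symmetric with $\rank{\rho}=r\geqslant 3$. By Lemma~\ref{product22} there is a nonzero $\ket{x}\in\myH$ with $\ket{x,x}\in\range{\rho}$. I would then subtract the largest admissible multiple of the corresponding rank-one projector: set
\begin{equation*}
  \lambda^{*}=\max\bigl\{\lambda\geqslant 0 : \rho-\lambda\ket{x,x}\bra{x,x}\geqslant 0\bigr\},\qquad
  \rho'=\rho-\lambda^{*}\ket{x,x}\bra{x,x}.
\end{equation*}
Because $\ket{x,x}$ lies in $\range{\rho}$, the quantity $\bra{x,x}\rho^{+}\ket{x,x}$ (with $\rho^{+}$ the Moore--Penrose pseudoinverse) is strictly positive, so $\lambda^{*}=\bigl(\bra{x,x}\rho^{+}\ket{x,x}\bigr)^{-1}$ is finite and positive; at $\lambda=\lambda^{*}$ the vector $\ket{x,x}$ leaves the range, giving $\rank{\rho'}=r-1$. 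This is precisely the range-subtraction mechanism already used in the proof of Lemma~\ref{lem:Ginv}.

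The remaining observations are routine bookkeeping. The difference $\rho'$ is again completely symmetric, since $\symmf$ is a linear subspace of the Hermitian matrices and $\ket{x,x}\bra{x,x}$ is a completely symmetric pure product state; moreover $\rho'\geqslant 0$ is a state in the $2\otimes 2$ system with $\rank{\rho'}=r-1<r$. The induction hypothesis therefore gives an S-decomposition $\rho'=\sum_i p_i\ket{x_i,x_i}\bra{x_i,x_i}$, and adding back the subtracted term yields
\begin{equation*}
  \rho=\lambda^{*}\ket{x,x}\bra{x,x}+\sum_i p_i\ket{x_i,x_i}\bra{x_i,x_i},
\end{equation*}
which is an S-decomposition of $\rho$. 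Hence $\rho$ is S-separable, closing the induction.

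I expect the only delicate point to be the claim that the maximal subtraction drops the rank by exactly one, i.e.\ that $\lambda^{*}$ is attained and forces $\ket{x,x}\notin\range{\rho'}$. This is the classical subtraction lemma for positive semidefinite matrices: restricting to $\range{\rho}$, where $\rho$ is invertible, one checks that $\rho-\lambda\ket{x,x}\bra{x,x}\geqslant 0$ exactly for $\lambda\leqslant\bigl(\bra{x,x}\rho^{+}\ket{x,x}\bigr)^{-1}$, with a single eigenvalue hitting zero at equality. Since this is the same mechanism invoked earlier in the paper, no genuinely new difficulty arises, and everything else follows from the linearity of $\symmf$ and the inductive hypothesis.
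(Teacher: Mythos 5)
Your proof is correct and takes essentially the same route the paper intends: the corollary is stated as an immediate consequence of Lemma~\ref{product22}, with precisely this peel-off-and-induct argument (the same subtraction mechanism used in Lemma~\ref{lem:Ginv}) left implicit. Your write-up merely makes explicit the standard details the paper takes for granted --- the formula $\lambda^{*}=\bigl(\bra{x,x}\rho^{+}\ket{x,x}\bigr)^{-1}$, the rank dropping by exactly one, and the fact that $\rho-\lambda^{*}\ket{x,x}\bra{x,x}$ stays completely symmetric because $\symmf$ is a linear space.
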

Furthermore, we  consider the case where $\rank{\rho}=3$.
\begin{lem}
  \label{rank3}
  Any rank 3 completely symmetric states are S-separable.
\end{lem}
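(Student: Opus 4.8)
The plan is to reduce to the effective local dimension and then quote the rank-$N$ and $2\otimes 2$ results already proved. Because $\rho$ is completely symmetric its two reduced states coincide, $\rho_1=\rho_2$, so there is a single local range $V:=\range{\rho_1}=\range{\rho_2}$, and in general $\range{\rho}\subseteq V\otimes V$. Setting $r:=\dim V=\rank{\rho_1}$, the state $\rho$ is therefore genuinely supported on the $r\otimes r$ subspace $V\otimes V$ with full local rank. First I would record that $r$ cannot exceed the global rank: since $\rho$ is completely symmetric it satisfies $\rho^{\intercal_1}=\rho$ and is thus PPT, and a PPT state cannot have a local rank larger than its global rank, so $r\leqslant\rank{\rho}=3$. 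Combined with $r\geqslant 2$ (if $r\leqslant 1$ then $\range{\rho}\subseteq V\otimes V$ is at most one-dimensional, contradicting $\rank{\rho}=3$), this pins down $r\in\{2,3\}$.

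Next I would normalise the supporting subspace. Using the first lemma (invariance of both complete symmetry and S-separability under an invertible $A\otimes A$), I would choose an invertible $A$ on $\myH$ carrying $V$ onto $\mathrm{span}\{\ket{0},\dots,\ket{r-1}\}$; then $\sigma:=(A\otimes A)\rho(A\otimes A)^{\intercal}$ is completely symmetric, has the same rank $3$ (congruence by an invertible matrix preserves rank and positivity), and is supported on the $r\otimes r$ corner. It suffices to show $\sigma$ is S-separable, since applying $A^{-1}\otimes A^{-1}$ and invoking the same lemma transfers the conclusion back to $\rho$.

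The proof then splits on $r$. If $r=2$, then $\sigma$ is a completely symmetric state on $2\otimes 2$, and Corollary~\ref{cor:22} yields S-separability at once. If $r=3$, then $\sigma$ is a completely symmetric state supported on $3\otimes 3$ whose rank equals the local dimension, so Lemma~\ref{rankN} (the rank-$N$ supported case with $N=3$) applies directly and produces the S-decomposition. In either case $\sigma$, and hence $\rho$, is S-separable.

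The step I expect to be the crux is the rank inequality $r\leqslant\rank{\rho}$, namely that a PPT state---in particular a completely symmetric one---cannot have local rank exceeding its global rank; a maximally entangled state (global rank $1$, local rank $N$) shows this fails without the PPT hypothesis, so that property must genuinely be used. This is exactly the principle that already underlies the rank-$2$ case. Once it is in place, the rest is a clean dispatch to Corollary~\ref{cor:22} and Lemma~\ref{rankN}, the only bookkeeping being the routine $A\otimes A$ normalisation supplied by the first lemma.
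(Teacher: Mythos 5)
Your proof is correct and follows essentially the same route as the paper's: reduce to the supported case, bound the local rank by the global rank to conclude the support is either $2\otimes 2$ or $3\otimes 3$, and dispatch those cases to Corollary~\ref{cor:22} and Lemma~\ref{rankN} respectively. The only difference is that you explicitly justify the crux inequality $\rank{\rho_1}\leqslant\rank{\rho}$ through the PPT property (which is indeed what is needed, as your maximally entangled counterexample shows), whereas the paper simply asserts it.
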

\begin{proof}
  It suffices to consider the supported states. Note that  $\rank{\rho_1}\leqslant \rank{\rho}$, the supported
  space for rank 3 symmetric states must be either the
  $3\otimes 3$ or $2\otimes 2$ space.

  For the former case,
  $\rho$ is S-separable by Lemma~\ref{rankN}.

  For the latter one,
   $\rho$ is a $2\otimes 2$ completely symmetric state with rank $3$, which is S-separable by Corollary~\ref{cor:22}.
 \end{proof}

    To sum up, we have the following theorem.
    \begin{thm}
      \label{thm1}
      Let $\rho$ be a completely symmetric state supported in the $N\otimes N$ space, then it is S-separable if either
      \begin{enumerate}
        \item $N\leqslant 2$;
        \item $\rank{\rho} \leqslant N$.
      \end{enumerate}
    \end{thm}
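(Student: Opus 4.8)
The plan is to treat this theorem as a synthesis that dispatches the two sufficient conditions separately, since the substantive work has already been carried out in the preceding lemmas. For the first condition $N\leqslant 2$, I would observe that the case $N=1$ is immediate (the state is then a completely symmetric pure state, covered by Lemma~\ref{rank1}), while the case $N=2$ is exactly the content of Corollary~\ref{cor:22}, which asserts that every completely symmetric state in the $2\otimes 2$ system is S-separable. Thus no new argument is required for condition 1; it is absorbed entirely into the $2\otimes 2$ analysis obtained through Lemma~\ref{product22}.

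For the second condition $\rank{\rho}\leqslant N$, the key observation is that the support hypothesis pins the rank down exactly. Since $\rho$ is supported in the $N\otimes N$ space, its reduced state $\rho_1=(\I\otimes\trace)\rho$ satisfies $\range{\rho_1}=\myH$, so $\rank{\rho_1}=N$. Invoking the elementary fact, already used in the proof of Lemma~\ref{rank3}, that a partial-trace reduction cannot increase rank, namely $\rank{\rho_1}\leqslant\rank{\rho}$, I would conclude
\[
  N=\rank{\rho_1}\leqslant\rank{\rho}\leqslant N,
\]
which forces $\rank{\rho}=N$. The state then satisfies precisely the hypotheses of Lemma~\ref{rankN}, giving S-separability directly.

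Consequently the theorem requires essentially no computation of its own; the genuine difficulty lives entirely in the lemmas it cites. If I had to single out the hardest ingredient, it is Lemma~\ref{product22} (and hence Corollary~\ref{cor:22}) underlying condition 1, where one must exhibit an explicit completely symmetric product vector $\ket{x,x}$ in the range of a rank-$3$ state in $2\otimes 2$ by solving the nonlinear system~\eqref{goesqp:eq:101}. For condition 2 the real content is already packaged inside Lemma~\ref{rankN}, whose proof rests on the reality reduction of Lemma~\ref{real} together with the uniqueness of the product decomposition of a rank-$N$ supported state furnished by Lemma~\ref{nsum}. In writing up the final theorem I would therefore only need to make the rank-collapse step above explicit and point to these results.
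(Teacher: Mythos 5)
Your proposal is structurally the same as the paper's: the paper gives no separate proof of Theorem~\ref{thm1} at all (it is introduced with ``To sum up''), and the intended argument is exactly your dispatch---condition 1 is Corollary~\ref{cor:22} (with $N=1$ trivial or via Lemma~\ref{rank1}), and condition 2 reduces, via the support hypothesis, to the rank-$N$ case handled by Lemma~\ref{rankN}. Your rank-collapse observation $N=\rank{\rho_1}\leqslant\rank{\rho}\leqslant N$ is precisely the step the paper leaves implicit, and making it explicit is the right thing to do.

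One correction, though, to how you justify that step. The inequality $\rank{\rho_1}\leqslant\rank{\rho}$ is \emph{not} an elementary fact about partial traces, and it is false for general bipartite states: the maximally entangled pure state $\ket{\psi}=\frac{1}{\sqrt{N}}\sum_i\ket{i,i}$ has $\rank{\rho}=1$ while $\rank{\rho_1}=N$. What saves the argument here is that completely symmetric states are PPT (the paper notes $\rho^{\intercal_1}=\rho\geqslant 0$ in Sec.~\ref{sec:pre}), and for PPT states the inequality $\rank{\rho}\geqslant\max(\rank{\rho_1},\rank{\rho_2})$ is a known theorem---any state violating it is NPT and distillable~\cite{horodecki2000operational}. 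The paper itself invokes the same inequality without comment in the proof of Lemma~\ref{rank3}, so your step is consistent with the source; but as you phrase it (``a partial-trace reduction cannot increase rank'') it rests on a false general principle rather than on the PPT structure of completely symmetric states that actually makes it true.
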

    In addition, we suggest the following  conjecture.
    \begin{conjecture}
      \label{conj2}
     Any separable completely symmetric states are S-separable.
   \end{conjecture}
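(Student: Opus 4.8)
The natural plan is to induct on $\rank{\rho}$, following the peeling strategy already used in Lemmas~\ref{product22}, \ref{nsum} and~\ref{rankN}. After restricting $\rho$ to the subspace on which it is supported (so that $\range{\rho_1}=\myH$, exactly as in the proofs of Lemma~\ref{rank3} and Theorem~\ref{thm1}), the base cases $\rank{\rho}\leqslant N$ and $N\leqslant 2$ are supplied by Theorem~\ref{thm1}. For the inductive step assume every separable completely symmetric state of rank strictly below $r$ is S-separable, and let $\rho$ be separable, completely symmetric, of rank $r$. Complete symmetry makes $\rho$ G-invariant, so Lemma~\ref{real} provides a real product decomposition $\rho=\sum_i\ket{x_i,y_i}\bra{x_i,y_i}$ with $x_i,y_i\in\myH$; this is the object all subsequent manipulations act on.

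The engine of the induction would be a \emph{range lemma}: I would try to show that the range of any separable completely symmetric state of positive rank contains a completely symmetric product vector $\ket{x,x}$, generalizing what Lemma~\ref{product22} establishes for $2\otimes 2$. Granting this, pick $\lambda>0$ maximal with $\rho':=\rho-\lambda\ket{x,x}\bra{x,x}\geqslant 0$. Since $\symmf$ is a linear space and $\ket{x,x}\bra{x,x}\in\symmg\subset\symmf$, the difference $\rho'$ is again completely symmetric, and maximality of $\lambda$ forces $\rank{\rho'}=r-1$. If $\rho'$ could be shown to remain separable, the induction hypothesis would make it S-separable, and restoring $\lambda\ket{x,x}\bra{x,x}$ would exhibit $\rho$ as S-separable, closing the loop---this is precisely the mechanism carried out explicitly in Eqs.~\eqref{goesqp:eq:103}--\eqref{goesqp:eq:104} for $2\otimes 2$ and in Lemma~\ref{rankN} for the supported rank-$N$ case.

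Two points carry the argument, and both resist the low-rank techniques. The first is the range lemma itself: in $2\otimes 2$ it reduces to the small polynomial system~\eqref{goesqp:eq:101}, whereas in general it asks whether the variety of symmetric product vectors $\{\ket{x,x}:x\in\myH\}$, of real dimension $N$, must meet the $r$-dimensional subspace $\range{\rho}$ inside the $N^2$-dimensional space $\bH$. A dimension count is not decisive here---generic intersection would require $r$ close to $N^2-N$---so one is forced to exploit the positivity of $\rho$ together with the equality $\rho_1=\rho_2$ of its reduced states to produce a genuine \emph{real} intersection. The second point, which I expect to be the true obstacle, is preservation of separability. The remainder $\rho'$ is automatically completely symmetric, hence G-invariant and therefore PPT; but complete symmetry and PPT no longer force separability once $N\geqslant 3$. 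This is exactly why the low-dimensional arguments close so cleanly: in $2\otimes 2$ every PPT state is separable (used through Corollary~\ref{cor:22}), so the lower-rank remainder is separable for free. In general one must instead subtract within the separable cone---taking $\lambda$ as the largest multiple keeping $\rho'$ \emph{separable} rather than merely positive, a best-separable-approximation step in the spirit of Section~\ref{sec:real-super-symmetric}---and then prove that at this extremal $\lambda$ the rank genuinely drops while complete symmetry survives. Reconciling ``largest $\lambda$ preserving separability'' with an honest rank decrease is where a full proof of Conjecture~\ref{conj2} must do its work, and is the reason the result is presently secured only for $\rank{\rho}\leqslant 4$ and $\rank{\rho}\leqslant N+1$.
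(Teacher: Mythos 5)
You should note at the outset that the statement you were given is, in the paper, Conjecture~\ref{conj2}: the authors do not prove it, and neither do you. The paper secures only the special cases collected in Theorem~\ref{thm1}, Theorem~\ref{rank4} and the rank-$(N+1)$ theorem, and your text is explicitly a proof \emph{program} whose two load-bearing steps are left open --- you concede this yourself. So the only question for review is whether the gaps you name are the genuine obstructions, and they are; but this means your proposal proves nothing beyond what Theorem~\ref{thm1} already gives.

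Concretely, both pillars of your induction are unproven. (a) The ``range lemma'' --- that the range of every separable completely symmetric state of positive rank contains a completely symmetric product vector $\ket{x,x}$ --- is known only in the $2\otimes 2$ case (Lemma~\ref{product22}), where it is verified by solving the explicit polynomial system~\eqref{goesqp:eq:101}; neither you nor the paper offers any mechanism for $N\geqslant 3$, and the positivity-plus-$\rho_1=\rho_2$ structure you invoke is not known to force a real intersection of $\range{\rho}$ with the Veronese cone. (b) Even granting (a), the subtraction step breaks: taking $\lambda$ maximal with $\rho'=\rho-\lambda\ket{x,x}\bra{x,x}\geqslant 0$ does drop the rank, but $\rho'$ is then only completely symmetric, hence PPT, and PPT does not imply separability once $N\geqslant 3$, so the induction hypothesis cannot be applied; taking instead $\lambda$ maximal preserving \emph{separability} keeps $\rho'$ separable by fiat but gives no rank decrease and no reason the extremal direction can be chosen symmetric. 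It is also worth noting that the paper's own partial results at rank $4$ and $N+1$ do not follow your peeling-from-the-range scheme at all: they start from a real product decomposition $\rho=\sum_i\ket{x_i,y_i}\bra{x_i,y_i}$ (Lemma~\ref{real} together with the results of Refs.~\cite{Chen_2011red,Chen_2013properties}) and use the coefficient equations~\eqref{goesqp:eq:132} and~\eqref{goesqp:eq:136}, plus the reducibility Lemma~\ref{reduci}, to force individual terms to be of the form $\ket{x_i,x_i}$. That decomposition-forcing mechanism is what actually closes those cases, and it too stops working in general because Lemma~\ref{nsum} and the length results behind it are unavailable at higher rank. In short: the strategy you outline is a coherent alternative route, consistent with the $2\otimes 2$ analysis, but both of its key steps are genuinely open, and the conjecture remains a conjecture.
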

   If the conjecture is true, then the separability of completely symmetric states will be equivalent to the
   S-separability, which will simplify the study the entanglement on these states on either theoretical or numerical
   aspects.
   
   It should be  expected we can obtain more results compared with Theorem~\ref{thm1} with the separability condition.
   First, we  introduce a useful  lemma.
   
   \begin{lem}
      \label{reduci}
     Suppose $\rho$ is a completely symmetric state. If $\rho$ is  reducible, then $\rho$ is a sum of two completely symmetric states. Moreover, $\rho$ is S-separable if and only if these two states are  both S-separable.
   \end{lem}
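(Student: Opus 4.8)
The plan is to read reducibility of a completely symmetric $\rho$ as furnishing a nontrivial orthogonal decomposition $\myH = V\oplus W$ into nonzero subspaces, with projections $P$ onto $V$ and $Q=\I-P$ onto $W$, such that $\rho$ carries no correlation between the two blocks; concretely $\rho = (P\otimes P)\rho(P\otimes P) + (Q\otimes Q)\rho(Q\otimes Q)$. Setting $\rho':=(P\otimes P)\rho(P\otimes P)$ and $\rho'':=(Q\otimes Q)\rho(Q\otimes Q)$ gives $\rho=\rho'+\rho''$ at once. To see that each summand is again completely symmetric, I would fix a basis of $\myH$ adapted to $V\oplus W$: in the tensor coordinates $\rho_{ijkl}$, the operator $\rho'$ keeps exactly the entries whose four indices all index $V$ and $\rho''$ those whose indices all index $W$, and the permutation invariance~\eqref{symmetrictensor:eq:9} of $\rho_{ijkl}$ restricts verbatim to each such index block. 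This establishes the first assertion.

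For the equivalence I would treat the two directions separately. The ``if'' direction is immediate, since concatenating S-decompositions of $\rho'$ and $\rho''$ produces one for $\rho=\rho'+\rho''$. The content lies in ``only if''. Starting from $\rho = \sum_i p_i\ket{x_i,x_i}\bra{x_i,x_i}$ with $p_i>0$, I would invoke the standard fact that every $\ket{x_i,x_i}$ lies in $\range{\rho}$. Because $\range{\rho}=\range{\rho'}\oplus\range{\rho''}\subseteq(V\otimes V)\oplus(W\otimes W)$, writing $x_i=v_i+w_i$ with $v_i\in V$, $w_i\in W$ and expanding
\[
  \ket{x_i,x_i}=\ket{v_i,v_i}+\ket{v_i,w_i}+\ket{w_i,v_i}+\ket{w_i,w_i},
\]
the two middle terms lie in $V\otimes W$ and $W\otimes V$, which are orthogonal to $(V\otimes V)\oplus(W\otimes W)$ and must therefore vanish. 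From $\norm{\ket{v_i,w_i}}=\norm{v_i}\,\norm{w_i}=0$ I conclude that, for each $i$, either $v_i=0$ or $w_i=0$; that is, every $x_i$ lies entirely in $V$ or entirely in $W$.

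Grouping the terms by this dichotomy, $\sum_{x_i\in V}p_i\ket{x_i,x_i}\bra{x_i,x_i}$ has range inside $V\otimes V$ while $\sum_{x_i\in W}p_i\ket{x_i,x_i}\bra{x_i,x_i}$ has range inside $W\otimes W$; these partial sums are orthogonally supported and add to $\rho=\rho'+\rho''$, so by uniqueness of the block decomposition they coincide with $\rho'$ and $\rho''$ respectively. Each is then visibly S-separable, which closes the argument. I expect the only genuinely delicate step to be the cross-term vanishing that pins each $x_i$ into a single block; once reducibility is phrased as the absence of inter-block correlations, everything else is routine bookkeeping.
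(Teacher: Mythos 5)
Your handling of the second assertion is sound and, \emph{given} the block decomposition, is a legitimate variant of the paper's argument: the paper pins each $x_i$ into a single block by computing the coefficient $\rho_{klkl}=\sum_i p_i x_{ik}^2x_{il}^2=0$ for $k<r\leqslant l$, whereas you get the same dichotomy from $\ket{x_i,x_i}\in\range{\rho}\subseteq (V\otimes V)\oplus(W\otimes W)$ together with the vanishing of the cross component $\ket{v_i,w_i}$; both routes are correct, and the grouping and identification with the two blocks proceed identically afterwards.

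The genuine gap is in the first assertion: you assume exactly what has to be proved. The paper's definition of reducibility does not furnish your identity $\rho=(P\otimes P)\rho(P\otimes P)+(Q\otimes Q)\rho(Q\otimes Q)$; it only says that $\rho=\sigma+\delta$ for \emph{some} pair of states whose reduced states $\sigma_1,\delta_1$ have ranges forming a direct --- not even orthogonal --- sum. Deriving your identity from this is the actual content of the lemma, and it takes three steps: (i) orthogonalize $\range{\sigma_1}$ and $\range{\delta_1}$ by an invertible local operator $A\otimes A$, which is legitimate only because complete symmetry and S-separability are both invariant under such operators (the paper's first lemma); (ii) use positivity of $\sigma,\delta$ to conclude that $\sigma$ is supported on $V\otimes\myH$ and $\delta$ on $W\otimes\myH$, which yields block structure in the \emph{first} tensor factor only, i.e.\ $\rho_{ijkl}=0$ for $i<r\leqslant j$; (iii) invoke the complete symmetry of $\rho$ to kill the cross terms in the \emph{second} factor as well, via $\sigma_{ijkl}=\rho_{ijkl}=\rho_{ikjl}=0$ for $i,j<r\leqslant k$ (and similarly for $l\geqslant r$), which is what upgrades the support from $V\otimes\myH$ to $V\otimes V$. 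Step (iii) is where the symmetry hypothesis does its work, and your proposal never performs it; reducibility alone does not imply your identity, as the reducible but non-symmetric state $\rho=\ket{0,N-1}\bra{0,N-1}+\ket{N-1,0}\bra{N-1,0}$ shows: with $V=\mathrm{span}\{\ket{0}\}$ one gets $(P\otimes P)\rho(P\otimes P)+(Q\otimes Q)\rho(Q\otimes Q)=0\neq\rho$. Since your proof of the second assertion also relies on $\range{\rho}\subseteq(V\otimes V)\oplus(W\otimes W)$, the gap propagates through the whole proposal: steps (i)--(iii) must be supplied before anything else you wrote becomes available.
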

  
   Before proving this lemma, we first recall the definition of reducible states in Ref.~\cite{Chen_2011red}.
   \begin{defi}
    Suppose $\rho= \sigma+\delta$, where $\rho,\sigma$, and $\delta$ are the quantum states in the bipartite
    system. Denote by
    $\rho_{1}$  the reduced state $(I\otimes \trace)(\rho)$, similarly for $\sigma_1$ and $\delta_1$. Then
    $\rho$ is said to be reducible 
     if $\range{\rho_{1}}$ is a direct sum of $\range{\sigma_{1}}$ and $\range{\delta_{1}}$. Otherwise, $\rho$ is
     said to be irreducible.
   \end{defi}
   Now we are ready to prove Lemma~\ref{reduci} with the definition described above.
   \begin{proof}[Proof for Lemma~\ref{reduci}]
     Suppose $\rho$ is a reducible state in the $N\otimes N$ system.
     Let $\rho$ be the sum of two quantum states $\rho = \sigma+\delta$, where $\range{\sigma_1}$ and
     $\range{\delta_1}$   are linearly
     independent. Forward, there exists an invertible operator $A$, such that $\range{A\sigma_{1}}$ and
     $\range{A\delta_1}$ are orthogonal. Hence, we can assume that $\range{\sigma_1}$ and
     $\range{\delta_1}$ are orthogonal  without loss of generality. Suppose
     $e_{i},i=0,1,2,\ldots,N-1$ is an orthonormal basis which spans the space $\real^{N}$ and
     $\range{\sigma_{1}}=\text{span}\{e_{0},e_1,\dots,e_{r-1}\},\,r<N$. 

    The first part of our proof completes if we can prove that   $\sigma$ and $\delta$  are both completely symmetric. Furthermore, it suffices to 
     prove $\sigma$ is completely symmetric. In fact, this holds when $\sigma_{2} = (\trace\otimes \I)\sigma$ is also supported on
     the subspace $\text{span}\{e_{0},e_1,\dots,e_{r-1}\}$.

     Since  $\range{\sigma_1}$ and
     $\range{\delta_1}$ are orthogonal, then according to the block representation Eq.~\eqref{goesqp:eq:90},
     \begin{equation}
       \label{goesqp:eq:25}
       \rho =
       \begin{pmatrix}
         \sigma &0\\
         0&\delta
       \end{pmatrix},
     \end{equation}
     which implies that
     \begin{equation}
       \label{eq47}
       \rho_{ijkl}=0, \forall i<r,j\geqslant r.
       \end{equation}

     Moreover $\rho_{ijkl} = \sigma_{ijkl},\forall i,j<r$. By the super symmetry of $\rho$, for $i,j<r$ and $k\geqslant r$ we
     have
     \begin{equation}
       \label{goesqp:eq:26}
       \sigma_{ijkl} = \rho_{ijkl} = \rho_{ikjl} =0.
     \end{equation}
     Similarly, $\sigma_{ijkl}=0$ for $i,j<k$  and $ l\geqslant r$. This follows that
     $\sigma_{2}$  is supported on   $\text{span}\{e_{0},e_1,\dots,e_{r-1},r<N\}$, which implies that $\sigma$ is completely symmetric.

     Consider the second part of Lemma~\ref{reduci}.
     It is obvious that $\rho$ is S-separable when $\sigma$ and $\delta$ are S-separable. Next, we prove  the other
     direction.

     Suppose $\rho = \sigma +\delta$ is a reducible and S-separable state.
    After  applying a suitable   invertible local operator $(A\otimes A)$, $\rho$ can be transformed into the form of
     Eq.~\eqref{goesqp:eq:25}. Suppose 
     $\rho=\sum_{i}\ket{x_i,x_i}\bra{x_i,x_i}$. Let
     \begin{equation*}
       x_i = 
       \begin{pmatrix}
         x_{i0}&x_{i1}&\cdots&x_{i,N-1}
       \end{pmatrix}^{\intercal}.
     \end{equation*}
     By calculating the coefficient $\rho_{klkl}$ and combining Eq.~\eqref{eq47}, we hvae
     \begin{equation*}
       \rho_{klkl} =\sum_i x_{ik}^2x_{il}^2 =0,\;\forall k<r,\,l\geqslant r.
     \end{equation*}
      Hence, for any given $i$, we have  $x_{ik}=0,\forall k<r$ or $x_{il}=0,\forall
     l\geqslant r$. It follows that 
     \begin{equation}
       \label{goesqp:eq:79}
       \begin{split}
         \sigma &= \sum_{x_{ik}=0,\forall k<r}\ket{x_i,x_i}\bra{x_i,x_i},\\
         \delta &= \sum_{x_{il}=0,\forall l\geqslant r}\ket{x_i,x_i}\bra{x_i,x_i}.
         \end{split}
       \end{equation}
       Therefore, $\sigma$ and $\delta$ are both S-separable.
     \end{proof}
     With this lemma, for reducible states, the problem could become simpler by considering each irreducible items respectively.
   \begin{thm}
     \label{rank4}
     Conjecture~\ref{conj2} holds for the completely symmetric states whose ranks do not exceed $4$.
   \end{thm}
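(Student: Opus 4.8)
The plan is to clear the low ranks with the lemmas already in hand and then corner the one configuration that carries real content. First I would observe that $\rank{\rho}\in\{1,2,3\}$ is settled unconditionally by Lemma~\ref{rank1}, Corollary~\ref{rank2} and Lemma~\ref{rank3}, so I may assume $\rank{\rho}=4$. Complete symmetry gives $\rho_1=\rho_2$, hence $\range{\rho}\subseteq\range{\rho_1}\otimes\range{\rho_1}$; restricting to $\range{\rho_1}$ I may regard $\rho$ as supported on $r\otimes r$ with $r=\rank{\rho_1}$. As already used for the rank-$3$ case, $\rank{\rho_1}\leqslant\rank{\rho}$, so $r\leqslant 4$, while $4=\rank{\rho}\leqslant r^{2}$ forces $r\geqslant 2$; thus $r\in\{2,3,4\}$.

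Next comes a three-way split. If $r=2$ then $\rho$ lives on a $2\otimes 2$ space and Corollary~\ref{cor:22} applies; if $r=4$ then $\rank{\rho}=4=r$ with $\rho$ supported on $4\otimes 4$, so Lemma~\ref{rankN} applies (neither case even uses separability). The decisive case is $r=3$. There I would first dispose of reducible $\rho$ via Lemma~\ref{reduci}: reducibility produces $\rho=\sigma+\delta$ with $\sigma,\delta$ completely symmetric and local ranges orthogonal of dimensions summing to $3$, i.e.\ $1$ and $2$; then $\sigma$ is supported on $1\otimes 1$ (rank one, S-separable by Lemma~\ref{rank1}) and $\delta$ on $2\otimes 2$ (S-separable by Corollary~\ref{cor:22}), and Lemma~\ref{reduci} returns S-separability of $\rho$.

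This leaves an irreducible, rank-$4$, separable, completely symmetric $\rho$ on $3\otimes 3$, which is the heart of the matter. My key observation would be that complete symmetry confines $\range{\rho}$ to the symmetric subspace: since $\rho_{ijkl}$ depends only on the multiset $\{i,j,k,l\}$, for every $i\neq k$ one computes $\rho(\ket{i,k}-\ket{k,i})=0$, so $\ket{i,k}-\ket{k,i}\in\kernel{\rho}$ and the whole antisymmetric subspace is annihilated. Now $\rho$ is G-invariant, so Lemma~\ref{real} yields a real decomposition $\rho=\sum_j\ket{a_j,b_j}\bra{a_j,b_j}$ with $a_j,b_j\in\myH$. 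Each $\ket{a_j,b_j}$ lies in $\range{\rho}$, hence in the symmetric subspace, which forces $a_j$ and $b_j$ to be parallel; after rescaling, every term is a nonnegative multiple of $\ket{a_j,a_j}\bra{a_j,a_j}$ and $\rho$ is S-separable. If one prefers to stay closer to Lemma~\ref{product22}, the same symmetric-subspace fact locates a real symmetric product vector $\ket{z,z}\in\range{\rho}$, and subtracting the largest $\lambda$ with $\rho-\lambda\ket{z,z}\bra{z,z}\succeq 0$ leaves a completely symmetric state of rank $3$, which Lemma~\ref{rank3} finishes.

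The hard part will be pinning down the two ingredients of this last step rigorously: the containment $\range{\rho}\subseteq\mathrm{Sym}^{2}(\myH)$, which rests on the multiset invariance of $\rho_{ijkl}$, and the passage from complex to real separability through Lemma~\ref{real}, after which the reality of the symmetric product vectors is automatic. I would stress that this route is indifferent to the rank, so it in fact points toward Conjecture~\ref{conj2} in full; the rank-$\leqslant 4$ bound is what keeps the alternative normal-form computation (a three-dimensional linear range meeting the surface of rank-one symmetric tensors) elementary, should the global argument need shoring up.
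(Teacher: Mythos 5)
Your proof is correct, and its decisive step takes a genuinely different route from the paper's. For rank $\leqslant 3$ and for the $2\otimes 2$, $4\otimes 4$, and reducible $3\otimes 3$ configurations you run essentially the same reductions as the paper (Theorem~\ref{thm1}, Corollary~\ref{cor:22}, Lemma~\ref{rankN}, Lemma~\ref{reduci}). But in the one hard case the paper computes: it uses Lemma~\ref{real} to write $\rho$ as four real pure product states, normalizes the affinely independent first factors to $\ket{0},\ket{1},\ket{2},(1,1,1)^{\intercal}$ by a local $A\otimes A$, and lets the complete-symmetry relations (Eq.~\eqref{goesqp:eq:132}) force $\ket{y_3}\propto\ket{x_3}$, after which subtracting that term and invoking Lemma~\ref{rank3} finishes. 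You replace this normal-form computation with a structural fact: complete symmetry annihilates the antisymmetric subspace, since $\rho(\ket{j,l}-\ket{l,j})=\sum_{i,k}(\rho_{ijkl}-\rho_{ilkj})\ket{i,k}=0$, so $\range{\rho}\subseteq\mathrm{Sym}^2(\myH)$; every vector of a positive-semidefinite decomposition lies in the range, and a product vector in $\mathrm{Sym}^2(\myH)$ has parallel factors, so the real decomposition supplied by Lemma~\ref{real} is automatically an S-decomposition. I verified each of these three ingredients and they hold; your argument never uses the rank, so, as you observe, it proves Conjecture~\ref{conj2} outright rather than only Theorem~\ref{rank4} (it also shows the rank-$4$ case inside Lemma~\ref{product22} is vacuous, since a $2\otimes 2$ completely symmetric state has rank at most $\dim\mathrm{Sym}^2(\real^2)=3$). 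The one hinge is Lemma~\ref{real}: as literally stated it tacitly needs $\rho$ real (e.g.\ $\ket{a,b}\bra{a,b}$ with $a$ real and $b$ genuinely complex is separable and G-invariant but is not a sum of real product states), but completely symmetric states are real, and the paper's own proof invokes the lemma in exactly the same regime, so your proof stands or falls with the paper's on this point. What the paper's computation buys is an explicit, self-contained verification in the single remaining configuration; what your route buys is rank-independence, brevity, and in effect a resolution of the conjecture, which deserves to be stated as such rather than as a proof of the rank-$\leqslant 4$ case.
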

   \begin{proof}
     Suppose $\rho$ is a completely symmetric state. If $\rank{\rho}\leqslant 3$, then $\rho$ is S-separable by
     Theorem~\ref{thm1}.
     Hence, we assume that that $\rank{\rho}=4$. Furthermore, if $\rho$ is supported in the $2\otimes 2$ or $4\otimes 4$ space, then 
     $\rho$ is S-separable by Corollary~\ref{cor:22} and Lemma~\ref{rankN}. We  assume further that $\rho$ is supported on the
     $3\otimes 3$ space.

     By Lemma~\ref{real}, $\rho$ is separable over real. We claim that  $\rho$ can be written as a sum of 4 real pure product
     states:
     \begin{equation}
       \label{goesqp:eq:62}
       \rho = \sum_{i=0}^{3}\ket{x_i,y_{i}}\bra{x_i,y_{i}}.
     \end{equation}

     Since $\rho$ is separable over real, there exists a real product vector $\ket{x_1,y_1}\in\bH$ such that $
       \rho - \ket{x_1,y_1}\bra{x_1,y_1}$ is positive, G-invariant,  and has rank $3$, which remains separable over
       $\real$. By reducing  the rank, Eq.~\eqref{goesqp:eq:62} holds.

     If $\ket{x_i},i=0,1,2,3$ are not affinely independent, then $\rho$ is reducible. By Lemma~\ref{reduci} and Theorem~\ref{thm1}, it is
     S-separable. Here we assume that $\ket{x_{i}},i=0,1,2,3$ are affinely independent. Then there exists an invertible
    local  operator such that
     \begin{equation}
       \label{goesqp:eq:110}
       (A\otimes A) \rho (A\otimes A)^{\intercal} = \sum_{i=0}^{2}\ket{i}\bra{i}\otimes \ket{\tilde y_i}\bra{\tilde
         y_{i}}+ \ket{\tilde x_{3}}\bra{\tilde x_{3}}\otimes \ket{\tilde y_{3}}\bra{\tilde y_{3}},
     \end{equation}
     where
     \begin{equation}
       \label{goesqp:eq:119}
      \tilde x_3 = (1,1,1)^{\intercal}.
     \end{equation}
     Let
     \begin{equation}
       \label{goesqp:eq:126}
       \begin{split}
        \tilde   y_0 & =
       \begin{pmatrix}
         a_0\\a_1\\a_2\\
       \end{pmatrix},
     \tilde  y_{1}  =
       \begin{pmatrix}
         b_0\\b_1\\b_2
       \end{pmatrix},
      \tilde y_2  =
       \begin{pmatrix}
         c_0\\c_1\\c_2
       \end{pmatrix},
      \tilde y_3  =
       \begin{pmatrix}
         d_0\\d_1\\d_2
       \end{pmatrix}.
       \end{split}
     \end{equation}
   Let $\sigma= (A\otimes A)\rho(A\otimes A)^{\intercal} $.  By the super symmetry of $\sigma$, we have
     \begin{equation}
       \label{goesqp:eq:131}
       \begin{split}
       \sigma_{0102}& = \sigma_{0201}=\sigma_{1200},\\
       \sigma_{0211}& = \sigma_{1201},\\
       \sigma_{0122}& = \sigma_{1202}.
       \end{split}
     \end{equation}
     Specifically,
     \begin{equation}
       \label{goesqp:eq:132}
       \begin{split}
         d_0d_2& = d_0d_{1} = d_0^2,\\
         d_1^2& = d_0d_1,\\
         d_3^2& = d_0d_2.
       \end{split}
     \end{equation}
     If $d_0=0$, by Eq.~\eqref{goesqp:eq:132}, we have $d_0=d_1=d_2$, which contradicts with our assumption that $\rho$ has
     rank 4. If $d_0\neq 0$, then by Eq.~\eqref{goesqp:eq:132}, we have $d_0=d_1=d_2$. Hence, $\ket{x_3,y_3}\bra{x_3,y_3}$ is
     a completely symmetric state. Forward, $\rho - \ket{x_3,y_3}\bra{x_3,y_3}$ is a completely symmetric state of rank 3, which is
     S-separable by Lemma~\ref{rank3}. Therefor, $\rho$ is S-separable.
   \end{proof}
   Then a natural question arises whether $\rho$ is separable for the rank 4 completely symmetric state. By the \textrm{Chen}'s
   result in Ref.\cite{Chen_2011red}, $\rho$ is separable if and only if $\rho$ contains a product vector in
   its range. We can answer this question if we can check whether the range of a completely symmetric state contains a
   completely symmetric product vector.

   Suggested by Lemma~\ref{rank4}, we now investigate the S-separability of rank $ N+1$ completely symmetric states.
   \begin{thm}
     Suppose $\rho$ is a rank $N+1$ completely symmetric state supported in the $N \otimes N (N>3)$ space. If $\rho$ is
     irreducible,  then $\rho$ is S-separable. Otherwise $\rho = \sigma+\ket{x,x}\bra{x,x}$, where $\sigma$ is a rank
     $N$ state supported
     in the $(N-1)\otimes (N-1)$ space.
   \end{thm}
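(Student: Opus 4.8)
The plan is to treat the two assertions separately, dealing with the reducible case first since it is purely structural.

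\emph{The reducible case.} Iterating Lemma~\ref{reduci}, I would write $\rho=\sum_k\tau_k$ as a sum of irreducible completely symmetric states whose local ranges $\range{(\tau_k)_1}$ are mutually orthogonal. Setting $m_k:=\dim\range{(\tau_k)_1}$, the resulting block-diagonal form gives $\sum_k m_k=N$ (because $\rho$ is supported) and $\sum_k\rank{\tau_k}=\rank{\rho}=N+1$. Each $\tau_k$ is supported on $m_k\otimes m_k$, so $\rank{\tau_k}\ge m_k$ and the total excess $\sum_k(\rank{\tau_k}-m_k)=1$. A summand with $\rank{\tau_k}=m_k\ge 2$ would, after a suitable local transform, equal $\sum_i p_i\ket{i,i}\bra{i,i}$ by Lemma~\ref{rankN} and hence be reducible, contradicting its irreducibility; thus every excess-zero summand is a single completely symmetric pure product state $\ket{x,x}\bra{x,x}$. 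Since $\rho$ is reducible there are at least two summands while the excess is only $1$, so at least one summand is such a rank-one term. Peeling it off, $\sigma:=\rho-\ket{x,x}\bra{x,x}$ is completely symmetric by Lemma~\ref{reduci}, has rank $N$, and is supported on the orthogonal complement of $\ket{x}$, i.e. on $(N-1)\otimes(N-1)$ --- exactly the claimed decomposition.

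\emph{The irreducible case.} Here the essential observation is that complete symmetry pushes $\range{\rho}$ into the symmetric subspace: since $\rho_{ijkl}=\rho_{kjil}$ (the transposition of the first and third indices is a permutation), the rows of $\rho$ indexed by $(m,n)$ and by $(n,m)$ coincide, i.e. $S\rho=\rho$ for the swap operator $S$, so $\range{\rho}\subseteq\mathrm{Sym}^2(\real^N)$. Consequently any real product vector contained in $\range{\rho}$ is automatically of the symmetric form $\ket{x,x}$. By Lemma~\ref{real} we may search over $\real$, so I would proceed in three steps: (i) exhibit a real product vector in $\range{\rho}$; (ii) subtract it maximally, setting $\rho'=\rho-\lambda\ket{x,x}\bra{x,x}\ge 0$ with $\lambda$ as large as possible, so that $\rho'$ is again completely symmetric with $\rank{\rho'}\le N$; (iii) conclude. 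For (iii): if the reduced state $\rho'_1$ still has rank $N$ then $\rho'$ is a rank-$N$ state supported on $N\otimes N$ and is S-separable by Theorem~\ref{thm1}; if instead the local range drops to dimension $N-1$, then $\rho'$ is a rank-$\le N$ state on $(N-1)\otimes(N-1)$, settled by Theorem~\ref{thm1} when its rank is $N-1$ and by induction on $N$ in the borderline rank-$N$ case (where the reducible alternative of the theorem peels off a further $\ket{x,x}\bra{x,x}$). Restoring $\lambda\ket{x,x}\bra{x,x}$ then yields an S-decomposition of $\rho$.

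\emph{The main obstacle.} The heart of the proof is step (i): showing that the range of an irreducible completely symmetric PPT state of rank $N+1$ contains a product vector once $N>3$. Equivalently, the $(N+1)$-dimensional space $V=\range{\rho}\subseteq\mathrm{Sym}^2(\real^N)$ must meet the Veronese cone of rank-one symmetric matrices $xx^{\intercal}$. A naive count is discouraging: intersecting the $N$-dimensional $\mathbb{P}(V)$ with the $(N-1)$-dimensional Veronese inside $\mathbb{P}(\mathrm{Sym}^2)$ is badly over-determined for $N>3$, and the $N=3$ analogue of the statement is in fact false (rank-$4$ PPT-entangled two-qutrit states exist). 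Hence the proof must exploit positivity and irreducibility rather than a generic dimension count, and the hypothesis $N>3$ must enter in an essential way. I would attack (i) by combining Chen's criterion~\cite{Chen_2011red} (separability is equivalent to the presence of a product vector in the range for states of this low rank) with a rigidity analysis of the product vectors in $\range{\rho}$ and $\kernel{\rho}$ forced by the self-transpose identity $\rho^{\intercal_1}=\rho$, the aim being to show that such an irreducible state cannot be a PPT edge state when $N>3$, so that a product vector is always available to subtract. Turning this incompatibility into a quantitative statement --- a careful bookkeeping of how many symmetric product vectors the symmetry constraints permit in $\range{\rho}$, and why that count is forced to be positive precisely when $N>3$ --- is where I expect the real difficulty to lie.
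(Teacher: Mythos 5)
Your treatment of the \emph{reducible} case is correct, and in fact more careful than the paper's own one-line argument: iterating Lemma~\ref{reduci} to get irreducible completely symmetric summands, counting the rank excess $\sum_k(\rank{\tau_k}-m_k)=1$, and observing that an excess-zero summand with $m_k\geqslant 2$ would be reducible by Lemma~\ref{rankN}, does yield $\rho=\sigma+\ket{x,x}\bra{x,x}$ with $\sigma$ of rank $N$ supported on $(N-1)\otimes(N-1)$. The \emph{irreducible} case, however, contains a genuine gap, and you have located it yourself: your step (i) --- the existence of a real product vector in $\range{\rho}$ --- is never proved; you only sketch a hoped-for ``rigidity analysis'' and state that this is where the real difficulty lies. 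The paper does not prove this from first principles either: it invokes Proposition~39 of Ref.~\cite{Chen_2013properties}, which (for a PPT state of rank $N+1$ supported on $N\otimes N$ with $N>3$; recall a completely symmetric state is automatically PPT since $\rho=(\intercal\otimes\I)\rho$) delivers far more than your step (i): a decomposition of $\rho$ into $N+1$ pure product states all at once. From there the paper realifies the decomposition (Lemma~\ref{real} together with Lemma~\ref{nsum}, applied recursively), uses irreducibility to get affine independence of the $\ket{x_i}$, normalizes by a local operator so that $\ket{x_i}=\ket{i}$ for $i<N$ and $\ket{x_N}$ is the all-ones vector, and then a short index computation with the complete-symmetry relations forces $\ket{y_N}\propto\ket{x_N}$; subtracting this completely symmetric term leaves a block-diagonal, hence reducible, state, and Lemma~\ref{reduci} makes every remaining term of the form $\ket{x_i,x_i}\bra{x_i,x_i}$. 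Without Proposition~39 or an equivalent substitute, your outline cannot get started, so the proposal is not a proof.

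There is also a secondary defect: even granting step (i), your step (iii) induction does not close. After maximal subtraction, $\rho'$ may be a rank-$N$ state supported on $(N-1)\otimes(N-1)$, i.e.\ a rank-$(M+1)$ state on $M\otimes M$ with $M=N-1$; descending, this recursion bottoms out at a rank-$4$ completely symmetric state on $3\otimes 3$ --- exactly the case excluded by the hypothesis $N>3$, not covered by Theorem~\ref{thm1}, and settled by Theorem~\ref{rank4} only under an additional separability assumption that you do not have. The paper's irreducible-case argument avoids any induction on the dimension precisely because Proposition~39 produces all $N+1$ product terms simultaneously; the bottoming-out problem survives only in the reducible branch, where the paper concedes it in the corollary following the theorem.
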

   \begin{proof}
     First, we discuss the case where $\rho$ is irreducible.
     By Proposition 39 in Ref.~\cite{Chen_2013properties}, $\rho$ is a sum of $N+1$ pure product states. By Lemma~\ref{real},
     there exists a product vector  $\ket{x_1,y_1}\in\real^{N}\otimes\real^{N}$ such that $\sigma=\rho - \ket{x_1,y_1}\bra{x_1,y_1}$ is positive
     and has rank $N$. If $\sigma$ is supported in the $N\otimes N$ space, then by Lemma~\ref{nsum}, $\sigma$ is a sum of $N$
     real pure product states. Otherwise, $\sigma$ must be supported in the $(N-1)\otimes(N-1)$ space. Applying Lemma~\ref{real}
     and Lemma~\ref{nsum} recursively, we conclude that $\rho$ can be written as a sum of $N+1$ real pure product states.
     Suppose
     \begin{equation}
       \label{goesqp:eq:133}
       \rho = \sum_{i=0}^{N}\ket{x_{i}}\bra{x_{i}}\otimes \ket{y_{i}}\bra{y_i},\ket{x_i},\ket{y_i}\in\real^N.
     \end{equation}
     Since $\rho$ is irreducible, $\{\ket{x_i},i=0,1,\ldots,N\}$ is affinely independent. Forward, there exists an
     invertible operator $A$ such that $A\ket{x_i} = \ket{e_i},i=0,1,\ldots,N$, where
     $\ket{e_i}=\ket{i},i=0,1,\ldots,N-1$ and $\ket{e_{N}}$ is the vector whose entries are all $1$'s. Hence, we assume
     that $\ket{x_i}=\ket{e_{i}}$ and let
     \begin{equation}
       \label{goesqp:eq:134}
       \ket{y_i} =
       \begin{pmatrix}
         y_{i0}&y_{i1}&\cdots&y_{i,N-1}
       \end{pmatrix}^{\intercal},i=0,1,,\ldots,N.
     \end{equation}
     By the super symmetry of $\rho$, we have
     \begin{equation}
       \label{goesqp:eq:135}
       \rho_{0i0j}  = \rho_{ij00},i\neq j,i,j> 0.
     \end{equation}
     i.e.,
     \begin{equation}
       \label{goesqp:eq:136}
       y_{N0} y_{Nj}= y_{N0}^2,\forall j> 0.
     \end{equation}
     If $y_{N0}= 0$, note that $\rho_{0jii}=\rho_{ij0i}$, we have $y_{Ni}^{2}=y_{N0}y_{Ni}$ for all $i\neq j,i,j>
     0$. Therefore, $y_{Ni}=0$ for all $i$, which, however,  contradicts with the assumption that $\rho$ has rank $N+1$. If
     $y_{N0}\neq 0$, by Eq.~\eqref{goesqp:eq:136}, $y_{Ni}=y_{N0}$ for all i>0. Hence, $\ket{e_N,y_N}\bra{e_N,y_{N}}$ is completely symmetric. And $\sum_{i=0}^{N}\ket{e_i,y_i}\bra{e_i,y_i}$ is  reducible, by Lemma~\ref{reduci},
     $\ket{e_i,y_i}\bra{e_i,y_i}$ is completely symmetric. Hence, $\rho$ is S-separable and is a sum of $N+1$ supper symmetric
     pure product states.

     If $\rho$ is reducible, then $\rho = \sigma + \ket{z,y}\bra{z,y}$, where $\range{(\I\otimes \trace)\sigma}$ and
     $\{\ket z\}$
     are linearly independent. Hence by Lemma~\ref{reduci}, $ \ket{z,y}\bra{z,y}$ is completely symmetric. Moreover, $\sigma $ is
      supported in the $(N-1)\otimes 
     (N-1)$ space with rank $N$.
   \end{proof}
   Note that in the  above proof, for $\sigma$, the  theorem can be applied recursively when $(N-1)>3$. Therefore, we
   have the following corollary.
   \begin{corollary}
     Let $\rho$ be a rank $N+1$ completely symmetric state which is supported on $N\otimes N (N>3)$ space. Then $\rho$  is either
     a sum of $N+1$ supper symmetric pure product states or $\rho = \sigma +
     \sum_{i=0}^{N-4}\ket{x_i,x_i}\bra{x_i,x_i}$, where $\sigma $  is a rank $4$ completely symmetric state
     supported on  the $3\otimes 3$ space and $\range{\sigma}\bot \ket{x_i},i=0,1,\ldots,N-4$.
   \end{corollary}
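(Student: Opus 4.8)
The plan is to apply the preceding theorem recursively, peeling off one completely symmetric pure product state at every reducible stage while lowering both the rank and the support dimension by one. First I would invoke that theorem on $\rho$ itself. If $\rho$ is irreducible, the theorem already exhibits $\rho$ as a sum of $N+1$ symmetric pure product states, which is the first alternative and we are done. Otherwise it furnishes a splitting $\rho = \sigma^{(1)} + \ket{x_0,x_0}\bra{x_0,x_0}$ with $\sigma^{(1)}$ a rank $N$ completely symmetric state supported on the $(N-1)\otimes(N-1)$ space; by Lemma~\ref{reduci} the peeled term is automatically completely symmetric and, being a symmetric pure product state, S-separable, and $\rho$ is S-separable if and only if $\sigma^{(1)}$ is. Thus the analysis reduces to $\sigma^{(1)}$.

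The crucial point is that $\sigma^{(1)}$ is a state of exactly the same type: setting $N' = N-1$, it is a rank $N'+1$ completely symmetric state supported on $N'\otimes N'$, so whenever $N' > 3$ the theorem applies to it verbatim. Iterating, I would let $\sigma^{(k)}$ denote the rank $N+1-k$ completely symmetric state supported on $(N-k)\otimes(N-k)$ obtained after $k$ successive reducible splittings, together with the peeled states $\ket{x_0,x_0}\bra{x_0,x_0},\ldots,\ket{x_{k-1},x_{k-1}}\bra{x_{k-1},x_{k-1}}$. If some $\sigma^{(k)}$ turns out to be irreducible, the theorem makes it, and hence $\rho$, a sum of symmetric pure product states, landing us in the first alternative. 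Otherwise every stage is reducible and the recursion runs until the support dimension first equals $3$; since each step lowers the dimension by one, this occurs exactly at $k = N-3$, where $\sigma := \sigma^{(N-3)}$ is a rank $4$ completely symmetric state supported on the $3\otimes 3$ space and the theorem can no longer be invoked. Reading off the peeled terms gives precisely $\rho = \sigma + \sum_{i=0}^{N-4}\ket{x_i,x_i}\bra{x_i,x_i}$, the $N-3$ indices running from $0$ to $N-4$.

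The main work is bookkeeping rather than new ideas. At each stage I must check that $\sigma^{(k)}$ genuinely satisfies the theorem's hypotheses: that it stays completely symmetric (supplied by Lemma~\ref{reduci}), that its rank is exactly $N+1-k$, and that it is supported on the full $(N-k)\otimes(N-k)$ space, the last being guaranteed because the splitting in the theorem produces a state of rank one less supported on a space of dimension one less. The only remaining subtlety is the orthogonality $\range{\sigma}\bot\ket{x_i}$. At each reducible step the definition of reducibility only places $\range{\sigma^{(k)}_1}$ in direct sum with the direction $\ket{x_{k-1}}$; to upgrade this to orthogonality I would, exactly as in the proof of Lemma~\ref{reduci}, apply a suitable invertible local operator $A\otimes A$, which preserves both complete symmetry and S-separability. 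Composing these transformations over all $N-3$ steps, or equivalently orthogonalizing the final local range of $\sigma$ against the $N-3$ peeled directions by a single global $A\otimes A$, arranges $\range{\sigma}\bot\ket{x_i}$ for every $i=0,1,\ldots,N-4$, completing the corollary.
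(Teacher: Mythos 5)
Your core argument is exactly the paper's proof: the paper disposes of this corollary in a single remark, observing that the preceding theorem can be applied recursively to the reduced state $\sigma$ as long as its support dimension exceeds $3$, so that either some stage is irreducible (giving a sum of $N+1$ symmetric pure product states, since the peeled terms are themselves symmetric product states) or the recursion runs down through $N-3$ peelings to a rank-$4$ completely symmetric state supported on the $3\otimes 3$ space. Your bookkeeping of ranks, support dimensions, and the two alternatives matches that argument step for step.

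The one place where you go beyond the paper --- the orthogonality $\range{\sigma}\bot\ket{x_i}$ --- is also the place where your argument breaks. Conjugation by an invertible local operator $A\otimes A$ does not fix the state: unless $A$ is orthogonal, $(A\otimes A)\rho(A\otimes A)^{\intercal}\neq\rho$, and an invertible map carrying a non-orthogonal direct sum onto an orthogonal one is never orthogonal. So after your ``orthogonalization'' you have exhibited a decomposition of the transformed state $(A\otimes A)\rho(A\otimes A)^{\intercal}$, not of $\rho$, and the claimed orthogonality for $\rho$ itself is not established. Indeed it cannot be in general: take $N=4$, let $\sigma=\sum_{j=1}^{4}\ket{v_j,v_j}\bra{v_j,v_j}$ with generic $v_j\in \mathrm{span}\{e_0,e_1,e_2\}$, and set $x=e_0+e_3$; then $\rho=\sigma+\ket{x,x}\bra{x,x}$ is completely symmetric, reducible, of rank $N+1=5$, and supported on $4\otimes 4$, but generically the only symmetric product vectors in $\range{\rho}$ are multiples of the five defining ones, so the only peeling of the corollary's form is the given one, whose ranges are \emph{not} orthogonal --- reducibility only ever delivers linear independence. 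To be fair, the paper's own one-line proof is equally silent on this point (its orthogonality claim should be read as holding only up to a local invertible transformation, or weakened to a direct-sum statement), so your instinct to flag the issue was sound; the mechanism you propose simply cannot deliver the statement as literally written.
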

   Specifically, $\rho$ in the above corollary will be S-separable if $\rho$ is separable according to Theorem~\ref{thm1}. We
   finish this section by summarizing the results on Conjecture~\ref{conj2}.
   \begin{thm}
     Suppose $\rho$ is a separable completely symmetric state supported in the $N\otimes N$ space. Then Conjecture~\ref{conj2} holds if
     either
     \begin{enumerate}
       \item $N\leqslant 2$;
       \item $\rank{\rho}\leqslant N+1$.
     \end{enumerate}
   \end{thm}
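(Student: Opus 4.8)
The plan is to dispatch the two conditions by a case analysis on $\rank{\rho}$, feeding each case into the structural results established above; the separability hypothesis on $\rho$ is only needed in the top rank case. First, if $N\leqslant 2$, then by part (1) of Theorem~\ref{thm1} every completely symmetric state supported in $\bH$ is already S-separable, so there is nothing to prove. Likewise, if $\rank{\rho}\leqslant N$, part (2) of Theorem~\ref{thm1} gives S-separability unconditionally. Thus the only genuinely new situation is $\rank{\rho}=N+1$, which I treat next.

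For $\rank{\rho}=N+1$ with $N\leqslant 3$ one has $\rank{\rho}\leqslant 4$, so Theorem~\ref{rank4} applies directly: a separable completely symmetric state of rank at most $4$ is S-separable. It remains to handle $\rank{\rho}=N+1$ with $N>3$. Here I invoke the corollary to the rank-$(N+1)$ theorem established just above, which asserts that $\rho$ is \emph{either} a sum of $N+1$ completely symmetric pure product states---in which case $\rho$ is S-separable by inspection---\emph{or} it decomposes as $\rho=\sigma+\sum_{i=0}^{N-4}\ket{x_i,x_i}\bra{x_i,x_i}$ with $\sigma$ a rank-$4$ completely symmetric state supported on a $3\otimes 3$ subspace and $\range{\sigma}\perp\ket{x_i}$ for all $i$. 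In the first alternative we are done, so assume the second.

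In the second alternative the orthogonality $\range{\sigma}\perp\ket{x_i}$ makes the decomposition reducible, and I claim $\sigma$ is separable. Indeed, writing $P$ for the orthogonal projection onto the $3$-dimensional local subspace carrying $\sigma$, the block structure (as in Eq.~\eqref{goesqp:eq:25}) forces $\sigma=(P\otimes P)\rho(P\otimes P)$; applying $P\otimes P$ termwise to any product decomposition $\rho=\sum_j\ket{a_j,b_j}\bra{a_j,b_j}$ exhibits $\sigma$ as the sum of the product states $\ket{Pa_j,Pb_j}\bra{Pa_j,Pb_j}$, hence $\sigma$ is separable. Since $\sigma$ is separable, completely symmetric, and of rank $4$, Theorem~\ref{rank4} shows that $\sigma$ is S-separable. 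Each summand $\ket{x_i,x_i}\bra{x_i,x_i}$ is trivially S-separable, so peeling them off one at a time and applying Lemma~\ref{reduci} repeatedly assembles $\rho$ as an S-separable state. Combining all cases completes the proof.

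The bookkeeping over the earlier results is routine; the one step carrying real content is the claim that the reduced block $\sigma$ inherits separability from $\rho$. I expect this compression argument to be the point to get right: one must verify that restricting a separable decomposition to the subspace $V\otimes V$ (with $V$ the local support of $\sigma$) genuinely annihilates the cross terms and leaves an honest product decomposition of $\sigma$, and that the state so obtained is exactly the $\sigma$ supplied by the corollary. Once this is secured, Theorem~\ref{rank4} and Lemma~\ref{reduci} close the argument mechanically.
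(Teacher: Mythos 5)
Your proof is correct and follows essentially the same route as the paper, which obtains this theorem by combining Theorem~\ref{thm1}, Theorem~\ref{rank4}, and the rank-$(N+1)$ corollary exactly as you do. The only genuine difference is that you spell out, via the compression $\sigma=(P\otimes P)\rho(P\otimes P)$, why the rank-$4$ block $\sigma$ inherits separability from $\rho$ --- a step the paper asserts without argument in the sentence preceding the theorem --- so this is a welcome filling-in of detail rather than a deviation.
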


   We finish this section by proposing another interesting question: which kind of quantum states can be transformed into
   the completely symmetric states or even the S-separable states  by the
   local invertible operator. We call these states local-equivalently S-separable. Here we have some examples.
   \begin{example}
     \label{ex0}
     Any rank $N$ states supported in the $N\otimes N$  space can be transformed into S-separable states by the local invertible operator.
   \end{example}
   \begin{proof}
    It was proved in Ref.~\cite{horodecki2000operational} the state can be written as a sum of $N$ pure product
    states. Denoted by
    \begin{equation}
      \label{S-sep-v5.1:eq:21}
      \rho = \sum_{i=0}^{N-1}\ket{x_i,y_i}\bra{x_i,y_i}.
    \end{equation}
    Since $\rho$ is supported in the $N\otimes N$ space, $\ket{x_i}$ and $\ket{y_i}$ are linearly independent,
    respectively. Therefore, we can find invertible operators $A,B\in\complex^{N}$ such that
    \begin{equation}
      \label{S-sep-v5.1:eq:22}
      \begin{split}
        A\ket{x_i}&=\ket{i},i=0,1,\ldots N-1,\\
        B\ket{y_i} &= \ket{i},i=0,1,\ldots N-1.
      \end{split}
    \end{equation}
    Therefore, $(A\otimes B)\rho(A\otimes B)^{\dagger}$ is S-separable.
  \end{proof}
  \begin{example}
    \label{ex01}
    Let $\rho$ be a separable state in the $N\otimes N $ system. And
    \begin{equation}
      \label{S-sep-v5.1:eq:23}
      \rho = \sum_{i=0}^{N}\ket{x_i,y_i}\bra{x_i,y_i},
    \end{equation}
    where $\ket{x_i}$ and $\ket{y_i}$ are in general position, then $\rho$ is local-equivalently S-separable.
  \end{example}
  \begin{proof}
    Since $\ket{x_i}$ are in general position, then $\ket{x_i},i=0,\ldots,N-1$ are a basis for $\complex^{N}$. Moreover,
    there exists a local invertible operator $A\in\complex^{N\times N}$  such that
    \begin{equation}
      (A\otimes \I)\rho (A\otimes \I)^{\dagger} = \sum_{i=0}^{N-1}\ket{i,y_i}\bra{i,y_{i}}+\ket{x_N,y_N}\bra{x_N,y_{N}},
    \end{equation}
    where $x_{N}$ is the vector whose entries are all $1$'s. Note that $y_i,i=0,\ldots,N-1$ are linearly independent,
    there exists an invertible operator $B$ such that $B\ket{y_i}=\ket{i},i=0,1,\ldots,N-1$ and $B\ket{y_{N}}$ is the vector
    whose entries are all non-zero, denoted by
    \begin{equation}
      \label{S-sep-v5.1:eq:25}
      \ket{y_N} =
      \begin{pmatrix}
        a_0&a_1&\cdots&a_{N-1}
      \end{pmatrix}^{\intercal}.
    \end{equation}
    Let
    \begin{equation}
      \label{S-sep-v5.1:eq:26}
      C =
      \begin{pmatrix}
        \frac{1}{a_0}&0&\cdots&0\\
        0&\frac{1}{a_2}&\cdots&0\\
        \vdots&\vdots&\ddots&\vdots\\
        0&0&\cdots &\frac{1}{a_{N-1}}
      \end{pmatrix}.
    \end{equation}
    Then
    \begin{equation}
      \label{S-sep-v5.1:eq:27}
      \begin{split}      C\ket{y_N} &= \ket{x_N},\\
        C\ket{i} & = \frac{1}{a_i}\ket{i},i=0,1,\ldots,N-1.
 \end{split}
    \end{equation}
    Hence,
    \begin{equation}
      \label{S-sep-v5.1:eq:28}
      \begin{split}
      &(A\otimes CB)\rho(A\otimes CB)^{\dagger} \\
      & = \sum_{i=0}^{N-1}\frac{1}{\abs{a_i}^2}\ket{i,i}\bra{i,i} + \ket{x_N,x_N}\bra{x_N,x_{N}}.
      \end{split}
    \end{equation}
    Therefore, $\rho$ is local-equivalently S-separable.
  \end{proof}
    \section{Best completely symmetric approximation}%
    \label{sec:real-super-symmetric}

    In this section, we consider the problem whether a given completely symmetric state is S-separable. We propose a
    numerical method to solve this problem.

    Let $\symmf$ be the set of completely symmetric states with trace one.  Let
    \begin{equation}
      \label{goesqp:eq:107}
      \symmg = \mathrm{conv}\{xx^{\intercal}\otimes xx^{\intercal},x\in\myH,\norm{x}=1\}.
    \end{equation}
    If $\rho$ is S-separable, then it must be contained in the set $\symmg$.

    Given any $\rho\in\symmf$, we consider the following optimization problem:
    \begin{equation}
      \label{cap}
      \begin{split}
        \min_{\sigma}        & \quad  \frac{1}{2}\norm{\rho - \sigma}^{2}\\
        \text{ s.t.} & \quad {\sigma\in \symmg}.
      \end{split}
    \end{equation}

    Note that $\rho\in\symmg$ if and only if the optimal value of the above optimization problem equals zero. Otherwise,
    we can find its closest S-separable state.

    Since $\symmg$ is a compact convex set and the objective function of \eqref{cap} is a strictly convex quandratic
    function of $\sigma$, the optimization problem has a unique minimizer $\rho_{*}$, which is essentially the
    projection of $\rho$ to $\symmg$.
    
    \newcommand{\proj}{ \mathrm{Proj}_{\symmg}} Denoted by
    \begin{equation}
      \label{goesqp:eq:77}
      \rho_{*} = \mathrm{Proj}_{\symmg}(\rho).
    \end{equation}
    We have the following properties~\cite{bertsekas2005dynamic}:
    \begin{lem}
      Suppose $\rho$ is completely symmetric and $\rho_{*}\in\symmg$. Then the following statements are equivalent:
      \begin{enumerate}
        \item $\rho_{*} = \proj(\rho)$;
        \item $\langle\rho-\rho_{*},\sigma-\rho_{*}\rangle\leqslant 0,\;\forall \sigma\in \symmg$;
        \item The following inequality holds:
              \begin{equation}
                \label{ineq}
                \langle\rho-\rho_{*},xx^{\intercal}\otimes
                xx^{\intercal}-\rho_{*}\rangle\leqslant 0,
              \end{equation}
              for all $x\in\real^{N},\norm{x}=1$.
      \end{enumerate}
    \end{lem}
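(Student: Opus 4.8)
The plan is to prove the equivalence of the three statements by establishing the standard chain of implications for projections onto convex sets, namely (1) $\Rightarrow$ (2) $\Rightarrow$ (3) $\Rightarrow$ (1), exploiting the fact that $\symmg$ is a compact convex set and that the extreme points of $\symmg$ are exactly the completely symmetric pure product states $xx^{\intercal}\otimes xx^{\intercal}$ with $\norm{x}=1$.

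First I would prove (1) $\Rightarrow$ (2), the variational inequality characterizing the projection. Assuming $\rho_{*}=\proj(\rho)$, fix an arbitrary $\sigma\in\symmg$ and consider the segment $\rho_{t}=\rho_{*}+t(\sigma-\rho_{*})$ for $t\in[0,1]$. By convexity of $\symmg$, each $\rho_{t}$ lies in $\symmg$, so the function $g(t)=\tfrac12\norm{\rho-\rho_{t}}^{2}$ attains its minimum over $[0,1]$ at $t=0$. Computing $g'(0)=-\langle\rho-\rho_{*},\sigma-\rho_{*}\rangle$ and using $g'(0)\geqslant 0$ yields exactly $\langle\rho-\rho_{*},\sigma-\rho_{*}\rangle\leqslant 0$.

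Next, (2) $\Rightarrow$ (3) is immediate: since every pure product state $xx^{\intercal}\otimes xx^{\intercal}$ with $\norm{x}=1$ is an element of $\symmg$ by definition of $\symmg$ as their convex hull, inequality~\eqref{ineq} is just the special case of statement (2) obtained by taking $\sigma=xx^{\intercal}\otimes xx^{\intercal}$. The reverse direction (3) $\Rightarrow$ (2) is where the main work lies, and is the step I expect to be the chief obstacle. The idea is that every $\sigma\in\symmg$ is, by definition~\eqref{goesqp:eq:107}, a convex combination (or, via a limiting/Carathéodory argument on the compact set, an average) of extreme points of the form $x_{i}x_{i}^{\intercal}\otimes x_{i}x_{i}^{\intercal}$ with $\norm{x_{i}}=1$. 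Writing $\sigma=\sum_{i}\mu_{i}\,x_{i}x_{i}^{\intercal}\otimes x_{i}x_{i}^{\intercal}$ with $\mu_{i}\geqslant 0$ and $\sum_{i}\mu_{i}=1$, linearity of the inner product in its second argument gives
\begin{equation*}
  \langle\rho-\rho_{*},\sigma-\rho_{*}\rangle
  =\sum_{i}\mu_{i}\,\langle\rho-\rho_{*},x_{i}x_{i}^{\intercal}\otimes x_{i}x_{i}^{\intercal}-\rho_{*}\rangle,
\end{equation*}
and each summand is $\leqslant 0$ by~\eqref{ineq}, so the convex combination is $\leqslant 0$ as well, recovering (2).

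Finally, (2) $\Rightarrow$ (1) follows from strict convexity of the objective: if $\langle\rho-\rho_{*},\sigma-\rho_{*}\rangle\leqslant 0$ for all $\sigma\in\symmg$, then for any such $\sigma$ one expands
\begin{equation*}
  \norm{\rho-\sigma}^{2}=\norm{\rho-\rho_{*}}^{2}-2\langle\rho-\rho_{*},\sigma-\rho_{*}\rangle+\norm{\sigma-\rho_{*}}^{2}\geqslant\norm{\rho-\rho_{*}}^{2},
\end{equation*}
so $\rho_{*}$ minimizes the distance over $\symmg$ and hence equals $\proj(\rho)$ by uniqueness of the minimizer established above. The only genuinely delicate point is justifying, in the (3) $\Rightarrow$ (2) step, that an arbitrary element of the compact convex hull $\symmg$ can be written as a finite convex combination of the generators; this is handled cleanly by the Carathéodory theorem in the finite-dimensional space $\symmf$, after which the argument reduces to the elementary linearity computation above.
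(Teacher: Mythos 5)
Your proposal is correct and follows essentially the same route as the paper: the directional-derivative argument along the segment $(1-t)\rho_{*}+t\sigma$ for (1)$\Rightarrow$(2), the norm expansion $\norm{\rho-\sigma}^{2}=\norm{\rho-\rho_{*}}^{2}+\norm{\sigma-\rho_{*}}^{2}-2\langle\rho-\rho_{*},\sigma-\rho_{*}\rangle$ for (2)$\Rightarrow$(1), and linearity over a convex combination $\sigma=\sum_{i}\mu_{i}\,x_{i}x_{i}^{\intercal}\otimes x_{i}x_{i}^{\intercal}$ for (2)$\Leftrightarrow$(3). The only cosmetic difference is your appeal to Carath\'eodory's theorem, which is not actually needed since elements of $\symmg=\mathrm{conv}\{xx^{\intercal}\otimes xx^{\intercal}\}$ are finite convex combinations of the generators by definition.
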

    \begin{proof}
        First we prove  (1)$\Rightarrow$ (2).

              Let $\delta(\lambda) = (1-\lambda)\rho_{*}+\lambda\sigma$ for $\sigma\in\symmg$ and
              $\lambda\in[0,1]$. Then $\delta(\lambda)\in\symmg$.

             Define $m(\lambda) = \norm{\rho -\delta(\lambda) }^2_{F}$, then
              \begin{equation}
                \label{S-sep-v5.1:eq:16}
                \begin{split}
                  g(\lambda) & = \norm{(1-\lambda)(\rho-\rho_{*})+\lambda(\rho-\sigma)}^2_F\\
                  & =(1-\lambda)^2 \norm{\rho-\rho_{*}}_F^2+\lambda^2\norm{\rho-\sigma}^2_F\\
                  &\quad +  2\lambda(1-\lambda)\langle \rho-\rho_{*},\rho-\sigma\rangle,
                \end{split}
              \end{equation}
              which is quadratic and
              \begin{equation}
                \label{S-sep-v5.1:eq:17}
                m'_{+}(0) = -2\langle \rho-\rho_{*},\sigma-\rho_{*}\rangle.
              \end{equation}
              Since $\rho_{*}$ is the global minimizer of $\norm{\rho-\sigma}$, $m'_{+}(0)$ must be positive, i.e,
              \begin{equation}
                \label{S-sep-v5.1:eq:18}
                \langle\rho-\rho_{*},\sigma-\rho_{*}\rangle\leqslant 0,\;\forall \sigma\in \symmg.
              \end{equation}

              Next, we prove (2)$\Rightarrow$(1).

              Assume (2) holds, then
              \begin{equation}
                \label{S-sep-v5.1:eq:19}
                \begin{split}
                  \norm{\rho-\sigma}^2_F & = \norm{\rho-\rho_{*}+(\rho_{*}-\sigma)}_F^2\\
                  & = \norm{\rho-\rho_{*}}_F^2+\norm{\rho_{*}-\sigma}_F^2\\
                  &\quad -2\langle \rho-\rho_{*},\sigma-\rho_{*}\rangle\\
                  & \geqslant \norm{\rho-\rho_{*}}_F^2+\norm{\rho_{*}-\sigma}_F^2\\
                  &\geqslant \norm{\rho-\rho_{*}}_F^2,
                \end{split}
              \end{equation}
              which implies that $\rho_*=\proj(\rho)$.
              
              Since any $\sigma\in\symmg$ can be represented as a convex combination of $xx^{\intercal}\otimes
              xx^{\intercal},x\in\real^{N}$, denoted by $\sigma = \sum_i \lambda_i x_ix_i^{\intercal}\otimes
              x_ix_i^{\intercal}$, we have
              \begin{equation}
                \label{S-sep-v5.1:eq:20}
                \langle \rho-\rho_{*},\sigma-\rho_*\rangle = \sum_i \lambda_i \langle \rho-\rho_{*}, x_ix_i^{\intercal}\otimes
              x_ix_i^{\intercal}-\rho_*\rangle.
              \end{equation}
              Therefore, (2) and (3) are equivalent as well.              
    \end{proof}

    Recall the density approximation algorithm in Ref.~\cite{dahl2007tensor}.  The algorithm is an adaption of a general
    algorithm in convex programming called the Frank-Wolfe method. By the similar idea, we propose an algorithm that can
    be used to 
    find the closest S-separable states.

    Let $\rho_0$ be a candidate for approximating $\rho_{*}$. If Eq.~\eqref{ineq} satisfies, then $\rho_0$ is the closest
    S-separable state of $\rho$, i.e., the solution of optimization problem~\eqref{cap}. Otherwise, $\sigma-\rho_0$
    forms a feasible descent direction provided $\sigma$ violates Eq.~\eqref{ineq}. Hence, the key point is to check whether
    $\langle\rho-\rho_{*},\sigma-\rho_{*}\rangle$ is greater than $0$. This can be solved by finding the solution of the
    following optimization problem
    \begin{equation}
      \label{goesqp:eq:108}
      \begin{split}
        \max_{\sigma} & \quad \langle\rho-\rho_0,\sigma-\rho_0\rangle\\
        \text{ s.t.} & \quad\sigma\in\symmg,
      \end{split}
    \end{equation}
    which is equivalent to
    \begin{equation}
      \label{goesqp:eq:98}
      \begin{split}
        \max_{x}& \quad \bra{x,x}\eta \ket{x,x}\\
        \text{ s.t. }&\quad\norm{x}=1,x\in\real^{N},
      \end{split}
    \end{equation}
    where  $\eta = \rho-\rho_0$.
    
     In the next section, we 
    propose an algorithm to solve this sub-optimization problem~\eqref{goesqp:eq:98}.

     Let 
    \begin{equation}
      \label{goesqp:eq:111}
      \begin{split}
        f(x) &=\frac{1}{4} \bra{x,x}\eta \ket{x,x}.\\
      \end{split}
    \end{equation}

   Denote by $x_0$ the global maximizer of~\eqref{goesqp:eq:98}. If
    $4f(x_0)>~\langle\eta,\rho_0\rangle$, then $\sigma(x_0)-\rho_0$
    is a decent direction.  We can replace the candidate $\rho_0$ with $\rho_0 + \alpha (\sigma(x_0)-\rho_0)$. Here
    $\alpha$ is chosen such that the objective function in \eqref{cap} has the minimum value, which is
    \begin{equation}
      \label{goesqp:eq:112}
      \alpha = \frac{\langle \rho-\rho_0,\sigma(x_0)-\rho_0\rangle}{\norm{\sigma(x_0)-\rho_0}_F}.
    \end{equation}
    If $4f(x_0)\leqslant \langle\eta,\rho_0\rangle $, then we can conclude that $\rho_0$ is the solution of the
    optimization problem~\eqref{cap}.

    The algorithm for solving the optimization problem~\eqref{cap} is described as  follows~\cite{bertsekas1999nonlinear}: 
    \begin{algorithm}[H]
      \centering
      \caption{Frank-Wolfe method for solving~\eqref{cap}}
      \label{mainalgorithm}
      \begin{algorithmic}[1]
        \Require $\rho$: completely symmetric state, $\varepsilon_1$: tolerance for terminating algorithm.
        \Ensure $\rho_{*}$:
        Closest S-separable state of $\rho$.
        \State{}Choosing initial candidate $\rho_0$
        \State $err \gets +\infty$
        \State $k=0$
        \While{          $err > \varepsilon_1 $ }
              \State $k\gets k+1$
              \State $\eta\gets \rho-\rho_k$
              \State Obtain $x_{k+1}$ by solving optimization problem~\eqref{goesqp:eq:98}
              \State $err \gets \norm{\rho_k-\rho_{k-1}}$
              \If{$4f(x_{k+1})\leqslant \langle\eta,\rho_k\rangle $}
                     \State \Return $\rho_k$
                     \Else
                     \State $\alpha_{k+1}\gets {\langle \rho-\rho_k,\sigma(x_{k+1})-\rho_k\rangle}/{\norm{\sigma(x_{k+1})-\rho_k}_{F}}$
                     \State
                     $\rho_{k+1}\,\gets \rho_k+ \alpha_{k+1}(\sigma(x_k)-\rho_k)$\newline \Comment{$\mathtt{\sigma(x) =
                       xx^{\intercal}\otimes xx^{\intercal}}$}
                     
              \EndIf
        \EndWhile
        \State $\rho_{*}\gets\rho_k$      
      \end{algorithmic}
    \end{algorithm}

    According to the convergence theorem in Ref.~\cite{bazaraa2013nonlinear}, the sequence of iterative points generated
    by Algorithm~\ref{mainalgorithm} converges to $\proj(\rho)$ globally. We have the 
    following theorem.  
    \begin{thm}
      The sequence $\rho_k$ generated by Algorithm~\ref{mainalgorithm} converges to the optimal solution $\rho_{*}$ of~\eqref{cap}.
    \end{thm}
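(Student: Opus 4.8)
The plan is to recognize Algorithm~\ref{mainalgorithm} as an instance of the Frank--Wolfe (conditional gradient) method applied to the projection problem~\eqref{cap}, and then to appeal to the classical convergence guarantee for that method over a compact convex set (Ref.~\cite{bazaraa2013nonlinear}). First I would collect the structural ingredients. The feasible set $\symmg$ is compact and convex, and the objective $F(\sigma)=\frac{1}{2}\norm{\rho-\sigma}_F^2$ is continuously differentiable and strictly convex, with gradient $\nabla F(\sigma)=\sigma-\rho$. Strict convexity together with compactness of $\symmg$ guarantees that the minimizer $\rho_{*}$ exists and is unique, as already observed around~\eqref{goesqp:eq:77}.

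Second, I would verify that each pass of the while-loop carries out the two defining steps of Frank--Wolfe. For the linear minimization oracle: minimizing the linearization $\langle\nabla F(\rho_k),\sigma\rangle=\langle\rho_k-\rho,\sigma\rangle$ over $\sigma\in\symmg$ is the same as maximizing $\langle\rho-\rho_k,\sigma\rangle$, and since the extreme points of $\symmg$ are exactly the symmetric product states $xx^{\intercal}\otimes xx^{\intercal}$ with $\norm{x}=1$, this reduces to the subproblem~\eqref{goesqp:eq:108}, equivalently~\eqref{goesqp:eq:98} with $\eta=\rho-\rho_k$. Its maximizer $x_{k+1}$ yields the Frank--Wolfe vertex $\sigma(x_{k+1})=x_{k+1}x_{k+1}^{\intercal}\otimes x_{k+1}x_{k+1}^{\intercal}$. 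For the step size: $\alpha_{k+1}$ as in~\eqref{goesqp:eq:112} is the exact line-search minimizer of $F$ along the segment from $\rho_k$ to $\sigma(x_{k+1})$, and since $\alpha_{k+1}\in[0,1]$ and $\symmg$ is convex, the update $\rho_{k+1}=\rho_k+\alpha_{k+1}(\sigma(x_{k+1})-\rho_k)$ remains in $\symmg$.

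Third, I would check that the stopping test is precisely the Frank--Wolfe optimality gap. Using $4f(x_{k+1})=\langle\eta,\sigma(x_{k+1})\rangle$ from~\eqref{goesqp:eq:111}, one has the identity
\[
\max_{\sigma\in\symmg}\langle\rho-\rho_k,\sigma-\rho_k\rangle=4f(x_{k+1})-\langle\eta,\rho_k\rangle .
\]
Since $\rho_k\in\symmg$, taking $\sigma=\rho_k$ shows this gap is always nonnegative; it vanishes exactly when the first-order condition~\eqref{ineq} of the preceding characterization lemma holds, i.e. exactly when $\rho_k=\rho_{*}$. Thus the test $4f(x_{k+1})\leqslant\langle\eta,\rho_k\rangle$ fires if and only if $\rho_k$ is already optimal. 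With these identifications, convergence follows from the standard theory: the iterates stay in the compact convex set $\symmg$, $F(\rho_k)$ is monotonically nonincreasing by the line search, and the gap tends to $0$; since $F$ is strictly convex with the unique minimizer $\rho_{*}$, every limit point of $\{\rho_k\}$ equals $\rho_{*}$, whence $\rho_k\to\rho_{*}$.

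The main obstacle I anticipate is in the second step: one must carefully justify that the extreme points of $\symmg$ are precisely the rank-one symmetric product states, so that the nonlinear subproblem~\eqref{goesqp:eq:98} genuinely realizes the linear minimization oracle over $\symmg$, and one must confirm that the line-search step satisfies $\alpha_{k+1}\in[0,1]$ so that feasibility of the iterates is preserved (an unconstrained exact minimizer could in principle overshoot the segment, in which case it must be clipped to $[0,1]$). Once these points are settled, the quoted convergence theorem applies verbatim.
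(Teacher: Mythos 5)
Your proposal is correct and takes essentially the same route as the paper: the paper likewise identifies Algorithm~\ref{mainalgorithm} as a Frank--Wolfe (conditional gradient) method for problem~\eqref{cap} and simply invokes the classical convergence theorem of Ref.~\cite{bazaraa2013nonlinear}, giving no further argument. Your verification of the hypotheses (compactness and convexity of $\symmg$, reduction of the linear minimization oracle to the subproblem~\eqref{goesqp:eq:98}, the gap-based stopping test, and feasibility of the line-search step, including the clipping of the step size to $[0,1]$) supplies exactly the details the paper leaves implicit.
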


    Among the above algorithm, the most important subroutine per iteration   is to find a global optimizer of
    the optimization problem~\eqref{goesqp:eq:98}. In the following section, we suggest a numerical method to solve
    this problem.

\section{algorithm for solving optimization problem (\ref{goesqp:eq:98})}%
\label{sec:algor-optim-probl}
Before proposing our algorithm, we  give the KKT conditions for the optimization
problem~\eqref{goesqp:eq:98}.
We begin this section with computing the gradient and Hessian matrix of $f(x)$, where $f(x) =\frac{1}{4}\langle x,x
|\eta| x,x\rangle$   is given in 
Eq.~\eqref{goesqp:eq:111}. 

Let $\nabla f(x)$ denote the gradient vector and $\nabla^2f(x)$ the Hessian matrix at the point $x$.

We have
\begin{equation}
  \label{goesqp:eq:22}
  \begin{split}
    \nabla f (x) & =   B_{x} x,\\
    \nabla^2f(x) &  = 3 B_x,
    \end{split}
  \end{equation}
  where
  \renewcommand{\adj}[1]{{#1}^{\intercal}}
  \begin{equation}
    \label{goesqp:eq:23}
    \begin{split}
    B_x &=  \begin{pmatrix}
      \adj x \eta_{00} x & \cdots & \adj x \eta_{0,N-1} x\\
      \vdots & \ddots & \vdots \\
      \adj x \eta_{N-1, 0} x & \cdots & \adj x \eta_{N-1, N-1} x
    \end{pmatrix},
    \end{split}
  \end{equation}
  according to the block representation of $\eta$ as in  Eq.~\eqref{goesqp:eq:90} and \eqref{goesqp:eq:91}.
  By the symmetry of $\eta$,
  \begin{equation}
    \label{goesqp:eq:122}
    y^{\intercal}\nabla^2f(x) y = x^{\intercal}\nabla^{2}f(y)x.
  \end{equation}

Let
\begin{equation}
  \label{goesqp:eq:114}
  \Lag(x,\lambda) = f(x) - \lambda \frac{x^Tx-1}{2}
\end{equation}
be the Lagrange function. We have
\begin{equation}
  \label{goesqp:eq:115}
  \begin{split}
    \nabla_{x}\Lag(x,\lambda) &= \nabla f(x) - \lambda x =B_xx-\lambda x,\\
    \nabla_{xx}^2\Lag(x,\lambda)& = \nabla^2f(x) - \lambda\I = 3B_x - \lambda\I.
  \end{split}
\end{equation}
 Furthermore, we have the following KKT conditions.
\begin{lem}[First-Order KKT condition~\cite{Pagonis2006}]
  If $x_{*}$ is a local optimizer of Eq.~\ref{goesqp:eq:98}, then there exists a $\lambda^{*}\in\real$ such that the following KKT
  conditions hold
  \begin{equation}
    \label{goesqp:eq:14}
    \begin{split}
      \nabla_f(x_{*}) - \lambda_{*} x_{*}  & = 0,\\
      {x_{*}}^\intercal x_{*}  - 1 &= 0.
      \end{split}
    \end{equation}
     \end{lem}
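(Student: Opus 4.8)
The statement is the first-order necessary (KKT) condition for the equality-constrained maximization \eqref{goesqp:eq:98}, whose sole constraint is $g(x) := x^{\intercal}x - 1 = 0$. The plan is to establish it directly by a tangent-space argument, which keeps the geometry transparent and needs only that $f$ is differentiable (its explicit gradient being recorded in \eqref{goesqp:eq:22}).

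First I would note that the feasible set is the unit sphere $S = \{x\in\real^{N}:\norm{x}=1\}$ and that its constraint gradient $\nabla g(x_*) = 2x_*$ is nonzero for every $x_*\in S$ since $\norm{x_*}=1$. This nondegeneracy is the constraint qualification guaranteeing a multiplier, and it identifies the tangent space at $x_*$ as $T_{x_*} = \{v\in\real^{N}:v^{\intercal}x_* = 0\}$.

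Next, for an arbitrary direction $v\in T_{x_*}$ with $\norm{v}=1$, I would introduce the great-circle curve
\begin{equation}
  x(t) = \cos(t)\,x_* + \sin(t)\,v,
\end{equation}
and check that $\norm{x(t)}^{2} = \cos^{2}(t)+\sin^{2}(t) = 1$, using $v^{\intercal}x_*=0$ together with $\norm{x_*}=\norm{v}=1$; thus $x(t)\in S$ with $x(0)=x_*$ and $x'(0)=v$. Because $x_*$ is a local maximizer of $f$ on $S$ (the objective being $4f$), the scalar map $t\mapsto f(x(t))$ has a local maximum at $t=0$, so $\left.\tfrac{d}{dt}f(x(t))\right|_{t=0}=0$. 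The chain rule turns this into $\nabla f(x_*)^{\intercal}v = 0$, valid for every $v\in T_{x_*}$.

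Finally, $\nabla f(x_*)\perp T_{x_*}$ forces $\nabla f(x_*)$ into the orthogonal complement of $\{v:v^{\intercal}x_*=0\}$, which is the line $\mathrm{span}\{x_*\}$; hence $\nabla f(x_*)=\lambda_* x_*$ for some $\lambda_*\in\real$, and feasibility gives $x_*^{\intercal}x_*-1=0$. Together these are precisely \eqref{goesqp:eq:14}. Alternatively, since the constraint qualification holds, one may simply invoke the standard Lagrange multiplier theorem of~\cite{Pagonis2006}. I anticipate no genuine obstacle: the only point deserving care is the justification that $T_{x_*}=\{v:v^{\intercal}x_*=0\}$ captures exactly the admissible first-order variations, after which the stationarity computation is routine.
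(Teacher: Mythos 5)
Your proof is correct, but it is worth noting that the paper does not prove this lemma at all: it is stated with a citation to \cite{Pagonis2006} and treated as the standard first-order Lagrange multiplier condition for an equality-constrained problem. Your tangent-space argument is a genuine, self-contained derivation: the great-circle curve $x(t)=\cos(t)\,x_*+\sin(t)\,v$ stays on the unit sphere, so stationarity of $t\mapsto f(x(t))$ at $t=0$ yields $\nabla f(x_*)^{\intercal}v=0$ for every $v$ with $v^{\intercal}x_*=0$, whence $\nabla f(x_*)\in\mathrm{span}\{x_*\}$ and feasibility gives the second equation of \eqref{goesqp:eq:14}. Your route buys two things the bare citation does not: (i) it makes explicit that the constraint qualification holds automatically on the sphere ($\nabla g(x_*)=2x_*\neq 0$ whenever $\norm{x_*}=1$), so the multiplier exists with no extra hypotheses; (ii) it only requires exhibiting feasible curves through $x_*$ in each tangent direction, so you never need the harder converse characterization of the tangent cone --- the point you flagged as "deserving care" is in fact already discharged by the great-circle construction. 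Your closing alternative, invoking the standard multiplier theorem directly, is precisely what the paper does, so your proposal is a strict superset of the paper's treatment; the only cosmetic mismatch is that the paper's objective is $4f$ rather than $f$, which changes nothing since both have the same stationary points on the sphere.
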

  Note that Eq.~\eqref{goesqp:eq:14} can be written  explicitly as
    \begin{equation}
      \label{goesqp:eq:24}
      \begin{split}
         B_{x_{*}}x_{*} &= \lambda_{*} x_{*} ,\\
        x_{*}^{\intercal} x_{*}& = 1.
      \end{split}
    \end{equation}
   
 It follows from Eq.~\eqref{goesqp:eq:24} that
  \begin{equation}
    \label{goesqp:eq:28}
    \lambda_{*} =  x^{\intercal}_{*} B_{x_{*}}x_{*} =  4f(x_{*}),
  \end{equation}
which  guarantees the uniqueness of the associated Lagrange multiplier.
  \begin{lem}[Second-Order Necessary Conditions\cite{Pagonis2006}]
    If $x_{*}$ is a local optimizer of Eq.~\eqref{goesqp:eq:98} and $\lambda_{*}$ is the associated Lagrange multiplier, then
    the second order optimality condition holds
    \begin{equation}
      \label{goesqp:eq:27}
      d^{\intercal}\nabla_{xx}^2\Lag(x_{*},\lambda_{*})d\leqslant 0,\, \forall d^{\intercal}x_{*}=0.
    \end{equation}
  \end{lem}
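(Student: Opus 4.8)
The plan is to establish this as the standard second-order necessary condition for an equality-constrained maximization, specialized to the single constraint $c(x) = x^{\intercal}x - 1 = 0$. Since $\nabla c(x_{*}) = 2x_{*}$ and $\norm{x_{*}} = 1 \neq 0$, the linear independence constraint qualification holds at $x_{*}$, so the feasible set is the smooth unit sphere near $x_{*}$ and the argument reduces to examining $f$ along feasible arcs through $x_{*}$. Recalling from Eq.~\eqref{goesqp:eq:115} that $\nabla_{xx}^{2}\Lag(x_{*},\lambda_{*}) = \nabla^{2}f(x_{*}) - \lambda_{*}\I$, the claim is equivalent to $d^{\intercal}\nabla^{2}f(x_{*})d \leqslant \lambda_{*}\,d^{\intercal}d$ for every tangent direction $d$, i.e.\ every $d$ with $d^{\intercal}x_{*} = 0$.

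Fix such a direction $d$; the case $d = 0$ is trivial, so assume $d \neq 0$ and set $u = d/\norm{d}$. First I would exhibit an explicit feasible curve lying \emph{exactly} on the constraint manifold, namely the great circle
\begin{equation}
  \label{sonc:curve}
  x(t) = \cos(t\norm{d})\,x_{*} + \sin(t\norm{d})\,u.
\end{equation}
Using $\norm{x_{*}} = \norm{u} = 1$ and $x_{*}^{\intercal}u = x_{*}^{\intercal}d/\norm{d} = 0$, one checks that $\norm{x(t)} = 1$ for all $t$, that $x(0) = x_{*}$ with $\dot{x}(0) = d$, and that $\ddot{x}(0) = -\norm{d}^{2}x_{*}$. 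Thus $x(t)$ is a twice-differentiable curve feasible for~\eqref{goesqp:eq:98} with the prescribed tangent direction.

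Because $x_{*}$ is a local maximizer of~\eqref{goesqp:eq:98}, the scalar function $g(t) = f(x(t))$ attains a local maximum at $t = 0$, whence $g''(0) \leqslant 0$. Differentiating twice by the chain rule gives
\begin{equation}
  \label{sonc:gpp}
  g''(0) = d^{\intercal}\nabla^{2}f(x_{*})d + \nabla f(x_{*})^{\intercal}\ddot{x}(0).
\end{equation}
Now I would invoke the first-order KKT condition~\eqref{goesqp:eq:24}, which gives $\nabla f(x_{*}) = B_{x_{*}}x_{*} = \lambda_{*}x_{*}$, together with $\ddot{x}(0) = -\norm{d}^{2}x_{*}$, to evaluate the second term as $\nabla f(x_{*})^{\intercal}\ddot{x}(0) = -\lambda_{*}\norm{d}^{2}\norm{x_{*}}^{2} = -\lambda_{*}\,d^{\intercal}d$. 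Substituting into~\eqref{sonc:gpp} yields $g''(0) = d^{\intercal}(\nabla^{2}f(x_{*}) - \lambda_{*}\I)d = d^{\intercal}\nabla_{xx}^{2}\Lag(x_{*},\lambda_{*})d$, and $g''(0) \leqslant 0$ delivers~\eqref{goesqp:eq:27}.

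The only genuinely delicate point is the second-order feasibility step: one must produce a curve that stays on the sphere to second order, so that the curvature of the constraint surface is correctly captured. Using the great circle~\eqref{sonc:curve} sidesteps any appeal to the implicit function theorem, and it is precisely the term $\ddot{x}(0) = -\norm{d}^{2}x_{*}$ — the second derivative of the constraint — that, acting through the first-order multiplier $\lambda_{*}$, converts the bare Hessian $\nabla^{2}f(x_{*})$ into the Hessian of the Lagrangian $\nabla_{xx}^{2}\Lag(x_{*},\lambda_{*})$. Everything else is a routine chain-rule computation.
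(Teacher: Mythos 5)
Your proof is correct, but note that the paper itself never proves this lemma: it is quoted as a standard second-order necessary condition with a citation to the literature (\cite{Pagonis2006}), so the comparison here is between your self-contained argument and the generic textbook result the paper leans on. Your route is a clean specialization to the sphere constraint: instead of invoking the implicit function theorem under LICQ to manufacture feasible arcs --- which is what the general proof behind such results does --- you exhibit the great circle $x(t)=\cos(t\norm{d})\,x_{*}+\sin(t\norm{d})\,d/\norm{d}$, which lies exactly on the feasible set and satisfies $\dot x(0)=d$, $\ddot x(0)=-\norm{d}^{2}x_{*}$. The chain-rule identity $g''(0)=d^{\intercal}\nabla^{2}f(x_{*})d+\nabla f(x_{*})^{\intercal}\ddot x(0)$, combined with the first-order condition $\nabla f(x_{*})=\lambda_{*}x_{*}$ from Eq.~\eqref{goesqp:eq:24}, turns the curvature term into $-\lambda_{*}\,d^{\intercal}d$, which is exactly what converts the bare Hessian into $\nabla^{2}_{xx}\Lag(x_{*},\lambda_{*})=\nabla^{2}f(x_{*})-\lambda_{*}\I$ as in Eq.~\eqref{goesqp:eq:115}; the inequality $g''(0)\leqslant 0$ at the local maximizer then gives Eq.~\eqref{goesqp:eq:27}. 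All computations check ($\norm{x(t)}\equiv 1$, the curve derivatives, the multiplier term, the trivial case $d=0$). What your approach buys is self-containedness and a transparent explanation of why the Lagrangian Hessian, rather than $\nabla^{2}f(x_{*})$ alone, is the correct object --- the correction $-\lambda_{*}\I$ is precisely the curvature of the sphere acting through the multiplier; what the paper's citation buys is brevity, which is defensible since the result is classical.
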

  That is
  \begin{equation}
    \label{goesqp:eq:29}
    d^{\intercal}\nabla^2f(x_{*})d \leqslant \lambda_{*}, \forall d^{\intercal}x_{*} = 0,\norm{d}=1.
  \end{equation}

  Sufficient condition is the condition on $f(x)$ that ensures $x_*$   is a local maximum to the
  problem Eq.~\eqref{goesqp:eq:98}. It differs in that the inequality in Eq.~\eqref{goesqp:eq:29} is replaced by a strict
  inequality. 
  \begin{lem}[Second-Order Sufficient Condition \cite{Pagonis2006}]
    Suppose $(x_{*},\lambda_{*})$  satisfy the first order KKT condition Eq.~\eqref{goesqp:eq:24}. Suppose also that
    \begin{equation}
      \label{goesqp:eq:30}
      d^{\intercal} \nabla^2f(x_{*})d < \lambda_{*}, \forall d^{\intercal}x_{*}=0,\norm{d}=1.
    \end{equation}
    Then $x_{*}$ is a local maximizer of the optimization problem.
  \end{lem}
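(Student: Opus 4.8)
The plan is to argue by contradiction, reducing the constrained problem to an unconstrained statement about the Lagrangian $\Lag(\cdot,\lambda_*)$ evaluated along feasible sequences. Suppose $x_*$ were \emph{not} a local maximizer. Then there exists a sequence of feasible points $x_k\to x_*$ with $\norm{x_k}=1$ and $f(x_k)\geqslant f(x_*)$. Writing $x_k = x_* + t_k d_k$ with $\norm{d_k}=1$ and $t_k\to 0^{+}$, the compactness of the unit sphere lets me pass to a subsequence along which $d_k\to d$ for some unit vector $d$.

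First I would verify that $d$ is a tangent direction, i.e.\ $d^{\intercal}x_* = 0$, so that the hypothesis Eq.~\eqref{goesqp:eq:30} applies to it. This follows from feasibility: expanding $\norm{x_k}^2 = \norm{x_*}^2 = 1$ gives $2t_k\, x_*^{\intercal}d_k + t_k^2 = 0$, hence $x_*^{\intercal}d_k = -t_k/2\to 0$, and letting $k\to\infty$ yields $x_*^{\intercal}d = 0$.

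Next, since both $x_k$ and $x_*$ lie on the unit sphere, the constraint term in $\Lag$ vanishes at these points, so $f(x_k)=\Lag(x_k,\lambda_*)$ and $f(x_*)=\Lag(x_*,\lambda_*)$. A second-order Taylor expansion of $\Lag(\cdot,\lambda_*)$ about $x_*$, using that the first-order KKT condition Eq.~\eqref{goesqp:eq:24} forces $\nabla_x\Lag(x_*,\lambda_*)=0$, gives
\[
f(x_k)-f(x_*) = \tfrac{1}{2}t_k^2\, d_k^{\intercal}\nabla_{xx}^2\Lag(x_*,\lambda_*)d_k + o(t_k^2).
\]
Dividing by $t_k^2/2$ and letting $k\to\infty$, the left-hand side stays nonnegative by assumption, while the right-hand side converges to $d^{\intercal}\nabla_{xx}^2\Lag(x_*,\lambda_*)d = d^{\intercal}\nabla^2f(x_*)d - \lambda_*$, which is strictly negative by Eq.~\eqref{goesqp:eq:30} since $\norm{d}=1$ and $d^{\intercal}x_*=0$. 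This contradiction establishes that $x_*$ is a local maximizer of Eq.~\eqref{goesqp:eq:98}.

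The main point to watch is the limiting step: I must guarantee that the direction $d$ extracted along the subsequence genuinely lies in the tangent space of the constraint, and that the remainder term is truly $o(t_k^2)$ so that the \emph{strict} inequality survives passage to the limit. Both are handled cleanly here, because the constraint $\norm{x}^2=1$ is quadratic (making the tangency computation exact) and $f$ is a smooth polynomial, so its Taylor remainder is controlled and the strict negativity of the limiting quadratic form is preserved.
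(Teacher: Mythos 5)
Your proof is correct. For comparison: the paper does not prove this lemma at all --- it is quoted as a standard result of constrained nonlinear programming with a citation --- so your argument supplies a self-contained proof where the paper defers to the literature. What you give is the classical contradiction argument: negating local maximality produces feasible points $x_k = x_* + t_k d_k \to x_*$ with $f(x_k) > f(x_*)$; compactness of the sphere yields $d_k \to d$ along a subsequence; the exact identity $0 = \norm{x_k}^2 - 1 = 2t_k\,x_*^{\intercal}d_k + t_k^2$ gives $x_*^{\intercal}d_k = -t_k/2 \to 0$, so $d$ is a unit tangent vector; and since $\nabla_x\Lag(x_*,\lambda_*) = 0$ by Eq.~\eqref{goesqp:eq:24} and the constraint term of $\Lag$ vanishes on the sphere, the second-order Taylor expansion and division by $t_k^2/2$ give in the limit $0 \leqslant d^{\intercal}\nabla^2 f(x_*)d - \lambda_*$, contradicting Eq.~\eqref{goesqp:eq:30}. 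The two delicate points you flag are indeed the only ones, and you handle both: the limit direction genuinely lies in the tangent space because the sphere constraint is quadratic (so the tangency computation is exact rather than asymptotic), and the remainder is $o(t_k^2)$ because $f$ is a smooth polynomial, so the strict inequality in Eq.~\eqref{goesqp:eq:30} survives the passage to the limit. One cosmetic remark: the negation of ``local maximizer'' of Eq.~\eqref{goesqp:eq:98} yields the strict inequality $f(x_k) > f(x_*)$, which is what guarantees $x_k \neq x_*$ and hence $t_k > 0$; your ``$\geqslant$'' should be read with that proviso, since otherwise the constant sequence $x_k = x_*$ would satisfy your hypotheses without producing any contradiction.
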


   Followed by the second-order sufficient  condition, we have the following inequality.
  \begin{lem}
      \label{lem4}
      Suppose $x_*$ is a local maximum of Eq.~\eqref{goesqp:eq:98}, which satisfies the second-order sufficient KKT
      condition. Then there exists a positive 
      constant $a_1$ such that
      \begin{equation}
        \label{goesqp:eq:45}
        \begin{split}
        \left| (x-x_{*})^{\intercal}(\lambda\I-\nabla^2 f(x_{*}))(x-x_{*})\right|
        > a_1\norm{x-x_{*}}^2,
        \end{split}
      \end{equation}
      for all  $x\neq \pm x_{*},\norm{x}=1$.
    \end{lem}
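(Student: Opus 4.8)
The plan is to study the quadratic form $g(x) := (x-x_*)^\intercal M (x-x_*)$ with $M := \lambda_*\I - \nabla^2 f(x_*)$ (here $\lambda=\lambda_*$ is the associated multiplier, so $\lambda_* = 4f(x_*)$ by \eqref{goesqp:eq:28}) by decomposing $x-x_*$ along $x_*$ and along the tangent space $x_*^\perp$, and then feeding in the two pieces of KKT data separately: the first-order condition pins down $M$ on $x_*$, while the second-order sufficient condition controls $M$ on $x_*^\perp$.

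First I would record how $M$ acts on $x_*$. Combining the first-order KKT identity \eqref{goesqp:eq:24} with $\nabla^2 f = 3B_x$ from \eqref{goesqp:eq:22} gives $\nabla^2 f(x_*)x_* = 3B_{x_*}x_* = 3\lambda_* x_*$, hence $M x_* = -2\lambda_* x_*$; thus $x_*$ is an eigenvector of the symmetric matrix $M$, with eigenvalue $-2\lambda_*$. On the complement, the second-order sufficient condition \eqref{goesqp:eq:30} is precisely the statement that $d^\intercal M d>0$ for all nonzero $d\perp x_*$. Since the unit sphere of $x_*^\perp$ is compact, the extreme values $\mu_{\min} := \min_{\norm d=1,\,d\perp x_*} d^\intercal M d$ and $\mu_{\max}$ are attained, with $0<\mu_{\min}\le\mu_{\max}<\infty$.

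Next I would parametrize any unit vector $x$ by $t = x^\intercal x_*\in[-1,1]$, writing $x = tx_* + w$ with $w\perp x_*$ and $\norm w^2 = 1-t^2$, so that $x-x_* = (t-1)x_* + w$ and $\norm{x-x_*}^2 = 2(1-t)$. Expanding $g$ and using $Mx_* = -2\lambda_* x_*$ together with the symmetry of $M$, the cross term is proportional to $x_*^\intercal w=0$ and drops out, leaving $g(x) = -2\lambda_*(1-t)^2 + w^\intercal M w$ with $\mu_{\min}(1-t^2)\le w^\intercal M w\le\mu_{\max}(1-t^2)$. Setting $s = 1-t\in(0,2)$ (the excluded values $s=0$ and $s=2$ are exactly $x=x_*$ and $x=-x_*$) and factoring out $s$, the inequality to be proved reduces to the one-variable estimate $\bigl|-2\lambda_* s + (2-s)\nu\bigr|\ge 2a_1$ for every $s\in(0,2)$ and every $\nu\in[\mu_{\min},\mu_{\max}]$, which I would settle by a short case analysis in $s$, reading off an explicit admissible $a_1$ from $\mu_{\min}$ and $\lambda_*$.

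I expect the decisive difficulty to be the sign of the eigenvalue $-2\lambda_*$ along $x_*$. The tangential term $(2-s)\nu$ is strictly positive thanks to $\mu_{\min}>0$, but the normal term $-2\lambda_* s$ carries the opposite sign when $\lambda_*>0$, so the two contributions compete and one must rule out their cancellation in order to keep $|g|$ bounded below by a positive multiple of $\norm{x-x_*}^2$. For $x$ in a neighbourhood of $x_*$ (small $s$) the tangential term dominates and the bound is immediate; the substance of the argument is the uniform control over the whole sphere, for which the precise value $\lambda_*=4f(x_*)$ and the compactness bound $\mu_{\min}>0$ are the essential ingredients, and it is here that the interplay between the first- and second-order KKT conditions must be used most carefully.
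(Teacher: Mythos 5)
Your decomposition and reduction are correct, and they are in fact a cleaner version of what the paper itself does: both arguments split $x-x_*$ into the component along $x_*$ (where the first-order condition \eqref{goesqp:eq:24} gives $Mx_*=-2\lambda_*x_*$) and the tangential component (where the second-order condition \eqref{goesqp:eq:30} plus compactness gives $\mu_{\min}>0$). The genuine gap is the ``short case analysis'' you defer at the end: it cannot be carried out, because the one-variable claim you reduced to is \emph{false} whenever $\lambda_*>0$. For a fixed tangential unit direction $d$ (so $\nu=d^{\intercal}Md$ is fixed), $\phi(s)=-2\lambda_*s+(2-s)\nu$ is linear in $s$ with $\phi(0)=2\nu>0$ and $\phi(2)=-4\lambda_*<0$, hence $\phi(s_0)=0$ at $s_0=2\nu/(2\lambda_*+\nu)\in(0,2)$; the unit vector $x=(1-s_0)x_*+\sqrt{s_0(2-s_0)}\,d$ is different from $\pm x_*$ and makes the quadratic form in \eqref{goesqp:eq:45} exactly zero, so no positive $a_1$ exists. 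A concrete instance: $\eta=\ket{0,0}\bra{0,0}$, so $f(x)=x_0^4/4$; then $x_*=e_0$ is a global maximizer satisfying the second-order sufficient condition, $\lambda_*=1$, $\nabla^2f(x_*)=\mathrm{diag}(3,0,\ldots,0)$, and $x=\tfrac13e_0+\tfrac{2\sqrt2}{3}e_1$ gives $(x-x_*)^{\intercal}(\lambda_*\I-\nabla^2f(x_*))(x-x_*)=-2\left(\tfrac23\right)^2+\tfrac89=0$. So the cancellation you said ``must be ruled out'' genuinely occurs: it is not a difficulty to be overcome but a counterexample to the statement. (The global claim also fails when $\lambda_*=0$, since then the ratio equals $(2-s)\nu/2\to 0$ as $x\to-x_*$.)

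This is not a defect of your route in particular: Lemma~\ref{lem4} is false as stated, and the paper's own proof breaks at exactly the point where you stopped. Its step \eqref{goesqp:eq:32} asserts $\abs{\lambda_*-d^{\intercal}\nabla^2f(x_*)d}\geqslant\min\{a,2f(x_*)\}$ for \emph{every} unit $d$, justified only by ``$x_*$ is an eigenvector of $\nabla^2f(x_*)$''; this is a non sequitur, because the quadratic form varies continuously from a value $\geqslant a>0$ on the tangential sphere to $-2\lambda_*<0$ at $d=x_*$, and therefore must vanish in between --- precisely at your $s_0$. What is true, and is all that the paper's quadratic-convergence theorem actually uses (the lemma is invoked in \eqref{goesqp:eq:11}--\eqref{goesqp:eq:12} only for $x$ close to $x_*$), is the local version, which your reduction yields immediately: for $\lambda_*\geqslant 0$ one has $\phi(s)\geqslant 2\mu_{\min}-s(2\lambda_*+\mu_{\min})$, hence $\abs{(x-x_*)^{\intercal}(\lambda_*\I-\nabla^2f(x_*))(x-x_*)}\geqslant\tfrac{\mu_{\min}}{2}\norm{x-x_*}^2$ for all unit $x$ with $\norm{x-x_*}^2\leqslant 2\mu_{\min}/(2\lambda_*+\mu_{\min})$; for $\lambda_*<0$ the bound is even global, since $\phi(s)\geqslant\min\{2\mu_{\min},4\abs{\lambda_*}\}$. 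If you weaken the conclusion to this neighborhood statement, your argument becomes a complete and correct proof, and the rest of the paper's convergence analysis goes through unchanged.
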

    \begin{proof}
        \newcommand{\myF}{\mathcal{F}}
  Let
  \begin{equation}
    \label{goesqp:eq:116}
    \myF = \{d:\langle d,x_{*}\rangle =0,\norm{d}=1\}.
  \end{equation}
  Then $\myF$ is a  compact set, which implies that $ d^{\intercal} \nabla^2f(x_{*})d$ can obtain
  its maximum over $\myF$, namely $p$. By Eq.~\eqref{goesqp:eq:30}, we have
  \begin{equation}
    \label{goesqp:eq:31}
   \lambda_{*}-  d^{\intercal} \nabla^2f(x_{*})d > \lambda_{*} - p >0 , \forall d\in\myF.
 \end{equation}
 Let $a =  \lambda_{*} - p $, then
 \begin{equation}
   \label{goesqp:eq:41}
      \lambda_{*}-  d^{\intercal} \nabla^2f(x_{*})d > a , \forall d\in\myF.
 \end{equation}
 On the other hand, by Eq.~\eqref{goesqp:eq:28} we have,
 \begin{equation}
   \label{goesqp:eq:33}
   \begin{split}
     \lambda_{*} - x_{*}^{\intercal}\nabla^2f( x_{*}) x_{*}& = \lambda_{*} - 3 \lambda_{*}\\
     &= - 2 \lambda_{*}\\
     &=-8f(x_{*}).
     \end{split}
   \end{equation}
   
Since $x_{*}$ is an eigenvector of $\nabla^2f(x_{*})$, therefore,
 \begin{equation}
   \label{goesqp:eq:32}
   | \lambda_{*}-  d^{\intercal} \nabla^2f(x_{*})d |\geqslant \min\{ a,2f(x_{*}) \} , \forall \norm{d}=1.
 \end{equation}
 Our proof completes by replacing $d$ with $(x-x_{*})/\norm{x-x_{*}}$ for the case $f(x_*)\neq 0$.

 If $f(x_{*})=0$, let
 \begin{equation}
   x = \cos(\theta)x_{*}+ \sin(\theta)x_{\bot}, x_{\bot}\in\myF,\theta\in[0,2\pi].
   \end{equation}By
 calculation, we have
  \begin{equation}
    \label{goesqp:eq:43}
   \abs{ \sin(\theta)} = \sqrt{\norm{x-x_{*}}^{2} - \frac{\norm{x-x_{*}}^4}{4}}.
  \end{equation}
  For simplicity, we place the proof of the above equality to the Appendix~\ref{apdeq1}.
  
  Note that $x_{*}^{\intercal}\nabla^{2} f(x_{*})x_{\bot}=0$ and $\lambda_{*}=0$, hence
  \begin{equation}
    \label{goesqp:eq:44}
    \begin{split}
     | {x-x_{*}}^{\intercal}(\lambda_{*}\I -\nabla^2f(x_{*}))(x-x_{*})| 
      =&
      \sin(\theta)^2|x_{\bot}^{\intercal}\nabla^2f(x_{*}) x_{\bot}|\\
      \geqslant & a \norm{x-x_{*}}^2(1-\frac{\norm{x-x_{*}}^2}{4}).
      \end{split}
    \end{equation}
    And $\norm{x-x_{*}}< {2}$, hence we can choose an $a_1>0$ such that Eq.~\eqref{goesqp:eq:45} satisfied.
       \end{proof}

The above lemma are used in the analysis of the convergence for  the Algorithm~\ref{combineNewton}. Furthermore, we investigate
the properties of the Hessian matrix $\nabla^{2}f(x)$. We  show that this Hessian
matrix of $f(x)$ is bounded.
 \begin{lem}
   Let $\nabla^2f(x)$ be the Hessian matrix of $f$ at point $x$, then for any unit vector $x$, 
   \begin{equation}
     \label{goesqp:eq:38}
     \norm{\nabla^2f(x)}_{2} \leqslant 3\norm{\eta}_{2}.
   \end{equation}
 \end{lem}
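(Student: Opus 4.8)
The plan is to carry the constant $3$ out front and reduce everything to a spectral-norm estimate on $B_x$: since $\nabla^2 f(x)=3B_x$ by Eq.~\eqref{goesqp:eq:22}, establishing \eqref{goesqp:eq:38} is equivalent to proving $\norm{B_x}_2\leqslant\norm{\eta}_2$. First I would observe that $B_x$ is a real symmetric matrix — its $(i,j)$ entry is $x^{\intercal}\eta_{ij}x=\sum_{k,l}x_kx_l\,\eta_{ijkl}$, and the complete symmetry relation $\eta_{ijkl}=\eta_{jikl}$ forces $(B_x)_{ij}=(B_x)_{ji}$. Consequently its spectral norm is attained by the Rayleigh quotient, i.e. $\norm{B_x}_2=\max_{\norm{y}=1}\abs{y^{\intercal}B_xy}$, so it suffices to bound this quadratic form uniformly over unit vectors $y$.

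The key step is to recognize $y^{\intercal}B_xy$ as the expectation of $\eta$ on a product vector. Expanding the blocks gives
\begin{equation*}
  y^{\intercal}B_xy=\sum_{i,j,k,l=0}^{N-1}y_iy_jx_kx_l\,\eta_{ijkl}=\bra{y,x}\eta\ket{y,x},
\end{equation*}
where the last equality is obtained by reading off the coefficients of $\eta$ in the product basis as in Eq.~\eqref{symmetrictensor:eq:1}, with $\ket{y,x}=\ket{y}\otimes\ket{x}$. In other words, the quadratic form of the Hessian in the direction $y$ coincides with evaluating the completely symmetric matrix $\eta$ on the separable vector $y\otimes x$.

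With this identity in hand the conclusion is immediate: by the variational characterization of the spectral norm,
\begin{equation*}
  \abs{\bra{y,x}\eta\ket{y,x}}\leqslant\norm{\eta}_2\,\norm{y\otimes x}^2=\norm{\eta}_2\,\norm{y}^2\norm{x}^2=\norm{\eta}_2,
\end{equation*}
since $\norm{y}=\norm{x}=1$. Taking the maximum over unit $y$ yields $\norm{B_x}_2\leqslant\norm{\eta}_2$, and multiplying by $3$ gives \eqref{goesqp:eq:38}. I expect no serious obstacle: the only genuine content is spotting the product-vector identity $y^{\intercal}B_xy=\bra{y,x}\eta\ket{y,x}$, after which the estimate reduces to $\norm{y\otimes x}=1$ together with the definition of $\norm{\cdot}_2$. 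What remains is the routine index bookkeeping needed to confirm the middle equality and the symmetry of $B_x$.
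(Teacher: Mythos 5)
Your proposal is correct and follows essentially the same route as the paper: both reduce $y^{\intercal}\nabla^2 f(x)\,y = 3\,y^{\intercal}B_x y$ to the evaluation of $\eta$ on the product vector $y\otimes x$ and then apply the spectral-norm bound $\abs{\bra{y,x}\eta\ket{y,x}}\leqslant\norm{\eta}_2\norm{y}^2\norm{x}^2$. If anything, your version is slightly more careful than the paper's, since you justify the symmetry of $B_x$ and keep the absolute value in the Rayleigh-quotient characterization $\norm{B_x}_2=\max_{\norm{y}=1}\abs{y^{\intercal}B_xy}$, which the paper omits.
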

 \begin{proof}
   By the definition of $l_2$-norm, we have
   \begin{equation}
     \label{goesqp:eq:39}
     \norm{\nabla^2f(x)}_{2} = \max_{\norm{y}=1} y^{\intercal}(\nabla^2f(x))y.  
   \end{equation}
   And 
   \begin{equation}
     \label{goesqp:eq:40}
     \begin{split}
       y^{\intercal}\nabla^2f(x)y &= 3y^{\intercal}B_{x}y\\
       &= 3(x\otimes y)^{\intercal} A (x\otimes y)\\
       &\leqslant 3\norm{\eta}_{2}\norm{x}^{2}\norm{y}^{2}.
     \end{split}
   \end{equation}
   Therefore, Eq.~\eqref{goesqp:eq:38} holds.
 \end{proof}
  Moreover, the Hessian
 matrix $\nabla^2f(x)$ is proved to be Lipschitz in the following lemma.
  \begin{lem}
    The Hessian matrix $\nabla^{2}f(x)$  is Lipschitz continuous.
  \end{lem}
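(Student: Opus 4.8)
The plan is to exploit the fact that the Hessian $\nabla^2 f(x) = 3B_x$ depends \emph{quadratically} on $x$: each entry $(B_x)_{ij} = x^{\intercal}\eta_{ij}x$ is a quadratic form, so $B_x$ is a polynomial (degree two) matrix-valued map. Such a map is smooth and therefore automatically Lipschitz on the bounded feasible region $\norm{x}=1$; the only real task is to extract an explicit constant. The convenient tool is the bilinear version of the identity already used in Eq.~\eqref{goesqp:eq:40}, namely $u^{\intercal}B_x v = (x\otimes u)^{\intercal}\eta\,(x\otimes v)$ for arbitrary $u,v$, which repackages all the blockwise quadratic forms into a single expression governed by $\norm{\eta}_2$.

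Next I would telescope the difference. For unit vectors $u,v$, inserting the cross term $(x\otimes u)^{\intercal}\eta\,(y\otimes v)$ gives
\begin{equation*}
u^{\intercal}(B_x-B_y)v = (x\otimes u)^{\intercal}\eta\bigl((x-y)\otimes v\bigr) + \bigl((x-y)\otimes u\bigr)^{\intercal}\eta\,(y\otimes v),
\end{equation*}
so that the factor $x-y$ sits in one tensor slot of each piece. Applying the operator-norm estimate $\abs{a^{\intercal}\eta b}\leqslant\norm{\eta}_2\norm{a}\norm{b}$ together with $\norm{x\otimes u}=\norm{x}\norm{u}$ then yields
\begin{equation*}
\abs{u^{\intercal}(B_x-B_y)v} \leqslant \norm{\eta}_2\bigl(\norm{x}+\norm{y}\bigr)\norm{x-y}\,\norm{u}\norm{v}.
\end{equation*}
Taking the supremum over unit $u,v$ gives $\norm{B_x-B_y}_2\leqslant\norm{\eta}_2(\norm{x}+\norm{y})\norm{x-y}$, whence on the sphere $\norm{x}=\norm{y}=1$ we obtain $\norm{\nabla^2f(x)-\nabla^2f(y)}_2 = 3\norm{B_x-B_y}_2 \leqslant 6\norm{\eta}_2\norm{x-y}$. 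This establishes Lipschitz continuity with constant $L=6\norm{\eta}_2$ on the feasible set (and, with the factor $\norm{x}+\norm{y}$ kept, on any bounded set).

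I do not anticipate a genuine obstacle: the argument is a routine telescoping estimate and the Hessian is merely a quadratic map. The only points that deserve a little care are the bookkeeping of the tensor-factor ordering in the identity $u^{\intercal}B_x v=(x\otimes u)^{\intercal}\eta\,(x\otimes v)$, which must be matched against the block convention of Eqs.~\eqref{goesqp:eq:90}--\eqref{goesqp:eq:91}, and the remark that the symmetry of $\eta$ keeps the two telescoped pieces balanced so that no factor is lost. It is also worth stating explicitly that the Lipschitz property is local (on bounded sets), since a quadratic map cannot be globally Lipschitz on all of $\real^N$; for the purposes of the convergence analysis of Algorithm~\ref{combineNewton} the bound on $\norm{x}=1$ is exactly what is needed.
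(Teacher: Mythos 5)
Your proof is correct and yields exactly the paper's constant $6\norm{\eta}_2$, but it gets there by a different mechanism than the paper's own proof. The paper first applies the symmetry swap $z^{\intercal}\nabla^2 f(x)\,z = x^{\intercal}\nabla^2 f(z)\,x$ (its Eq.~\eqref{goesqp:eq:122}, a consequence of the complete symmetry of $\eta$), which turns the difference of Hessians at two points into a difference of quadratic forms in a \emph{single} matrix $\nabla^2 f(z)$; it then factors $x^{\intercal}Ax - y^{\intercal}Ay = (x-y)^{\intercal}A(x+y)$ and finishes with the previously established bound $\norm{\nabla^2 f(z)}_2 \leqslant 3\norm{\eta}_2$ from Eq.~\eqref{goesqp:eq:38}. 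You instead telescope the lifted bilinear form directly, producing two terms each carrying $(x-y)$ in one tensor slot, and estimate each by $\norm{\eta}_2$. The two arguments are cousins---both rest on the same multilinearity and the same cancellation---but yours is self-contained (it does not route through the separate Hessian-norm lemma), and, if you write the lift in the slot order that matches the block convention, namely $u^{\intercal}B_x v = (u\otimes x)^{\intercal}\eta\,(v\otimes x)$, it does not use the complete symmetry of $\eta$ at all, so it applies verbatim to an arbitrary matrix $\eta$; the paper's proof genuinely needs the symmetry through Eq.~\eqref{goesqp:eq:122}. (As you note, your ordering $(x\otimes u)^{\intercal}\eta\,(x\otimes v)$ requires the pair-swap symmetry $\eta_{ijkl}=\eta_{klij}$ to agree with $B_x$, which holds here.) Your explicit caveat that the Lipschitz property is local---with constant scaling like $\norm{x}+\norm{y}$ on bounded sets, since a quadratic map is never globally Lipschitz---is also a point the paper leaves implicit: its proof silently restricts to unit vectors, which is all the convergence analysis requires.
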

  \begin{proof}
    Let $x,y$ be any two unit vectors, then
    \begin{equation}
      \label{goesqp:eq:34}
      \begin{split}
  \norm{\nabla^2f(x) - \nabla^2f(y)}_2&
         = \max_{\norm{z}=1} (z^{\intercal}\nabla^2f(x)z -z^{\intercal}\nabla^2f(y) z).
        \end{split}
      \end{equation}
      By the symmetry of $\eta$ as in Eq.~\eqref{goesqp:eq:122},
      \begin{equation}
        \label{goesqp:eq:120}
        \begin{split}
          \norm{\nabla^2f(x) - \nabla^2f(y)}_{2}
          & = \max_{\norm{z} =
          1}(x^{\intercal}\nabla^2f(z)x-y^{\intercal}\nabla^2f(z)y)\\
        &= (x-y)^{\intercal}\nabla^{2}f(z)(x+y)\\
        &\leqslant \norm{\nabla^{2} f(z)}_{2}\norm{x-y}\norm{x+y}\\
        &\leqslant 2 \norm{\nabla^{2} f(z)}_{2}\norm{x-y}.
        \end{split}
      \end{equation}
      Therefore, by Eq.~\eqref{goesqp:eq:38},
      \begin{equation}
        \label{goesqp:eq:124}
              \norm{\nabla^2f(x) - \nabla^2f(y)}_{2}   \leqslant 6\norm{\eta}_{2}\norm{x-y}.
      \end{equation}
  \end{proof}
  \begin{lem}
    \label{lem7}
    Suppose $x_{*}$ is a local maximizer of Eq.~\eqref{goesqp:eq:98}. Then for any unit vector $x$, there exists a positive constant $a_2$
    such that 
   $ |f(x_*) - f(x)| \leqslant a_2 \norm{x-x_{*}}^{2}$.
  \end{lem}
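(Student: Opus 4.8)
The plan is to Taylor-expand $f$ about the critical point $x_*$ and to exploit both the KKT stationarity and the unit-sphere constraint so as to absorb the would-be first-order term into a quadratic estimate. First I would set $d = x - x_*$ and record the elementary spherical identity: since $\norm{x}=\norm{x_*}=1$, we have $\norm{x-x_*}^2 = 2 - 2x^{\intercal}x_*$, and therefore $x_*^{\intercal}(x-x_*) = x^{\intercal}x_* - 1 = -\tfrac12\norm{x-x_*}^2$.

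The crucial observation is that the linear part of the expansion is already of second order. By the first-order KKT condition Eq.~\eqref{goesqp:eq:24} we have $\nabla f(x_*) = \lambda_* x_*$, so that
\[
  \nabla f(x_*)^{\intercal}(x-x_*) = \lambda_* \, x_*^{\intercal}(x-x_*) = -\tfrac{\lambda_*}{2}\norm{x-x_*}^2 .
\]
This is the step I expect to be the main (conceptual) obstacle: a naive Taylor bound would leave an $O(\norm{x-x_*})$ term, and only the combination of stationarity with the joint constraint $\norm{x}=\norm{x_*}=1$ makes it quadratic.

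Next I would invoke Taylor's theorem with integral remainder, which is exact here because $f$ is a polynomial,
\[
  f(x) - f(x_*) = \nabla f(x_*)^{\intercal}(x-x_*) + \int_0^1 (1-t)\,(x-x_*)^{\intercal}\nabla^2 f\bigl(x_*+t(x-x_*)\bigr)(x-x_*)\,dt .
\]
Because $x$ and $x_*$ both lie on the unit sphere, every point $z = x_*+t(x-x_*)$ on the connecting segment satisfies $\norm{z}\leqslant 1$; the generalization of Eq.~\eqref{goesqp:eq:40} to non-unit arguments gives $\norm{\nabla^2 f(z)}_2 \leqslant 3\norm{\eta}_2\norm{z}^2 \leqslant 3\norm{\eta}_2$. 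Hence the remainder integral is bounded in absolute value by $\tfrac32\norm{\eta}_2\norm{x-x_*}^2$.

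Combining the two contributions yields
\[
  |f(x_*)-f(x)| \leqslant \Bigl(\tfrac{|\lambda_*|}{2} + \tfrac32\norm{\eta}_2\Bigr)\norm{x-x_*}^2 ,
\]
so the claim holds with $a_2 = \tfrac{|\lambda_*|}{2} + \tfrac32\norm{\eta}_2$. Finally, since $\lambda_* = 4f(x_*)$ by Eq.~\eqref{goesqp:eq:28} and $|f(x_*)|\leqslant \tfrac14\norm{\eta}_2$ on the unit sphere, one may record the cleaner explicit value $a_2 = 2\norm{\eta}_2$ (any positive constant works in the degenerate case $\eta=0$, where $f\equiv 0$).
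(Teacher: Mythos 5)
Your proof is correct and is essentially the paper's own argument: the paper Taylor-expands the Lagrangian $\Lag(\cdot,\lambda_{*})$ about the KKT point, where its gradient vanishes, so the term $-\tfrac{\lambda_{*}}{2}\norm{x-x_{*}}^{2}$ that you extract from the linear term via the sphere identity appears there inside the Lagrangian Hessian $\nabla^{2}f(\xi)-\lambda_{*}\I$. Both routes then use $\norm{\nabla^{2}f(\xi)}_{2}\leqslant 3\norm{\eta}_{2}$ and $\abs{\lambda_{*}}\leqslant\norm{\eta}_{2}$ to arrive at the same constant $a_{2}=2\norm{\eta}_{2}$.
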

  \begin{proof}
    Let $\lambda_{*}$ be the associated Lagrange multiplier which satisfies Eq.~\eqref{goesqp:eq:24}.
    By the Taylor expansion of $\Lag(x,\lambda) $ at point $ (x_*,\lambda_*)$, we have
    \begin{equation}
      \label{goesqp:eq:35}
      f(x) = f(x_{*}) +\frac{1}{2} (x-x_{*})^{\intercal}(\nabla^2f(\xi)-\lambda_{*}\I)(x-x_{*}),
    \end{equation}
    where $\xi =tx+(1-t)x_{*}$, $0<t<1$.
    Hence,
    \begin{equation}
      \label{goesqp:eq:36}
      \begin{split}
       \abs{ f(x_{*})- f(x)}
        & \leqslant\frac{1}{2} \norm{\nabla^2f(\xi)-\lambda_{*}\I}_{2}\norm{x-x_{*}}^{2}\\
        & \leqslant \frac{1}{2}(\norm{\nabla^2f(\xi)}_{2}+|\lambda_{*}|) \norm{x-x_{*}}^{2}.
      \end{split}
    \end{equation}
    Since $|\lambda_{*}| \leqslant \norm{\eta}_2$, we have 
    \begin{equation}
      \label{goesqp:eq:37}
      \abs{f(x_{*})-f(x)}\leqslant a_2\norm{x-x_*}^{2},
    \end{equation}
    where $a_2\leqslant \frac{1}{2}(3\norm{\eta}_2+\lambda_{*})\leqslant 2\norm{\eta}_2$.
  \end{proof}

  In the following subsections, we suggest the method for solving the optimization problem Eq.~\eqref{goesqp:eq:98}. We begin
  with the simplest case where $N=2$, where the exact solution can be deduced by finding the roots of a fourth order
  polynomial~\cite{qi2009z}.

\subsection{Direct method for $N=2$}%
\label{sec:modif-grad-meth}
Suppose we have two vector $x,y$, let
\begin{equation}
  v(t_1,t_2) = \frac{t_{1}x+t_{2}y}{\norm{t_{1}x+t_{2}y}},t_1,t_2\in\real,
\end{equation}
  We want to find
the maximal value of $ f(v({t_1},t_{2}))$.

For simplicity, assume $x,y$ are unit vectors.

If $t_1\neq 0$, we can assume that $t_1=1$, and $t_2=t$, let
\begin{equation}
  \label{symmetrictensor:eq:3}
  \begin{split}
    g(t) &=f(v(1,t)),\\
    &=\frac{1}{4(1+2bt+t^2)^2} \trans{(x+ty)} B_{(x+ty)}\cdot (x+ty),\\
    & = \frac{1}{(1+2bt+t^2)^2}\left( p_4t^4+ p_3t^3+p_2t^2+p_1t+p_0 \right),\\
    & = \frac{P(t)}{(1+2bt+t^2)^{2}},
  \end{split}
\end{equation}
where
\begin{align*}
  p_0 & = f(x),\\
  p_1 & = \trans x B_{x}\,y,\\
  p_2 & = \frac{3}{2}\trans x B_{y}\, x,\\
  p_3 & =  \trans x B_{y}\, y,\\
  p_4 & =f(y),\\
  P(t) & = p_4t^4+ p_3t^3+p_2t^2+p_1t+p_0 .
\end{align*}
Note that $g(0) = f(x)$  and $g(\infty) = f(y)$. The maximal value exists where $g'(t)=0$:
\begin{equation}
  \label{symmetrictensor:eq:4}
  \dfrac{P'(t)(1+2bt + t^2) - 4  P(t) (b+t)}{(1+2bt+t^{2})^{3}} = 0.
\end{equation}
which is equivalent to
\begin{equation}
  \label{symmetrictensor:eq:5}
  \begin{split}
    p_1& - 4 p_0 b + ( 2 p_1 - 2 p_1 b-4 p_0 ) t \\
&    + ( 3 p_3 + 4 p_1 b -3 p_1  - 
    4 p_2 b) t^2\\
   & + (2 p_1 - 4 p_2 + 2 p_3 b+ 4 p_4 ) t^3 + ( 
   4 p_4 b-p_3 ) t^4\\
   &=0\\
    \text{ or \qquad} t &=\infty.
    \end{split}
\end{equation}
Compare the values of  $g(t)$ at these real solutions $t_i$ of Eq.~\eqref{symmetrictensor:eq:5}. Suppose $g(t_*)$ has the
maximal value and $v_{*}= v(1,t_{*})$,  then $\frac{v_{*}}{\norm{v_{*}}}$ is the maximal solution over the  subspace $\{ x ,y \}$.

\subsection{Power Method}%
\label{sec:algorithm}
In this subsection, we will introduce the power method utilized for solving the optimization problem~\eqref{goesqp:eq:98}.
In order to make the algorithm converge, we make a  translation to the objective function.

Let \begin{equation}
  h(x) = f(x) + \frac{\alpha}{2}((\adj x x)-1),
  \end{equation} where $\alpha_0$ is an undetermined positive constant.

Then
\begin{equation}
  \label{goesqp:eq:125}
  \begin{split}
    \nabla h(x) &= \nabla f(x) + \alpha x,\\
    \nabla^2h(x)& = \nabla^{2} f(x) +\alpha \I.
  \end{split}
\end{equation}

Since $\norm{\nabla^{2}h(x)}_{2}$ is bounded,  $\nabla^{2}h(x)$ is positive for any $x$ when $\alpha$ is large enough. If $\nabla^{2}h(x)$  is
positive, then the iterations generated by the following algorithm will be monotonically increasing
\begin{algorithm}[H]
      \caption{Power Method for  optimization problem~\eqref{goesqp:eq:98}}
      \label{PM}
      \begin{algorithmic}[1]
        \Require $\eta$: completely symmetric matrix, $\rho-\rho_{k}$ in $k$-th step in Algorithm~\ref{mainalgorithm}; $\varepsilon_{2}$: tolerance for terminate algorithm
        \Ensure $(x_{*},f(x_{*}))$: maximizer and maximal value  of $f$ over unit sphere
        \State Choosing initial candidate $x_0$
        \State Choosing $\alpha$ such that $3B_{x}+\alpha\I>0 $ for all $x$ \Comment{One possible choice is 
          $3\norm\eta_{2}$, See Eq.~\eqref{goesqp:eq:23} for $B_{x}$}
        \State $err \gets +\infty$     
        \State $k=0$
        \While{$err > \varepsilon_{2} $ }
               \State $k\quad\;\;\gets k+1$               
              \State $d\quad\;\;\gets B_{x_{k}}x_{k}+\alpha x_{k} $ 
              \State $x_{k+1}\gets d/\norm{d}_{2}$
          
              \State $err\;\;\, \gets \norm{x_k-x_{k+1}}$ 
       \EndWhile
       \State $x_{*}\gets x_{k}$
       \State $f(x_{*})\gets \bra{x_{k},x_{k}}\eta\ket{x_{k},x_{k}}/4$
      \end{algorithmic}
    \end{algorithm}

    \subsection{SQP algorithm}
    In this subsection, we introduce a locally quadratically convergent  algorithm for solving the  optimization
    problem~\eqref{goesqp:eq:98}.
    
Consider the  Lagrangian function 
\begin{equation}
  \label{symmetrictensor:eq:26}
  \Lag(x,\lambda) = f(x) - \lambda \frac{x^{\intercal}x-1}{2}.
\end{equation}
At a current iteration $x_{k}$, a basic sequential quadratic programming algorithm defines an appropriate
search direction $p_{k}$  as a solution to the quadratic programming subproblem:
\begin{equation}
  \label{goesqp:eq:130}
  \begin{split}
    \max_{p}&\quad f(x_k)+\nabla f(x_k)^{\intercal}p + \frac{1}{2}p^{\intercal}\nabla^{2}_{xx}\Lag(x_{k},\lambda_{k})p\\
    \text{s.t.} &\quad x_{k}^{\intercal}p=0.
  \end{split}
\end{equation}

The first order KKT condition of the equality-constrained problem Eq.~\eqref{goesqp:eq:98} can be written as a system of $N+1$
equations:
\begin{equation}
  \label{symmetrictensor:eq:27}
  F(x,\lambda) =
  \begin{pmatrix}
    \nabla f(x) - \lambda x \\
    x^{\intercal}x-1 
  \end{pmatrix}=0.
\end{equation}
The SQP step~\eqref{goesqp:eq:130} is equivalent to Newton's method applied to the above nonlinear system.

If $x_k$ is the current iteration, which we assume that $x^{\intercal }x  =1$, and $\lambda_{k}$ is
the current approximation to the multiplier associated with the $x_k$.

The Newton step from the recent iteration  is  expressed by 
\begin{equation}
  \label{symmetrictensor:eq:29}
  \begin{split}
  \begin{pmatrix}
    x_\nt\\
    \lambda_{\nt} 
  \end{pmatrix}
  & =
  \begin{pmatrix}
    x_k\\
    \lambda_k 
  \end{pmatrix}
  +
  \begin{pmatrix}
    p_k\\
    \Delta{\lambda}_{k}
  \end{pmatrix},
  \end{split}
\end{equation}

where $p_k$ and $p_{\lambda}$ are obtained by solving the linear system
\begin{equation}
  \label{symmetrictensor:eq:30}
  F'(x_k,\lambda_k)
  \begin{pmatrix}
    p_k\\
    \Delta\lambda_{k}
  \end{pmatrix}
  =-
  \begin{pmatrix}
     \nabla f(x_{k}) - \lambda_k x_k\\
    0
  \end{pmatrix}.
\end{equation}
Here $F'(x_k,\lambda_{k})$ is the Hessian matrix of $\Lag(x,\lambda)$ at point $(x_k,\lambda_k)$:
\begin{equation}
  \label{symmetrictensor:eq:28}
  \begin{split}
  F'(x_k,\lambda_k) &=
  \begin{pmatrix}
    \nabla^2_{xx} \Lag(x_{k},\lambda_{k}) & -x_{k}\\
   - x_{k}^{\intercal} &0
  \end{pmatrix}\\
  &=
  \begin{pmatrix}
    \nabla^2f(x_{k})-\lambda_k\I&-x_{k}\\
    -x_k^{\intercal}&0
  \end{pmatrix}.
  \end{split}
\end{equation}
Let
\begin{equation}
  \label{goesqp:eq:117}
  x_{\sqp} = \frac{x_\nt}{\norm{x_\nt}}.
\end{equation}
 To obtain an estimate for the multiplier, instead of using Eq.~\eqref{symmetrictensor:eq:29} directly,  we minimize the norm of the residual
of $\nabla f(x_{\sqp}) - \lambda x_{\sqp}$, this gives
\begin{equation}
  \label{symmetrictensor:eq:32}
  \lambda_{k+1} = \nabla f(x_{\sqp})^{\intercal} x_{\sqp}.
\end{equation}

However, this SQP method may not converge when $x_0$ is far away from the local maximizer  $x_{*}$.

To enforce the global convergence, at each step we can add a line search over the space spanned by  $x_{\text{\tiny PM}}$ and
$x_{\sqp}$, where $x_{\text{\tiny PM}}$ is 
the new step generated by the  power method introduced in the above subsection. Here we show the our algorithm.
\begin{algorithm}[H]
      \caption{SQP method for  optimization problem~\eqref{goesqp:eq:98}}
      \label{combineNewton}
      \begin{algorithmic}[1]
        \Require  $\eta$: completely symmetric matrix, $\rho-\rho_{k}$ in $k$-th step in Algorithm~\ref{mainalgorithm}; $\varepsilon_{3}$: tolerance for terminate algorithm.
        \Ensure $(x_{*},f(x_{*}))$: maximizer and maximal value  of $f$ over unit sphere.
        \State Choosing initial candidate $x_0$
      
        \State $err \gets +\infty$     
      
        \State $k=0$
        \While{$err > \varepsilon_{3} $ }
              \State $k\;\:\gets k+1$
              \State $\nabla^2 f(x_k)\gets 3B_{x_{k}}$
               \State $\nabla f(x_k)\gets \nabla^2f(x_k)\, x_k/3$
              \State $\lambda_k \gets \nabla f(x_{k})^{\intercal}x_{k}$
        
              \State Compute $x_\nt$ by Eq.~\eqref{symmetrictensor:eq:30}
              \State $x_{\sqp} \gets x_\nt/ \norm{x_\nt}$ \Comment{The SQP step}
              \State $x_{\text{\tiny PM}}\gets( \nabla f(x_k)+\alpha x_{k})/\norm{\nabla f(x_k)+\alpha x_{k}} $\newline
              \Comment{Power method step, see Algorithm~\ref{PM} for choosing $\alpha$}
              \State $x_{k+1}\gets$  maximizer of $f(x)$ over the two dimensional subspace
              $\{x: x\in\mathrm{span}\{x_{\sqp},x_{\text{\tiny PM}}\},\norm{x}=1\}$
              \Comment{See Sec.~\ref{sec:modif-grad-meth} for detail}
              \State $err \gets \norm{x_k-x_{k+1}}$ 
       \EndWhile
       \State $x_{*}\gets x_{k}$
       \State $f(x_{*})\gets \bra{x_{k},x_{k}}\eta\ket{x_{k},x_{k}}/4$
      \end{algorithmic}
    \end{algorithm}
    \section{Convergence Analysis}
    \label{converge}
In this section, we investigate the    convergence behaviour of Algorithm~\ref{combineNewton}. We prove that this algorithm
converges globally and  locally quadratically.    It is convenient to assume that $f(x)$ is not a constant
during our analysis.
\begin{lem}
  \label{notzero}
  If $x$ is a maximizer of $f$ over the unit sphere, then $\nabla h(x)\neq 0$ and $x = \nabla h(x)/\norm{\nabla h(x)}$.
\end{lem}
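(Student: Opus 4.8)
The plan is to read off the direction of $\nabla h(x)$ from the first-order optimality condition and then to fix its \emph{sign} using the positive-definiteness of the translated Hessian $\nabla^2 h$. Since a global maximizer over the unit sphere is in particular a local maximizer, the first-order KKT condition~\eqref{goesqp:eq:24} applies: there is a multiplier $\lambda$ with $B_x x = \lambda x$ and $x^{\intercal}x = 1$, and by~\eqref{goesqp:eq:28} this multiplier is uniquely $\lambda = x^{\intercal}B_x x = 4 f(x)$. In other words, $x$ is an eigenvector of $B_x$ belonging to the eigenvalue $\lambda$.

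Next I would substitute this into the gradient of $h$. By~\eqref{goesqp:eq:125} we have $\nabla h(x) = \nabla f(x) + \alpha x = B_x x + \alpha x = (\lambda + \alpha)\,x$. Thus $\nabla h(x)$ is already a scalar multiple of the unit vector $x$, and the entire statement collapses to showing that the scalar $\lambda + \alpha$ is strictly positive: once this is known, $\nabla h(x)\neq 0$ and $\nabla h(x)/\norm{\nabla h(x)} = (\lambda+\alpha)\,x/\abs{\lambda+\alpha} = x$.

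The sign is exactly where the choice of $\alpha$ enters. Recall that $\alpha$ is taken large enough that $\nabla^2 h(x) = 3 B_x + \alpha\I$ is positive definite for every unit vector $x$. Evaluating this quadratic form at the eigenvector $x$ itself gives $x^{\intercal}(3 B_x + \alpha\I)x = 3\lambda + \alpha > 0$, whence $\lambda > -\alpha/3$ and therefore $\lambda + \alpha > \tfrac{2}{3}\alpha > 0$, using $\alpha > 0$. This closes the argument.

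There is no real analytic difficulty here; the only point requiring care is the sign, that is, excluding the possibility $\nabla h(x) = -\norm{\nabla h(x)}\,x$. Merely establishing $\nabla h(x) = (\lambda+\alpha)x$ would leave open the case $\lambda + \alpha < 0$, and it is precisely the positive-definiteness of the \emph{translated} Hessian $3 B_x + \alpha\I$ — the very translation introduced to render the power iteration monotone — that forces $\lambda + \alpha > 0$ and pins down the correct orientation of the normalized gradient.
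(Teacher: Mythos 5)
Your proof is correct, but it follows a genuinely different route from the paper's. The paper never invokes the KKT conditions here: it uses the convexity of $h$ on the unit ball (guaranteed by $\nabla^{2}h\succ 0$) to write $f(w)-f(x)=h(w)-h(x)\geqslant \nabla h(x)^{\intercal}(w-x)$ for all unit $w$ (Eq.~\eqref{goesqp:eq:7}); then $\nabla h(x)=0$ would make the maximizer $x$ simultaneously a minimizer, forcing $f$ to be constant on the sphere and contradicting the section's standing non-constancy assumption, while $x\neq \nabla h(x)/\norm{\nabla h(x)}$ would let the choice $w=\nabla h(x)/\norm{\nabla h(x)}$ strictly increase $f$, contradicting maximality. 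You instead read off stationarity ($B_{x}x=\lambda x$ with $\lambda=4f(x)$, as in Eqs.~\eqref{goesqp:eq:24} and~\eqref{goesqp:eq:28}), conclude $\nabla h(x)=(\lambda+\alpha)x$ from Eq.~\eqref{goesqp:eq:125}, and settle the sign by testing the positive definite matrix $3B_{x}+\alpha\I$ against $x$ itself, getting $3\lambda+\alpha>0$ and hence $\lambda+\alpha>\tfrac{2}{3}\alpha>0$. Both arguments rest on the same hypothesis --- the $\alpha$-translation making $\nabla^{2}h$ positive definite --- but use it differently: globally (convexity of $h$) in the paper, pointwise (one quadratic-form evaluation) in yours. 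Your version buys two things: it does not need the assumption that $f$ is non-constant on the sphere, and it establishes the conclusion at \emph{every} KKT point, not only at maximizers, since nothing beyond stationarity is used. The paper's version, in turn, sets up exactly the convexity inequality that is recycled immediately afterwards (via Taylor expansion with $\nabla^{2}h\succ0$) to prove that the power-method step monotonically increases $f$, so its machinery is not wasted in context.
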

\begin{proof}
  If $x$ is a maximizer of $f$, then by the convexity of $f$ on the unit ball, we have
  \begin{equation}
    \label{goesqp:eq:7}
    \begin{split}
    f(w)-f(x)&= h(w)-h(x)\\
    &\geqslant \nabla h(x)^{\intercal}(w-x),\;\forall \norm{w}=1.
    \end{split}
  \end{equation}
  If $\nabla h(x)=0$,  $x$ is a minimizer of $f$ over unit sphere. Hence, $f$ must be a constant over unit sphere, which contradicts with our assumption. Therefore, $\nabla h(x) \neq 0$.

  If we choose $w = \nabla h(x)/\norm{\nabla h(x)}$ in Eq.~\eqref{goesqp:eq:7}, then
  \begin{equation}
    \label{goesqp:eq:10}
    f(w) -f(x) \geqslant \norm{\nabla h(x)} (1 - w^{\intercal}x).
  \end{equation}
  If $x \neq  \nabla h(x)/\norm{\nabla h(x)}$, then
  \begin{equation}
    \label{goesqp:eq:42}
    1- w^{\intercal}x  > 0.
  \end{equation}
  Therefore, we have $f(w)> f(x)$, which contradicts to that $x$ is a maximizer.
\end{proof}
Note that  $h(x) $ can be replaced by $f(x)$ if $f(x)\neq 0$ in the above lemma.
\begin{lem}
  Let $x_{k+1}=\nabla h(x_k)/\norm{ \nabla h(x_k) } $, then $f(x_{k+1})\geqslant f(x_{k})$. Moreover if $x_k$ is not a
  KKT  point, i.e. $x_k \propto \nabla f(x_k)$, then $f(x_{k+1})> f(x_k)$.
\end{lem}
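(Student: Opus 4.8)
The plan is to exploit the two defining features of the auxiliary function $h$: it coincides with $f$ on the unit sphere, and the shift by $\frac{\alpha}{2}(x^{\intercal}x-1)$ renders it convex. On the unit sphere $x^{\intercal}x=1$, so $h(x)=f(x)$ whenever $\norm{x}=1$; in particular, since both iterates are unit vectors, $f(x_{k+1})-f(x_k)=h(x_{k+1})-h(x_k)$. Moreover, by the choice of $\alpha$ we have $\nabla^2 h(x)=\nabla^2 f(x)+\alpha\I\succeq 0$ on the whole unit ball, using the bound $\norm{\nabla^2 f(x)}_2\leqslant 3\norm{\eta}_2$ for $\norm{x}\leqslant 1$, so $h$ is convex there, and the segment joining $x_k$ and $x_{k+1}$ stays inside this ball. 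First I would invoke the first-order characterization of convexity to get
\[
f(x_{k+1})-f(x_k)=h(x_{k+1})-h(x_k)\geqslant \nabla h(x_k)^{\intercal}(x_{k+1}-x_k).
\]

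Next I would evaluate the right-hand side directly. Because $x_{k+1}=\nabla h(x_k)/\norm{\nabla h(x_k)}$, we have $\nabla h(x_k)^{\intercal}x_{k+1}=\norm{\nabla h(x_k)}$, whereas Cauchy--Schwarz gives $\nabla h(x_k)^{\intercal}x_k\leqslant\norm{\nabla h(x_k)}\norm{x_k}=\norm{\nabla h(x_k)}$. Subtracting yields $\nabla h(x_k)^{\intercal}(x_{k+1}-x_k)\geqslant 0$, which already establishes $f(x_{k+1})\geqslant f(x_k)$. This presupposes $\nabla h(x_k)\neq 0$, which follows exactly as in Lemma~\ref{notzero}, or directly from $x_k^{\intercal}\nabla h(x_k)=4f(x_k)+\alpha>0$: indeed $x_k^{\intercal}\nabla f(x_k)=4f(x_k)$ by Eq.~\eqref{goesqp:eq:28}, and $\abs{4f(x_k)}\leqslant\norm{\eta}_2$ on the unit sphere, so the chosen $\alpha\geqslant 3\norm{\eta}_2$ keeps this quantity strictly positive.

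For the strict inequality I would inspect the equality case of Cauchy--Schwarz: $\nabla h(x_k)^{\intercal}x_k=\norm{\nabla h(x_k)}$ forces $x_k$ to be the \emph{positive} unit multiple of $\nabla h(x_k)$, i.e. $x_k=x_{k+1}$. Writing $\nabla h(x_k)=\nabla f(x_k)+\alpha x_k$, this is equivalent to $\nabla f(x_k)\propto x_k$, which is precisely the KKT condition. Hence, if $x_k$ is not a KKT point, Cauchy--Schwarz is strict, the descent term is strictly positive, and therefore $f(x_{k+1})>f(x_k)$. The sign bookkeeping here relies again on $x_k^{\intercal}\nabla h(x_k)=4f(x_k)+\alpha>0$, which rules out the spurious case of $x_k$ being antiparallel to $\nabla h(x_k)$ (which would also make the iteration well defined but would not contradict the KKT equivalence).

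The hard part will be justifying the convexity step rigorously: one must confirm that the prescribed $\alpha$ makes $\nabla^2 h$ positive semidefinite not merely on the sphere but throughout the unit ball traversed by $[x_k,x_{k+1}]$, since $\nabla^2 f(x)=3B_x$ scales with $\norm{x}^2$ and the first-order convexity inequality is applied to off-sphere points. Once that is pinned down, the remaining steps --- the Cauchy--Schwarz estimate and the identification of its tightness with the KKT condition together with the correct sign --- are routine.
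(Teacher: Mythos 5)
Your proof is correct and takes essentially the same route as the paper's: the first-order convexity inequality you invoke for $h$ is exactly what the paper obtains by Taylor-expanding with Lagrange remainder and discarding the positive-semidefinite quadratic term, and both arguments then reduce to showing $\nabla h(x_k)^{\intercal}(x_{k+1}-x_k)\geqslant 0$ with equality only at a KKT point (the paper via $1-x_{k+1}^{\intercal}x_k=\tfrac{1}{2}\norm{x_{k+1}-x_k}^2$, you via Cauchy--Schwarz, which is the same fact). Your extra bookkeeping --- checking $\nabla h(x_k)\neq 0$ through $x_k^{\intercal}\nabla h(x_k)=4f(x_k)+\alpha>0$ and ruling out the antiparallel equality case --- only makes explicit what the paper leaves implicit, and the convexity concern you flag is harmless since $\norm{3B_x}_2\leqslant 3\norm{x}^2\norm{\eta}_2$ only shrinks for $\norm{x}\leqslant 1$.
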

\begin{proof}
  By Taylor expansion,
  \begin{equation}
    \label{goesqp:eq:5}
    \begin{split}
      f(x) =& f(x_k) + \nabla h(x_k)^{\intercal} (x-x_k)     +\frac{1}{2} {(x-x_k)}^T \nabla^{2} h(\xi) (x-x_k),
    \end{split}
  \end{equation}
  where $\xi = tx+(1-t)x_k,0\leqslant t\leqslant 1$.  By the positivity of $\nabla^{2}h$, we have
  \begin{equation}
    \label{goesqp:eq:6}
    \begin{split}
      f(x_{k+1}) & \geqslant f(x_k) + \nabla h(x_k)^{\intercal} (x_{k+1} - x_k)\\
      & = f(x_k) + \norm{\nabla h(x_k)} (1 - x_{k+1}^{\intercal}   x_k)\\
      &\geqslant f(x_k) + \frac{1}{2}\norm{\nabla h(x_k)} \norm{x_k-x_{k+1}}^2\\
      & \geqslant f(x_k).
    \end{split}
  \end{equation}
  Note that if $x_k$ is not proportional to $x_{k+1}$, then  $\norm{\nabla f(x_k)}$ and $\norm{x_k-x_{k+1}}$ are both
  positive, which follows $f(x_{k+1})> f(x_k)$.
\end{proof}
This lemma shows that the gradient direction $\nabla h(x)$ always increase the value of function $f$ unless it is a
local  maximizer.
Note that in  Algorithm~\ref{combineNewton}, we add  a line search at each step, the new iteration will have larger function
value  compared with the power method. 
\begin{lem}%
  \label{lem1}
  The iterations $\{f(x_k)\}$ generated by Algorithm~\ref{combineNewton} is monotone increasing.
\end{lem}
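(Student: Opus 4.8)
The plan is to reduce the claim to the monotonicity of the pure power-method step, which the preceding lemma has already established. The essential observation is that the power-method iterate $x_{\text{\tiny PM}}$ produced inside Algorithm~\ref{combineNewton} lies in the very subspace over which the line search defining $x_{k+1}$ is performed; hence that line search can only improve on the value already guaranteed at $x_{\text{\tiny PM}}$.

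First I would rewrite the power-method direction of Algorithm~\ref{combineNewton} in terms of $h$. Since $h(x) = f(x) + \frac{\alpha}{2}(x^{\intercal}x - 1)$ gives $\nabla h(x) = \nabla f(x) + \alpha x$, the step reads
\[
  x_{\text{\tiny PM}} = \frac{\nabla f(x_k) + \alpha x_k}{\norm{\nabla f(x_k) + \alpha x_k}} = \frac{\nabla h(x_k)}{\norm{\nabla h(x_k)}}.
\]
Applying the preceding lemma (monotonicity of the power-method step) to the unit vector $x_k$ then yields $f(x_{\text{\tiny PM}}) \geqslant f(x_k)$. Next I would note that $x_{\text{\tiny PM}}$ is a unit vector lying in $\mathrm{span}\{x_{\sqp}, x_{\text{\tiny PM}}\}$, so it is a feasible point of the two-dimensional maximization of $f$ over $S = \{ x \in \mathrm{span}\{x_{\sqp}, x_{\text{\tiny PM}}\} : \norm{x} = 1 \}$ that produces $x_{k+1}$. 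Since $x_{k+1}$ attains the maximum of $f$ over $S$ and $x_{\text{\tiny PM}} \in S$, it follows that
\[
  f(x_{k+1}) \geqslant f(x_{\text{\tiny PM}}) \geqslant f(x_k),
\]
which is precisely the asserted monotonicity.

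The only point that requires care --- and it is mild --- is ensuring that the preceding power-method lemma genuinely applies, i.e.\ that $\alpha$ is large enough for $\nabla^2 h(x) = \nabla^2 f(x) + \alpha\I$ to be positive for every unit $x$. This is guaranteed by the bound $\norm{\nabla^2 f(x)}_2 \leqslant 3\norm{\eta}_2$ together with the choice $\alpha = 3\norm{\eta}_2$ already adopted for the power method, so the Taylor/convexity argument underlying $f(x_{\text{\tiny PM}}) \geqslant f(x_k)$ is valid. With that in hand the proof is purely a feasibility-in-subspace argument and needs no further computation: the two-dimensional line search is solved exactly by the quartic-root procedure of Section~\ref{sec:modif-grad-meth}, so $x_{k+1}$ is a genuine maximizer over $S$ and the inequality chain above is rigorous.
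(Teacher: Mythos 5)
Your proof is correct and follows essentially the same route as the paper: the paper's preceding lemma establishes $f\bigl(\nabla h(x_k)/\norm{\nabla h(x_k)}\bigr)\geqslant f(x_k)$, and the paper then observes (as you do) that since $x_{\text{\tiny PM}}$ is a feasible point of the two-dimensional line search producing $x_{k+1}$, one gets $f(x_{k+1})\geqslant f(x_{\text{\tiny PM}})\geqslant f(x_k)$. Your explicit check that $\alpha$ is chosen large enough to make $\nabla^2 h$ positive is a welcome addition, matching the choice $\alpha = 3\norm{\eta}_2$ stipulated in the power-method algorithm.
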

Consider the case where the  algorithm terminates after finite steps. At that moment, $\nabla f(x_k)
=\pm \norm{\nabla f(x_k)}x_k$. This is a KKT point for the optimization problem~\eqref{goesqp:eq:98}. This
only implies $x_{k}$ is a critical point which could be either a local maximizer or minimizer. However, due to the round
error during the numerical computation,  $x_{k+1} = \pm x_k$ can almost never be obtained after finite steps. Hence,
hereafter in this paper,  this special case is not considered.

Note that $f(x)$ is bounded over the unit sphere, $f(x_k)$ will converge to a real value $\lambda_{*}$. The following theorem
shows that $f(x_k)$ converges to a local maximal value and $x_k$ approximates  a KKT point.
\begin{thm}
  \label{appkkt}
  Suppose $\{x_k\}$ is generated by Algorithm~\ref{combineNewton}. Then the sequence $\{f(x_k)\}$ converges. Moreover, $\{x_k,\lambda_{k}\}$
  will approximates a KKT point of optimization problem~\eqref{goesqp:eq:98}, where $\lambda_{k}=x_k^{\intercal}\nabla f(x_{k})$.
\end{thm}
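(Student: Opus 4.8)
The plan is to prove the two assertions in turn: first that the objective values converge, then that the successive steps shrink in a way that forces the first-order KKT residual to vanish.

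For convergence of $\{f(x_k)\}$, I would simply combine Lemma~\ref{lem1} with compactness. By Lemma~\ref{lem1} the sequence $\{f(x_k)\}$ is monotone increasing, and since every iterate is a unit vector and $f$ is continuous, $f$ is bounded on the unit sphere; indeed $|f(x)| \leqslant \tfrac14\norm{\eta}_2$. A bounded monotone sequence converges, so $f(x_k) \to \lambda_*$ for some $\lambda_* \in \real$, and in particular $f(x_{k+1}) - f(x_k) \to 0$.

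The substance lies in the second assertion. The key is to compare the accepted step with the pure power-method point $x_k^{\mathrm{PM}} = \nabla h(x_k)/\norm{\nabla h(x_k)}$. Because the line search in Algorithm~\ref{combineNewton} maximizes $f$ over a subspace containing $x_k^{\mathrm{PM}}$, we have $f(x_{k+1}) \geqslant f(x_k^{\mathrm{PM}})$, so the power-method inequality established earlier yields
\begin{equation*}
  f(x_{k+1}) - f(x_k) \;\geqslant\; \tfrac12\,\norm{\nabla h(x_k)}\;\norm{x_k - x_k^{\mathrm{PM}}}^2.
\end{equation*}
I would then bound $\norm{\nabla h(x_k)}$ away from zero: by Cauchy--Schwarz and the Euler identity $x^\intercal\nabla f(x) = 4f(x)$ for the degree-four homogeneous $f$, one has $\norm{\nabla h(x_k)} \geqslant x_k^\intercal \nabla h(x_k) = 4f(x_k) + \alpha$, and the prescribed choice $\alpha = 3\norm{\eta}_2$ makes this at least $2\norm{\eta}_2 > 0$. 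Combined with $f(x_{k+1}) - f(x_k) \to 0$, this forces $\norm{x_k - x_k^{\mathrm{PM}}} \to 0$.

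The final step is to convert this into the KKT residual. Since $\norm{\nabla h(x_k)}$ is also bounded above (by $\norm{\eta}_2 + \alpha$), multiplying the relation $x_k - \nabla h(x_k)/\norm{\nabla h(x_k)} \to 0$ through by $\norm{\nabla h(x_k)}$ gives $\mu_k x_k - \nabla f(x_k) \to 0$, where $\mu_k = \norm{\nabla h(x_k)} - \alpha$. Taking the inner product with $x_k$ shows $\mu_k - \lambda_k \to 0$ for $\lambda_k = x_k^\intercal \nabla f(x_k)$, whence $\nabla f(x_k) - \lambda_k x_k \to 0$; together with the exact normalization $x_k^\intercal x_k = 1$ this says precisely that $(x_k,\lambda_k)$ asymptotically satisfies the first-order KKT system~\eqref{goesqp:eq:24}. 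I expect this last translation to be the main obstacle: one must keep careful control of the normalization so that the asymptotic eigenvector relation $x_k \approx \nabla h(x_k)/\norm{\nabla h(x_k)}$ genuinely converts into vanishing of the Lagrangian gradient, and this hinges on the uniform positive lower bound for $\norm{\nabla h(x_k)}$ secured by the choice of $\alpha$.
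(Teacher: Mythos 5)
Your proof is correct, and its overall skeleton coincides with the paper's: monotonicity (Lemma~\ref{lem1}) plus boundedness of $f$ on the sphere gives convergence of $\{f(x_k)\}$; the line-search property $f(x_{k+1})\geqslant f(x_k^{\mathrm{PM}})$ chained with the power-method inequality from Eq.~\eqref{goesqp:eq:5}--\eqref{goesqp:eq:6} yields $\norm{\nabla h(x_k)}\,\norm{x_k-x_k^{\mathrm{PM}}}^2\to 0$; and your final conversion of $x_k-x_k^{\mathrm{PM}}\to 0$ into $\nabla f(x_k)-\lambda_k x_k\to 0$ is the same algebra the paper performs with $\epsilon_k=p_{k+1}-x_k$ in Eqs.~\eqref{goesqp:eq:53}--\eqref{goesqp:eq:55}. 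The genuine difference is in the crucial intermediate step of bounding $\norm{\nabla h(x_k)}$ away from zero. The paper argues by contradiction and soft analysis: it extracts a convergent subsequence whose limit $x_*$ would satisfy $\nabla h(x_*)=0$ and appeals to Lemma~\ref{notzero}; this is somewhat delicate, since Lemma~\ref{notzero} concerns maximizers of $f$ over the whole sphere while $x_*$ is only a maximizer along the sequence (the gap is repairable: $\nabla h(x_*)=0$ and convexity of $h$ make $x_*$ a global minimizer of $f$ on the sphere, which is incompatible with $f(x_*)=\sup_k f(x_k)$ once the sequence has strictly increased at least once). You instead give a direct quantitative bound, $\norm{\nabla h(x_k)}\geqslant x_k^{\intercal}\nabla h(x_k)=4f(x_k)+\alpha\geqslant 2\norm{\eta}_2>0$, via Cauchy--Schwarz and the Euler identity for the degree-four homogeneous $f$. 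This is more elementary (no compactness, no subsequences, no Lemma~\ref{notzero}) and produces an explicit uniform constant. Its only cost is being tied to the prescribed choice $\alpha=3\norm{\eta}_2$; but the same conclusion holds for any admissible $\alpha$, since positive definiteness of $\nabla^{2}h$ forces $12f(x)+\alpha>0$ on the sphere, whence $4f(x_k)+\alpha\geqslant 4\min_{\norm{x}=1}f(x)+\alpha>0$ uniformly in $k$.
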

\begin{proof}
  By Lamma~\ref{lem1}, $f(x_k)$ is monotone increasing. Since $f(x)$ is bounded over the unit sphere, $4f(x_k)$ converges to
  a limit point, namely $\lambda_{*}$.
  It follows that
  \begin{equation}
    \label{goesqp:eq:46}
    \lim_{k\to \infty} f(x_{k+1}) - f(x_{k}) = 0.
  \end{equation}
  By Eq.~\ref{goesqp:eq:6},
  \begin{equation}
    \label{goesqp:eq:47}
    f(x_{k+1})\geqslant f(x_k) + \frac{1}{2}\norm{\nabla h(x_k)}\norm{x_{k+1}-x_{k}}.
  \end{equation}
  Therefore,
  \begin{equation}
    \label{goesqp:eq:48}
    \lim_{k\to\infty} \norm{\nabla h(x_{k})} \norm{x_{k+1}-x_{k}} =0.
  \end{equation}
  We claim that
  \begin{equation}
    \label{g0}
    \lowlim_{k\to\infty}\norm{\nabla h(x_k)}>0.
      \end{equation}
  Suppose otherwise  $\lowlim\limits_{k\to\infty}\norm{h(x_k)}=0$, then there exists a subsequence
  $\{x_{k_{l}}\}$ such that
  \begin{equation}
    \label{goesqp:eq:49}
    \lim_{l\to \infty} \norm{\nabla h(x_{k_{l}})} = 0.
  \end{equation}
  Now that $\{x_{k_{l}}\}$ is a bounded sequence,  it again contains a convergent subsequence $\{x_{k_{l_{p}}}\}$ such
  that
  \begin{equation}
    \label{goesqp:eq:50}
    \lim_{p\to\infty} x_{k_{l_{p}}} = x_{*}.
  \end{equation}
  Hence,
  \begin{equation}
    \label{goesqp:eq:51}
    \lim_{p\to \infty} \norm{\nabla h(x_{k_{l_{p}}})} = \norm{\nabla h(x_{*})} = 0.
  \end{equation}
  Since $f(x_*)$  is a maximizer among $\{x_{k}\}$, Eq.~\ref{goesqp:eq:51}
   contradicts with Eq.~\eqref{notzero}. Therefore, Eq.~\eqref{g0} holds.
   By Eq.~\ref{goesqp:eq:47}, we have
   \begin{equation}
     \label{goesqp:eq:52}
     \lim_{k\to\infty}x_{k+1}-x_k = 0.
   \end{equation}
   Let $p_{k+1} = \nabla h(x_k)/\norm{\nabla h(x_{k})}$. Similarly, by Eq.~\eqref{goesqp:eq:6}, we have
   \begin{equation}
     \label{goesqp:eq:56}
     \lim_{k\to\infty}p_{k+1} - x_k =0.
   \end{equation}
  Let $\epsilon_k = p_{k+1}-x_{k}$, we have
  \begin{equation}
    \label{goesqp:eq:53}
    \begin{split}
      \norm{\nabla h(x_k)} &= \nabla h(x_k) ^{\intercal} (x_{k} + \epsilon_k)\\
      & = \nabla h(x_k)^{\intercal}x_k +  \nabla h(x_k)^{\intercal}\epsilon_k.
      \end{split}
    \end{equation}
    Therefore,
    \begin{equation}
      \label{goesqp:eq:54}
      \begin{split}
        \lim_{k\to\infty} \norm{\nabla h(x_k)}& =\lim_{k\to\infty} \nabla h(x_k)^{\intercal}x_k\\
        & = \lim_{k\to\infty}(4f(x_k) + \alpha)\\
        & = \lambda_* +\alpha.
        \end{split}
   \end{equation}
   Let $\lambda_{k} = \nabla f(x_{k})^{\intercal}x_{k}$, then
   \begin{equation}
     \label{goesqp:eq:55}
     \begin{split}
   \lim_{k\to\infty} \nabla f(x_k) - \lambda_{k} x_k
       & =\lim_{k\to\infty}\norm{\nabla h(x_k)} x_{k} -(\nabla f(x_{k})^{\intercal}x_{k}+\alpha)x_{k}\\
       &=\lim_{k\to\infty}\norm{\nabla h(x_k)}( x_{k}+\epsilon_k) -(4 f(x_{k})+\alpha)x_{k}\\
     & = \lim_{k\to\infty} (\norm{\nabla h(x_{k}) } - \lambda_{*}-\alpha)x_k\\
     & = 0.
     \end{split}
   \end{equation}
\end{proof}

Note that Theorem~\ref{appkkt} shows that the sequence $\{\lambda_k\}$ produced by Algorithm~\ref{combineNewton} converges to a local
maximum of $f$,
but $\{x_k,\lambda_k\}$ only approximates a KKT point of the optimization problem, which means $\{x_{k}\}$ may be not
convergent.
Furthermore, we show it converges if there are only finite many KKT points.
\begin{thm}
  Let $\{x_{k}\}$ be the sequence  produced by Algorithm~\ref{combineNewton}. If there are only finite many KKT points for the
  optimization 
  problem~\eqref{goesqp:eq:98}, then $\{x_{k}\}$ converges.
\end{thm}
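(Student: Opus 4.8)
The plan is to combine the two facts already delivered by Theorem~\ref{appkkt} — namely that consecutive iterates satisfy $\norm{x_{k+1}-x_k}\to 0$ (Eq.~\eqref{goesqp:eq:52}) and that the KKT residual $\nabla f(x_k)-\lambda_k x_k$ tends to zero (Eq.~\eqref{goesqp:eq:55}) — with the finiteness hypothesis, via a standard stabilization argument. First I would record that every accumulation point of $\{x_k\}$ is a KKT point: if $x_{k_j}\to x_*$ along a subsequence, then $\norm{x_*}=1$ (each iterate is a unit vector), and by continuity of $\nabla f$ and of $x\mapsto x^{\intercal}\nabla f(x)$ one gets $\lambda_{k_j}\to \lambda_*:=x_*^{\intercal}\nabla f(x_*)$; passing to the limit in Eq.~\eqref{goesqp:eq:55} yields $\nabla f(x_*)-\lambda_* x_*=0$, so $(x_*,\lambda_*)$ satisfies the first-order KKT conditions~\eqref{goesqp:eq:24}. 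Hence the accumulation set of $\{x_k\}$, which is nonempty by compactness of the unit sphere, is contained in the finite set of KKT points, denoted $\{w_1,\dots,w_m\}$.

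Next I would show that the iterates are eventually trapped near this finite set, i.e. $\mathrm{dist}(x_k,\{w_1,\dots,w_m\})\to 0$. This follows by contradiction: were there a subsequence staying at distance at least $\varepsilon$ from every $w_i$, compactness of the unit sphere would extract a further convergent sub-subsequence whose limit is an accumulation point at distance at least $\varepsilon$ from all $w_i$, contradicting the previous paragraph.

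Finally, set $2\delta=\min_{i\neq j}\norm{w_i-w_j}>0$ (the case $m=1$ being immediate). For $k$ large enough, each $x_k$ lies within $\delta/2$ of a \emph{unique} nearest KKT point $w_{i(k)}$, and simultaneously $\norm{x_{k+1}-x_k}<\delta$ by Eq.~\eqref{goesqp:eq:52}. A change of index, $i(k)\neq i(k+1)$, would force $\norm{x_{k+1}-x_k}\geqslant \norm{w_{i(k)}-w_{i(k+1)}}-\delta/2-\delta/2\geqslant 2\delta-\delta=\delta$, a contradiction; hence $i(k)$ is eventually constant, say $i_0$, and $x_k\to w_{i_0}$.

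The only genuinely delicate point is the second step — upgrading ``all accumulation points are KKT points'' to ``the iterates approach the KKT set uniformly''; once that and the inter-point separation $\delta$ are in hand, the vanishing step size $\norm{x_{k+1}-x_k}\to 0$ rigidly pins the iterates to a single KKT point, and everything else is continuity and compactness. An alternative, essentially equivalent route is to invoke the classical fact that the accumulation set of a bounded sequence with $\norm{x_{k+1}-x_k}\to 0$ is connected, together with the observation that a connected subset of a finite set is a singleton; I would prefer the explicit stabilization argument above since it is self-contained.
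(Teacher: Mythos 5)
Your proof is correct and follows essentially the same route as the paper: both deduce from Theorem~\ref{appkkt} that all accumulation points lie in a finite set, then combine the vanishing step size $\norm{x_{k+1}-x_k}\to 0$ from Eq.~\eqref{goesqp:eq:52} with the pairwise separation of those points to pin the iterates to a single limit (the paper does this by backward induction inside balls of radius one third of the minimal separation, you by showing the nearest-point index stabilizes --- the same stabilization idea). If anything, your explicit continuity argument upgrading the residual limit~\eqref{goesqp:eq:55} to ``every accumulation point is a KKT point'' is slightly more careful than the paper, which asserts this directly from Theorem~\ref{appkkt}.
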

\begin{proof}
  By Theorem~\ref{appkkt}, any accumulation point $\{x_{k}\}$ must be a KKT point for the problem
  ~\eqref{goesqp:eq:98}.
  Suppose there exist only finite many accumulates of $\{x_k\}$, namely $\bar x_{t},t=1,2,\ldots,T$.

  Let
  \[\epsilon_d = \frac{1}{3} \min_{1\leqslant t_1 < t_2\leqslant T}\norm{\bar x_{t_1}-\bar x_{t_2}},\]
  and
  \[B_t = \{x:\norm{x-\bar x_t}< \epsilon_d, t=1,2,\ldots,T\}.\] Then
  \begin{equation}
    \label{goesqp:eq:57}
    \norm{x-y}> \epsilon_d, \text{ if } x\in B_{t_{1}},y\in B_{t_2},  t_1\neq t_{2}.
  \end{equation}
  Note that there are only finite many $x_{k}$ such that $x_{k}\not\in \cup_{t=1}^{T} B_{t}$, which means there exists a
  number $K_1>0$ such that 
  \begin{equation}
    \label{goesqp:eq:58}
    x_k \in \bigcup_{t=1}^{T}B_{t},\;\forall k>K_{1}.
  \end{equation}
  Since $\{x_{k}\}$ is bounded, there exist a subsequence $x_{k_{l}}$ such that it converges to an accumulation point,
  namely $\bar x_{1}$. Hence, there exists a number $K_{2}>0$ such that
  \begin{equation}
    \label{goesqp:eq:59}
    x_{k_{l}}\in B_{1},\; \forall l>K_2.
  \end{equation}
  By Eq.~\eqref{goesqp:eq:52}, there exists a number $K_{3}>0$ such that
  \begin{equation}
    \label{goesqp:eq:60}
    \norm{x_{k+1}-x_{k}}< \epsilon_d,\; \forall k>K_{3}.
  \end{equation}
  Let $K = \max\{K_1,k_{K_2},K_3\}$.
  For any $k>K$, by Eq.~\eqref{goesqp:eq:59}, there exists a  $k_{l}>K$ such that $x_{k_{l}}\in B_{1}$. We claim that $x_{k_{l}-n}\in B_{1}$ for any
  $k_{l}-n>K$.

  By Eqs.~\eqref{goesqp:eq:57},~\eqref{goesqp:eq:58}, and~\eqref{goesqp:eq:60}, we have
   $x_{k_{l}-1}\in$ if $k_{l}-1>K$. Further, by the mathematical induction,  $x_{k_{l}-n}\in B_{1}$ for any
  $k_{l}-n>K$. If choose $n = k_{l}-k$, then we have
  \begin{equation}
    \label{goesqp:eq:61}
    x_k \in B_1,\; \forall k > K.
  \end{equation}
  It follows that $\bar x_1$ is the only one accumulation point of $\{x_k\}$ and thus it converges.
\end{proof}
Note that any KKT point $\bar x$ is also an eigenvector of $B_{\bar x}$. Therefore, the following condition can guarantee
that there are only  finite many  accumulation points of $\{x_k\}$.
\begin{corollary}
  If the eigenvalues of $B_{x}$ are simple for any $x$, then $\{x_k\}$ generated by Algorithm~\ref{combineNewton} converges.
\end{corollary}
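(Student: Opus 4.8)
The plan is to deduce the corollary from the preceding theorem by showing that the simple-eigenvalue hypothesis forces the optimization problem~\eqref{goesqp:eq:98} to have only finitely many KKT points. Recall from Theorem~\ref{appkkt} that the accumulation points of $\{x_k\}$ are KKT points sharing the common limiting value $\lambda_*=\lim_k 4f(x_k)$; moreover \eqref{goesqp:eq:52} gives $\norm{x_{k+1}-x_k}\to 0$, so on the compact sphere the accumulation set is connected. Hence it is enough to prove that the set of KKT points---equivalently, the set of critical points of the quartic $f$ restricted to the unit sphere $S^{N-1}$, i.e. the unit vectors with $B_xx=\lambda x$ and $\lambda=x^{\intercal}B_xx$---is discrete, for then the connected accumulation set must collapse to a single point.

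First I would convert the simple-eigenvalue hypothesis into analytic structure. Since $B_x$ depends polynomially on $x$ and has $N$ distinct eigenvalues at every point of the compact sphere, analytic perturbation theory provides real-analytic eigenvalue functions $\mu_1(x)>\cdots>\mu_N(x)$ together with globally analytic spectral projectors $P_i(x)=v_i(x)v_i(x)^{\intercal}$ onto the one-dimensional eigenspaces. A KKT point whose direction lies in the $i$-th eigenspace is exactly a zero of the analytic tangential field $\Phi_i(x)=(I-P_i(x))x$, so the KKT set is the finite union $\bigcup_{i=1}^N\{\Phi_i=0\}$ of real-analytic sets. It then suffices to show every zero is isolated.

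To test isolation I would linearize along a hypothetical nontrivial analytic arc $x(t)$ of KKT points through $x_*$, with self-eigenvalue $\mu_{i_0}(x_*)=\lambda_*$. Differentiating $B_{x(t)}x(t)=\mu_{i_0}(x(t))\,x(t)$ and using $\nabla f(x)=B_xx$, $\nabla^2f(x)=3B_x$ yields $\dot B\,x_*=2B_{x_*}\dot x$ and hence, after projecting onto $x_*$, the relations $\dot\mu=0$ and $B_{x_*}\dot x=(\lambda_*/3)\dot x$ with $\dot x\perp x_*$. In other words, a curve of KKT points can exist only if its tangent direction is a second eigenvector of $B_{x_*}$ at eigenvalue $\lambda_*/3$; equivalently, $x_*$ is a nondegenerate critical point of $f|_{S^{N-1}}$ precisely when the reduced operator $(3B_{x_*}-\lambda_*I)$ restricted to $x_*^{\perp}$ is nonsingular, i.e. when $\lambda_*/3$ is not another eigenvalue of $B_{x_*}$.

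This last point is where the main difficulty lies: simple eigenvalues guarantee that the branches $\mu_i(x)$ never collide, but they do \emph{not} by themselves exclude the resonance $\mu_i(x_*)=\mu_{i_0}(x_*)/3$ that would make the reduced Hessian degenerate. I would close this gap by exploiting that the accumulation points of the monotone algorithm are local maximizers, so the second-order necessary condition~\eqref{goesqp:eq:29} yields $3\mu_i\leqslant\lambda_*$ for the eigenvectors orthogonal to $x_*$; upgrading this to the strict second-order sufficient inequality (as used in Lemma~\ref{lem4}) rules out equality, makes each such critical point nondegenerate and therefore isolated, and---by compactness of the sphere---leaves only finitely many KKT points, so the preceding theorem applies. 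The genuinely hard part of a complete argument is thus justifying that the limit points are \emph{strict} local maximizers, equivalently that the iteration cannot accumulate at a degenerate or saddle critical point; everything else is routine differentiation together with a standard analytic-variety argument.
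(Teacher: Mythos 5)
Your route is the paper's own route: reduce to the preceding theorem by arguing that the simple-eigenvalue hypothesis makes the KKT set of \eqref{goesqp:eq:98} finite (equivalently discrete, since it is closed in the compact sphere). The problem is that this key implication is never actually established in your proposal, and you say so yourself. Your linearization is correct: differentiating $B_{x(t)}x(t)=\mu(t)\,x(t)$ along an arc of KKT points forces $\dot\mu=0$ and $B_{x_*}\dot x=(\lambda_*/3)\,\dot x$ with $\dot x\perp x_*$, so a continuum of KKT points requires the resonance that $\lambda_*/3$ be another eigenvalue of $B_{x_*}$. Simplicity of the spectrum does not forbid this, and your proposed repair --- that every accumulation point satisfies the strict second-order inequality $3\mu_i<\lambda_*$ on $x_*^{\perp}$ --- is not available: Theorem~\ref{appkkt} shows only that accumulation points are KKT points; it does not show they are local maximizers, let alone nondegenerate ones, and Lemma~\ref{lem4} \emph{assumes} the second-order sufficient condition rather than deriving it. A monotone method can, a priori, accumulate at a degenerate critical point. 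So the chain breaks exactly where you flagged it: ``simple eigenvalues for all $x$ $\Rightarrow$ isolated KKT points'' is left unproven, and the corollary does not follow from what you wrote.

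For calibration, you should know that the paper does no better. Its entire justification for the corollary is the remark preceding it: a KKT point $\bar x$ is an eigenvector of $B_{\bar x}$, ``therefore'' simplicity guarantees finitely many accumulation points. That inference conflates two different statements: for each \emph{fixed} $x$, simplicity gives exactly $N$ eigendirections of $B_x$; but $B_x$ varies with $x$, so this says nothing about how many unit vectors are eigenvectors of their own $B_x$ --- ruling out a continuum is precisely the resonance problem you isolated. (Indeed, rotationally invariant quartics produce whole circles of KKT points along which $B_x$ has simple spectrum containing both $\lambda_*$ and $\lambda_*/3$, with simplicity failing only elsewhere on the sphere; so any correct proof must exploit the hypothesis globally, not locally.) In short, your attempt mirrors the paper's strategy but is more candid: it exposes, rather than hides, the missing step. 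Judged as a proof, however, it is incomplete at exactly that step.
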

We have proved the global convergence of $\{x_{k}\}$ produced by Algorithm~\ref{combineNewton}. Next,  we finish this section by
proving the locally quadratic convergence.
\begin{thm}
  Let $x_{*}$ be a local maximizer of the optimization problem~\eqref{goesqp:eq:98} at which the second order sufficient conditions are
  satisfied. Then
  Algorithm~\ref{combineNewton} converges quadratically near $x_{*}$.
\end{thm}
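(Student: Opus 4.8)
The plan is to use that, as shown in Eqs.~\eqref{symmetrictensor:eq:27}--\eqref{symmetrictensor:eq:30}, the SQP step is precisely Newton's method applied to the KKT system $F(x,\lambda)=0$. Local quadratic convergence of Newton's iteration is standard once one knows that the Jacobian $F'$ of Eq.~\eqref{symmetrictensor:eq:28} is nonsingular at the solution $(x_*,\lambda_*)$ and Lipschitz nearby. I would therefore split the argument into four pieces: (i) nonsingularity of $F'(x_*,\lambda_*)$; (ii) the quadratic estimate for the raw Newton iterate $x_\nt$; (iii) control of the two modifications Algorithm~\ref{combineNewton} makes to this raw step, namely recomputing the multiplier as $\lambda_k=\nabla f(x_k)^{\intercal}x_k$ and normalising $x_{\sqp}=x_\nt/\norm{x_\nt}$; and (iv) control of the final line search over $\mathrm{span}\{x_{\sqp},x_{\text{\tiny PM}}\}$.

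For (i), the Jacobian at the solution is the bordered matrix of Eq.~\eqref{symmetrictensor:eq:28} with $(x_k,\lambda_k)$ replaced by $(x_*,\lambda_*)$; write $H=\nabla^2 f(x_*)-\lambda_*\I$ for its top-left block. If a vector $(d,\mu)$ lies in its kernel, then $x_*^{\intercal}d=0$ and $Hd=\mu x_*$; left-multiplying the latter by $d^{\intercal}$ gives $d^{\intercal}Hd=0$. The second-order sufficient condition Eq.~\eqref{goesqp:eq:30} makes $H$ negative definite on the tangent space $\{d:d^{\intercal}x_*=0\}$, forcing $d=0$ and then $\mu=0$; here $x_*\neq 0$ serves as the constraint qualification. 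For (ii), $F$ is smooth (since $f$ is a quartic polynomial) and $\nabla^2 f$ is Lipschitz (as proved above), so the standard local convergence theorem for Newton's method yields a constant $C$ with $\norm{(x_\nt,\lambda_\nt)-(x_*,\lambda_*)}\le C\norm{(x_k,\lambda_k)-(x_*,\lambda_*)}^2$ whenever $(x_k,\lambda_k)$ is close to $(x_*,\lambda_*)$.

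For (iii), set $\delta=\norm{x_k-x_*}$. The crucial observation is that, because $f$ is homogeneous of degree $4$, Euler's identity gives $\lambda_k=\nabla f(x_k)^{\intercal}x_k=4f(x_k)$, so by Lemma~\ref{lem7} one has $|\lambda_k-\lambda_*|=4|f(x_k)-f(x_*)|\le 4a_2\delta^2$. Thus the recomputed multiplier is already accurate to second order, and feeding $(x_k,\lambda_k)$ into (ii) gives $\norm{x_\nt-x_*}\le C(\delta+4a_2\delta^2)^2=O(\delta^2)$. Since $x_\nt=x_*+O(\delta^2)$ and $\norm{x_*}=1$, the normalisation perturbs $x_\nt$ by only $O(\delta^2)$, so $\norm{x_{\sqp}-x_*}=O(\delta^2)$.

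For (iv), the line search returns $x_{k+1}$ with $f(x_{k+1})\ge f(x_{\sqp})$ because $x_{\sqp}$ already lies in the search set. Lemma~\ref{lem7} gives $f(x_*)-f(x_{\sqp})\le a_2\norm{x_{\sqp}-x_*}^2=O(\delta^4)$, while the second-order sufficient condition supplies a matching quadratic lower bound $f(x_*)-f(x)\ge c\norm{x-x_*}^2$ for unit $x$ near $x_*$: this follows from the Taylor expansion Eq.~\eqref{goesqp:eq:35}, Lemma~\ref{lem4}, and the Lipschitz continuity of $\nabla^2 f$, using that $x-x_*$ is tangent to the sphere up to an $O(\norm{x-x_*}^2)$ error. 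Combining, $c\norm{x_{k+1}-x_*}^2\le f(x_*)-f(x_{k+1})\le f(x_*)-f(x_{\sqp})=O(\delta^4)$, so $\norm{x_{k+1}-x_*}=O(\delta^2)$. The main obstacle is exactly here: the quadratic lower bound holds only in a fixed neighborhood of $x_*$, so one must rule out that the maximizer over the whole unit circle of $\mathrm{span}\{x_{\sqp},x_{\text{\tiny PM}}\}$ jumps to a far point of the circle with comparable $f$-value. I would close this using the global convergence established in the previous theorems: once the tail of $\{x_k\}$ enters a sufficiently small neighborhood of the limit $x_*$, both $x_{\sqp}$ and $x_{\text{\tiny PM}}$ lie near $x_*$ (the latter because $x_{\text{\tiny PM}}=\nabla h(x_k)/\norm{\nabla h(x_k)}$ depends continuously on $x_k$ and equals $x_*$ there by Lemma~\ref{notzero}), which confines the relevant maximizer to that neighborhood and legitimises the bound above.
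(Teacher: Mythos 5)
Your proposal is correct and follows essentially the same route as the paper's proof: the standard local quadratic estimate for Newton's method applied to the KKT system, elimination of the multiplier error via $\lambda_k=\nabla f(x_k)^{\intercal}x_k=4f(x_k)$ together with Lemma~\ref{lem7}, the quadratic growth bound near $x_*$ obtained from Lemma~\ref{lem4} plus the Lipschitz continuity of $\nabla^2 f$, and the final sandwich argument $c\norm{x_{k+1}-x_*}^2\leqslant f(x_*)-f(x_{k+1})\leqslant f(x_*)-f(x_{\sqp})\leqslant a_2\norm{x_{\sqp}-x_*}^2$ exploiting that the line search can only improve on $x_{\sqp}$. The differences are refinements rather than a different route: you verify nonsingularity of $F'(x_*,\lambda_*)$ (which the paper leaves implicit when invoking the Newton convergence theorem), you replace the paper's geometric inequality $\norm{x_{\sqp}-x_*}\leqslant\norm{x_\nt-x_*}$ by a factor-of-two perturbation bound for the normalization, and you explicitly flag the neighborhood-confinement issue for the line-search maximizer — a point the paper's chain of inequalities silently assumes when it applies its local lower bound to $x_{k+1}$.
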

\begin{proof}
  By the standard convergence theorem for Newton's method applied to a smooth system of equations, there exists an $\varepsilon>0$,
  and a $q>0$ such that for $\norm{x_{k} - x_{*}} +|\lambda_{k}-\lambda_{*}|\leqslant \varepsilon $,
  \begin{equation}
    \label{goesqp:eq:8}
    \begin{split}
      \norm{x_\nt - x_{*}} +|\lambda_{\nt}-\lambda_{*}| 
     \leqslant q(\norm{x_k-x_{*}}^2 + |\lambda_k - \lambda_{*}|^2).
    \end{split}
  \end{equation}
   On the other hand by Lemma~\ref{lem7},
  \begin{equation}
    \label{goesqp:eq:13}
\abs{    f(x_{*}) - f(x)} < a_2 \norm{x-x_{*}}^2.
\end{equation}
Note that $\lambda_{k} = \nabla_f(x_k)^{\intercal} x_k $, then
  \begin{equation}
    \label{goesqp:eq:16}
    \begin{split}
      |    \lambda_{k} - \lambda_{*}| &=  | x_k^{\intercal}B(x_k)x_k  -   x_{*}^{\intercal}B(x_{*})x_{*} |\\
     &=  4 |f(x_{*}) - f(x_k)|\\
      &\leqslant 4 a_2\norm{x_{*} - x_k}^2
    \end{split}
  \end{equation}
  Hence, Eq.~\eqref{goesqp:eq:8} holds if
  \begin{equation}
    \label{goesqp:eq:128}
    \norm{x_{k}-x_{*}}+4a_2\norm{x-x_{*}}^{2} \leqslant \varepsilon.
  \end{equation}
  Assume $\varepsilon <1$, then $\norm{x_{k}-x_{*}}<1$. Forward, Eq.~\ref{goesqp:eq:8} satisfies if
  \[\norm{x_k-x_{*}}\leqslant
  \frac{\varepsilon}{1+4a_2}.\]

Moreover, by Eq.~\eqref{goesqp:eq:8} can be reformulated as
\begin{equation}
  \label{goesqp:eq:129}
  \begin{split}
    \norm{x_\nt-x_{*}} \leqslant &(q+\frac{  a_2^{2}}{{ (1+4a_2)}^{2} }   \varepsilon )\norm{x_k-x_{*}}^{2}\\
    \leqslant &r\norm{x_{k}-x_{*}}^{2},
  \end{split}
\end{equation}
where $r = q+\frac{a_2^{2}}{{(1+4a_2)^2}}\varepsilon $.

We thus can get rid of $\lambda_{k}$ in our analysis.  Consider the Lagrange function
$\Lag(x,\lambda) = f(x) -\frac{\lambda}{2}( x^\intercal x-1)$, we have
  \begin{equation}
    \label{goesqp:eq:9}
    f(x) = f(x_{*}) + \frac{1}{2}(x-x_{*})^{\intercal}(\nabla^2f(\xi)-\lambda_{*}\I)(x-x_{*}),
  \end{equation}
  where $\xi = tx +(1-t)x_{*}$, $0<t<1$.

  By Lemma~\ref{lem4}, we have
  \begin{equation}
    \label{goesqp:eq:11}
    \begin{split}
     \abs{ f(x_{*}) - f(x)} 
      &=\frac{1}{2}\abs{(x-x_{*})^{\intercal}(\nabla^2f(\xi)-\lambda_{*}\I)(x-x_{*})}\\
      &\geqslant  \frac{a_1}{2} \norm{x-x_{*}}^2 - \frac{1}{2}\abs{(x-x_{*})^{\intercal}(\nabla^2 f(x_{*}) -\nabla^{2}f(\xi))(x-x_{*})}\\
      & \geqslant  \frac{1}{2}\norm{x-x_{*}}^2 (a_{1} - \norm{\nabla^{2} f(x_{*}) -\nabla^{2} f(\xi)}_{2} ).
    \end{split}
  \end{equation}
  By the Lipschitz condition~\eqref{goesqp:eq:124},
  \begin{equation}
    \label{goesqp:eq:127}
    \begin{split}
    \abs{ f(x_{*}) - f(x)}& \geqslant \frac{1}{2}(a_{1}-6\norm{\eta}_{2}\norm{x-\xi})\norm{x-x_{*}}^{2}\\
  &=  \frac{1}{2}(a_{1}-6(1-t)\norm{\eta}_{2}\norm{x-x_{*}})\norm{x-x_{*}}^{2}\\
&\geqslant  \frac{1}{2}(a_{1}-6\norm{\eta}_{2}\norm{x-x_{*}})\norm{x-x_{*}}^{2}.
  \end{split}
  \end{equation}
  When $x$ is close to $x_{*}$, say $\norm{x-x_{*}}< \frac{1}{12\norm{\eta}_{2}}$,
  \begin{equation}
    \abs{f(x_{*}) - f(x)} >  \frac{a_1}{4} \norm{x-x_{*}}^2.
  \end{equation}
  We can further assume that $f(x_{*})\geqslant f(x)$ when $x$ is close to $x_{*}$. Hence,
    \begin{equation}
    \label{goesqp:eq:12}
   f(x_{*}) - f(x) >  \frac{a_1}{4} \norm{x-x_{*}}^2.
  \end{equation}

  By Eq.~\ref{goesqp:eq:117}
  \begin{equation}
    x_{\sqp} = \frac{x_\nt}{\norm{x_\nt}}.
  \end{equation}
      Note that in Eq.~\ref{goesqp:eq:130}, $p_k^{\intercal}x_{k}=0$, then $\norm{x_\nt}=\norm{x_{k}+p_{k}}>1$. Therefore,
      \begin{equation}
        \label{xsqp}\norm{x_{\sqp} - x_{*}} \leqslant \norm{x_\nt - x_{*}}.
      \end{equation}
      For simplicity, the inequality Eq.~\eqref{xsqp} is illustrated in appendix \ref{apdxsqp}.
      
      Note that $f(x_{k+1})\geqslant f(x_\sqp)$, by Eq.~\ref{goesqp:eq:12}  we have
      \begin{equation}
    \label{goesqp:eq:18}
    \begin{split}
      \frac{a_1}{4} \norm{x_{k+1}-x_{*}}^{2}&\leqslant f(x_{*}) - f(x_{k+1}) \\
      & \leqslant f(x_{*}) - f(x_\sqp)\\
      & \leqslant a_2\norm{x_{\sqp}-x_{*}}^{2}\\
      &\leqslant a_2r^2\norm{x_{k}-x_{*}}^2.
    \end{split}
  \end{equation}
  Forward,, we have
  \begin{equation}
    \label{goesqp:eq:19}
    \begin{split}
      \norm{x_{k+1}-x_{*}}\leqslant
      2r\sqrt{\frac{a_2}{a_1}} \norm{x_k - x_*}^2.
    \end{split}
  \end{equation}
  Therefore, $x_k$ converges quadratically near $x_{*}$.
\end{proof}

\section{Numerical examples }
\label{numerical}
~In this section, we implement some numerical experiments. We use Algorithm~\ref{combineNewton} to solve the
sub-problem~\eqref{goesqp:eq:98} per iteration in Algorithm~\ref{mainalgorithm} and compare the convergence behaviour  on the following examples: 
\begin{example}
  \label{ex1}
  The completely symmetric state $\eta$ generated by
  \begin{equation}
    \label{goesqp:eq:1}
    \eta = \sum_{i=0}^{L-1}p_i\ket{x_i,x_i}\bra{x_i,x_i},
  \end{equation}
  where $L>N$, $p_i  >0$, and  $x_i\in\real^{N},\norm{x_i}=1,i=0,1,\ldots,L$. And $L$ is chosen to be $4N$ in the
  numerical experiments.
\end{example}

\begin{example}
  \label{ex2}
  The completely symmetric matrix, which may not be positive, defined by
  \begin{equation}
    \label{goesqp:eq:2}
    \eta = \sum_{i=0}^{L-1} p_i\ket{x_i,x_{i}}\bra{x_{i},x_i},
  \end{equation}
  where $L>N$, $p_{i}\in[-1,1]$, and $x_i\in\real^{N},\norm{x_i}=1,i=0,1,\ldots L$. And $L$ is chosen to be $4N$ in the
  numerical experiments.
\end{example}
\begin{example}
  \label{ex3}
  The completely symmetric matrix $\eta$ defined by
  \begin{equation}
    \label{S-sep-v7:eq:1}
    \eta = \sum_{i,j,k,l=0}^{N-1}\eta_{ijkl}\ket{i,k}\bra{j,l},
  \end{equation}
  where 
  \begin{equation}
    \label{goesqp:eq:15}
    \eta_{ijkl} = \frac{1}{4N^{3}}(i+j+k+l),\;0\leqslant i,j,k,l\leqslant N-1.
  \end{equation}
  
\end{example}
\begin{example}
~\label{ex4}
  The completely symmetric matrix $\eta$ defined by
   \begin{equation}
    \label{S-sep-v7:eq:01}
    \eta = \sum_{i,j,k,l=0}^{N-1}\eta_{ijkl}\ket{i,k}\bra{j,l},
  \end{equation}
  where 
  \begin{equation}
    \label{goesqp:eq:3}
    \eta_{ijkl} = \frac{1}{N^{2}}\sin(\frac{i+j+k+l}{4N}),\;0\leqslant i,j,k,l\leqslant N-1.
  \end{equation}
\end{example}

 We test these examples for $N = 4,8,\ldots,512$, where $N$ is the dimension of $x$.
The examples are tested on the computer with 12-core CPU
2.40GHz   
and they are implemented on the MATLAB version 9.4.0.813654 (R2018a).

Some parameters of Algorithm~\ref{combineNewton}  are set as follows:
\begin{table}[H]
  \centering
  \begin{tabularx}{0.9\linewidth}{XXX}
    \toprule
      Variables &  Value & Meaning \\\midrule
      $\epsilon_{3}$ &  $10^{-12}$ & Tolerance for termination \\
      $N_{\max}$& 500& Maximal iteration number\\
       $x_0$&  rand(N,1)& Initial  vector\\\bottomrule
    \end{tabularx}
  \caption{Parameters in algorithms~\ref{combineNewton}}
\end{table}
The following figures show the global and local convergence of Algorithm~\ref{combineNewton}.
\vskip-20pt
  \noindent
     \begin{figure}[H]
       \begin{subfigure}{0.495\linewidth}
    \includegraphics[width=\linewidth]{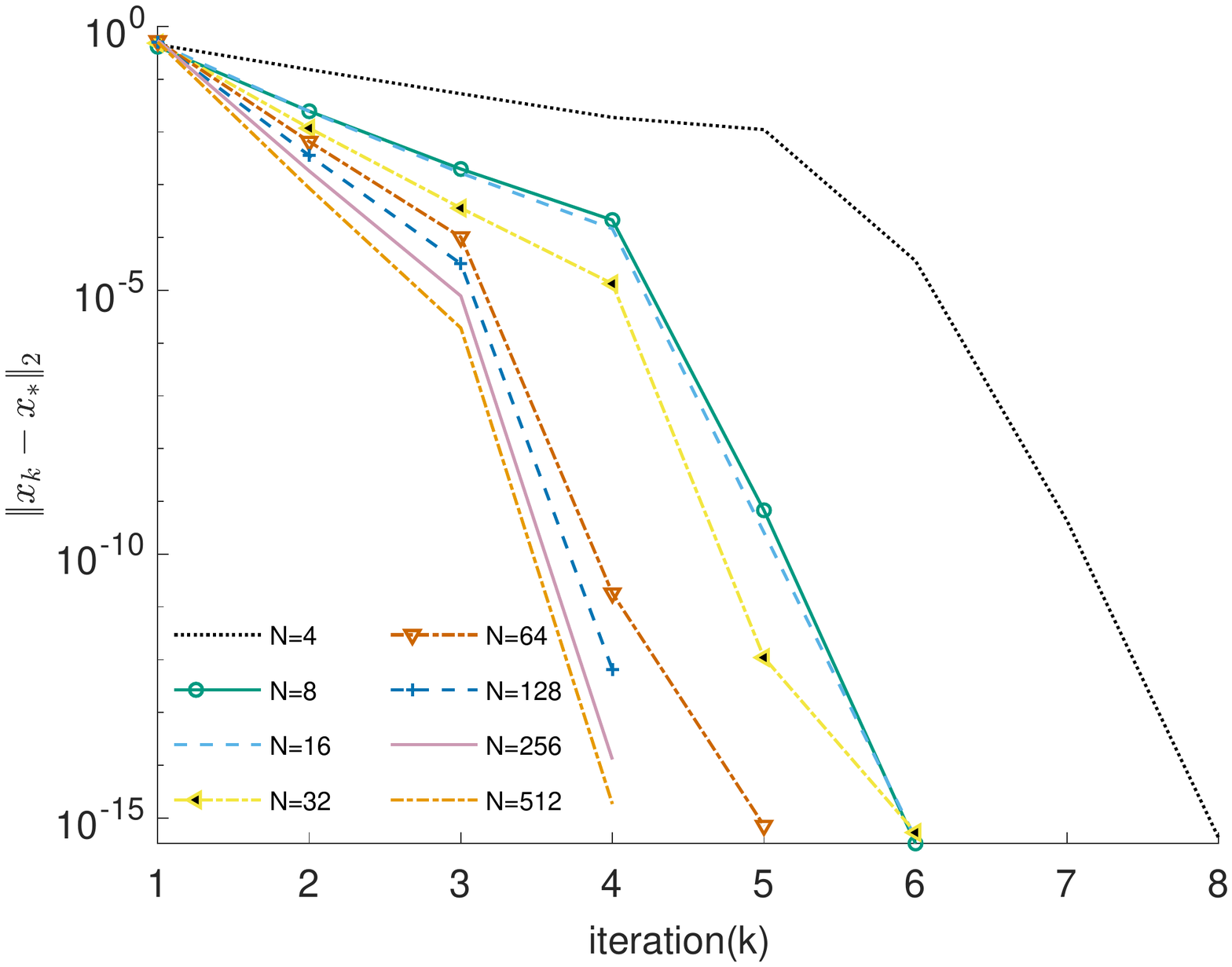}
    \caption{ Convergence of $x_{k}$}
  \end{subfigure}
  \begin{subfigure}{0.495\linewidth}
    \includegraphics[width=\linewidth]{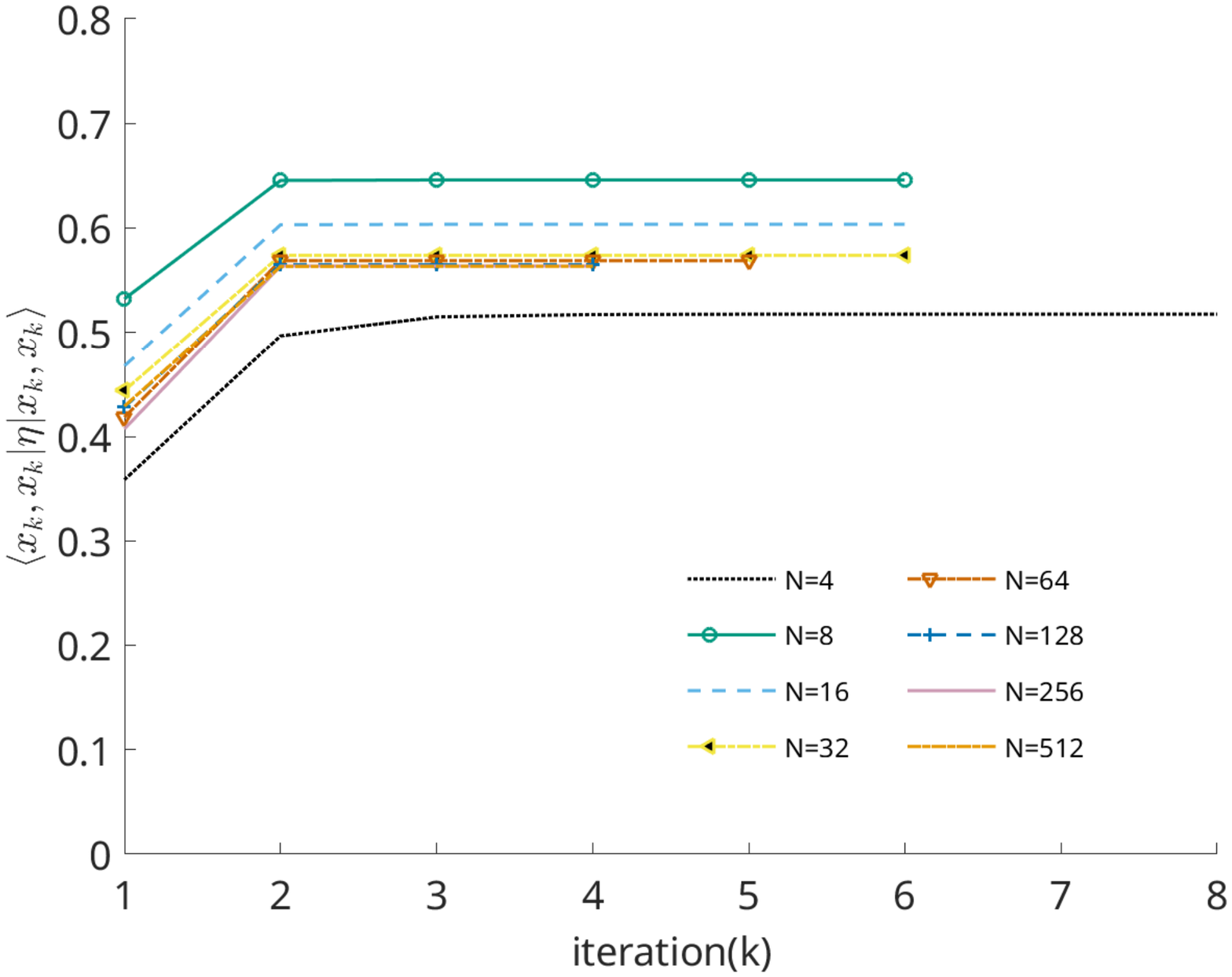}
    \caption{Monotone increasing of $\bra{x_{k},x_{k}}\eta\ket{x_k,x_{k}}$}
   \end{subfigure}
   \caption{Convergence behaviour of Alg.~\ref{combineNewton} for Example~\ref{ex1}}
 \end{figure}
\vskip-70pt
 \begin{figure}[H]
   \centering
   \begin{subfigure}{0.495\linewidth}
           \includegraphics[width=\linewidth]{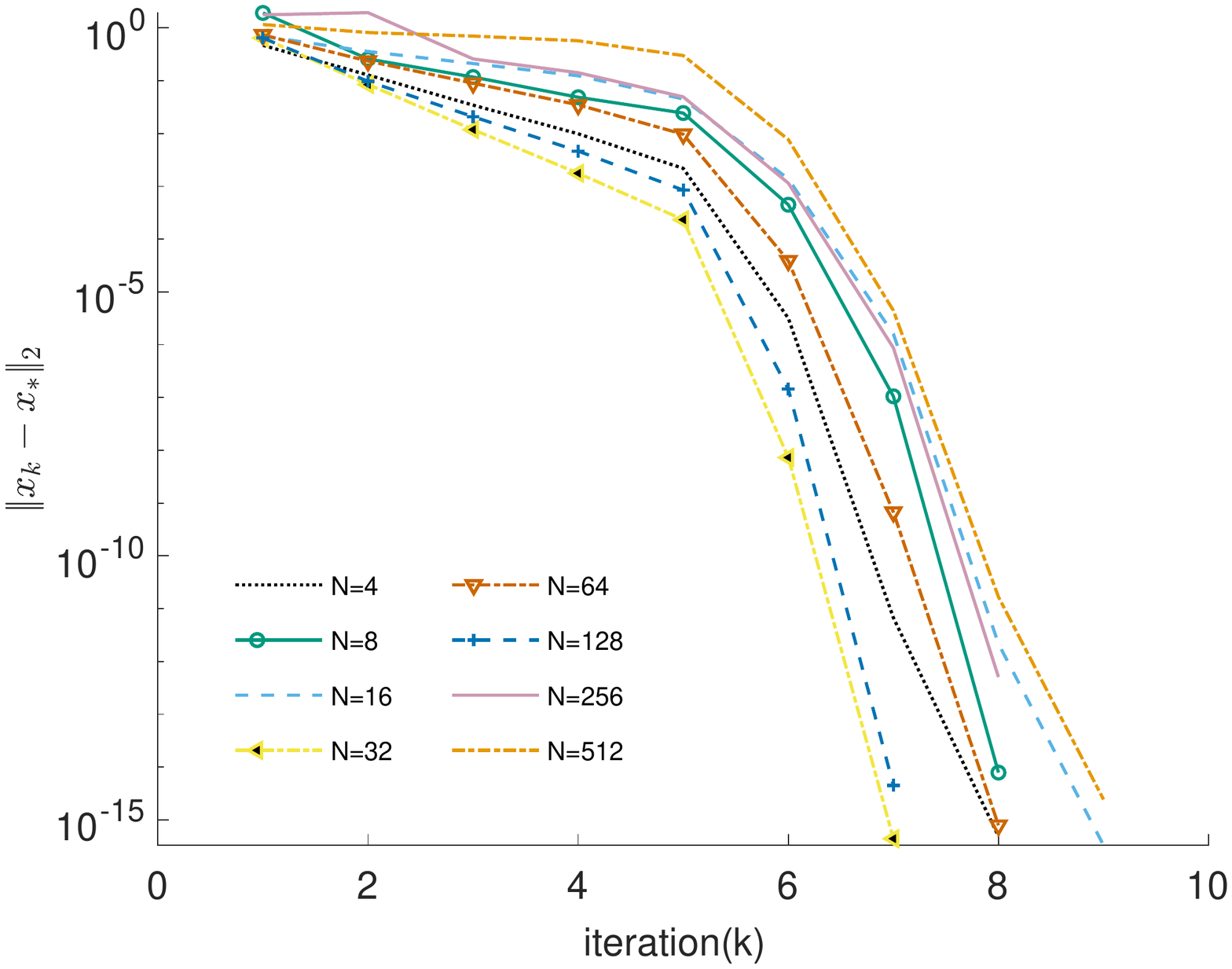}
      \caption{   Convergence of $x_{k}$ }

    \end{subfigure}
    \begin{subfigure}{0.495\linewidth}
            \includegraphics[width=\linewidth]{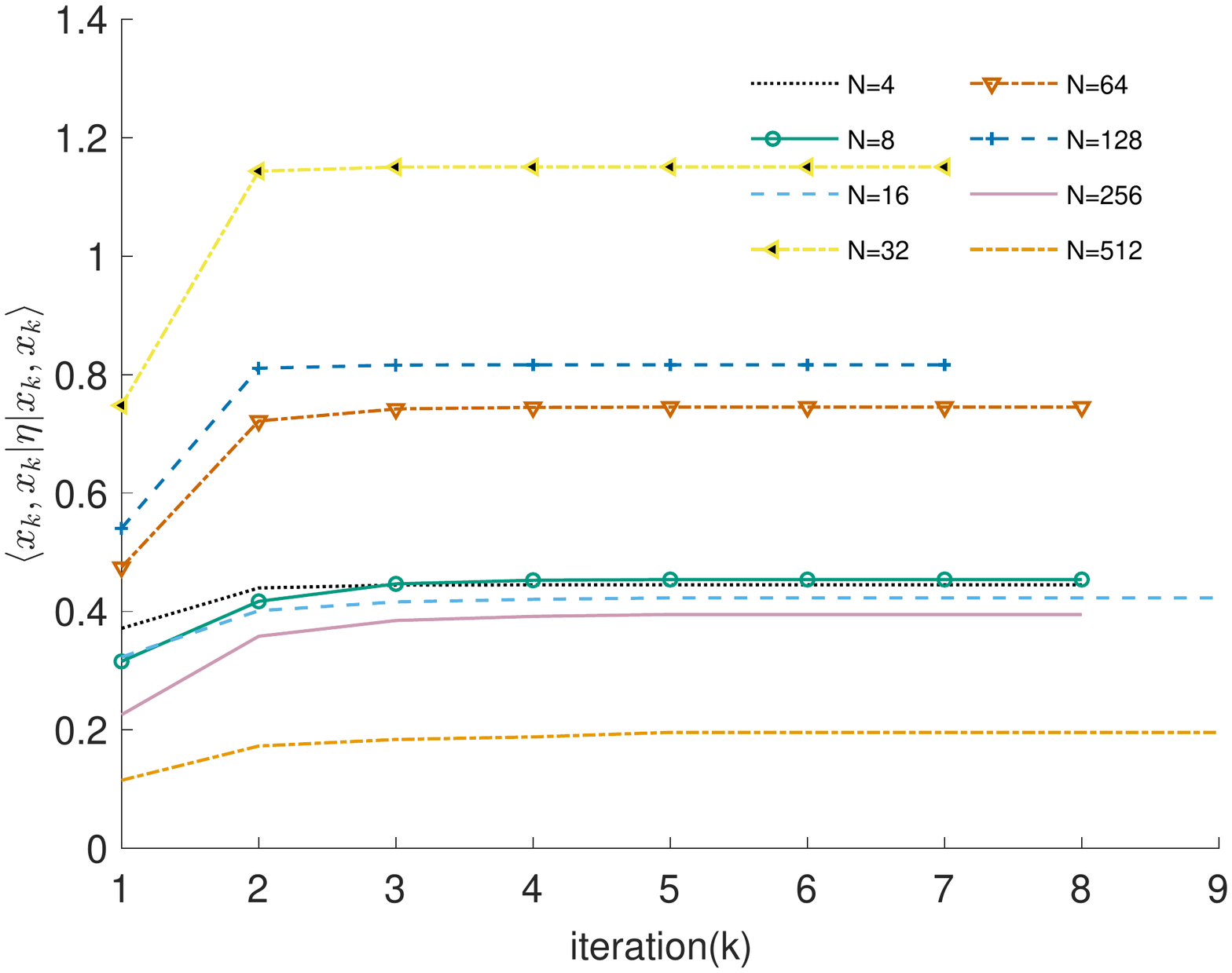}   
    \caption{Monotone increasing of $\bra{x_{k},x_{k}}\eta\ket{x_k,x_{k}}$}
    \end{subfigure}
   \caption{Convergence behaviour of Alg.~\ref{combineNewton} for Example~\ref{ex2}}
    \end{figure}\vskip-102pt

    \clearpage
    \noindent
    ~
    \vskip-60pt
    \begin{figure}[H]
      \label{figex3}
    \centering
    \begin{subfigure}{0.495\linewidth}
      \includegraphics[width=\linewidth]{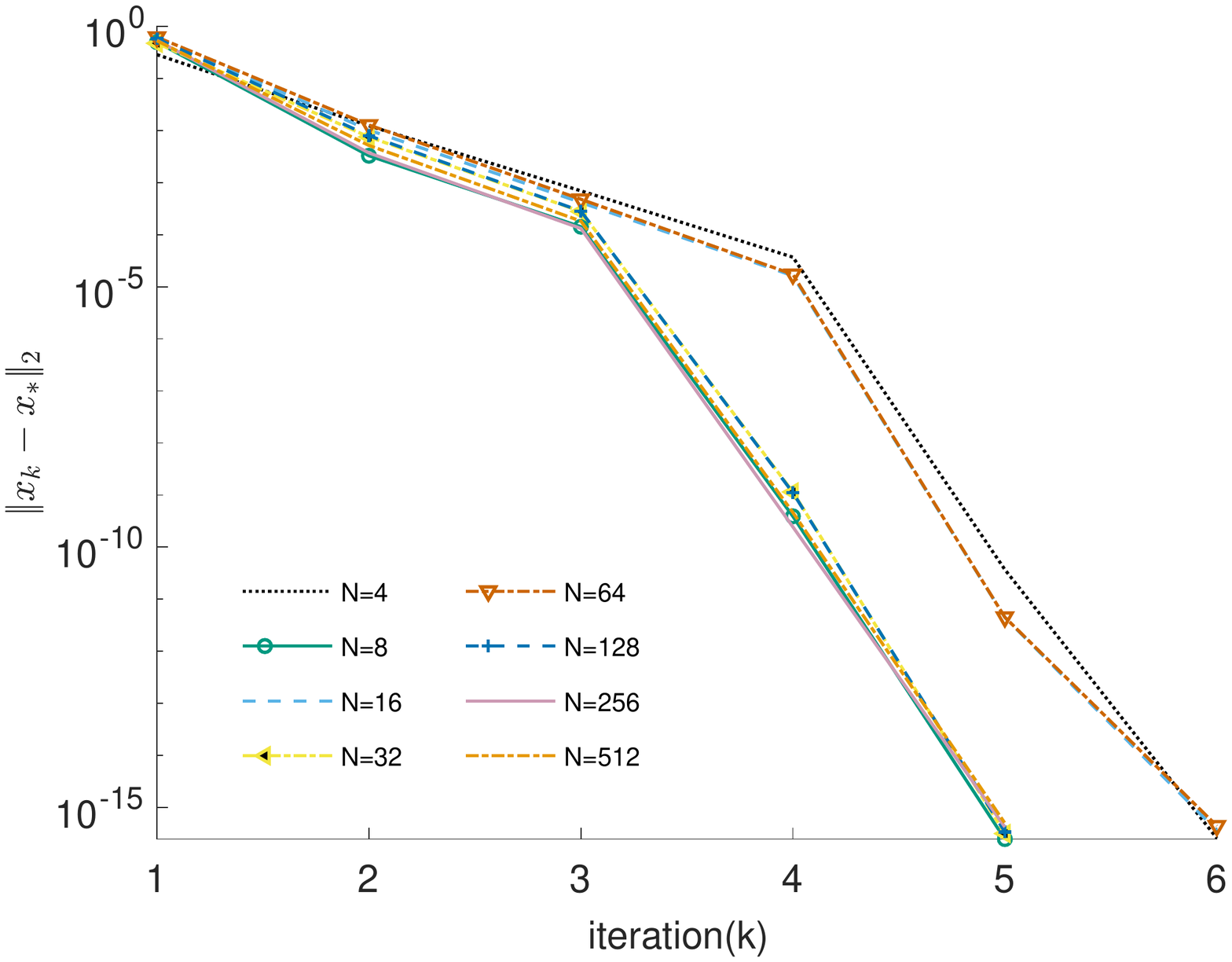}
      \caption{ Convergence of $x_{k}$}
    \end{subfigure}
    \begin{subfigure}{0.495\linewidth}
      \includegraphics[width=\linewidth]{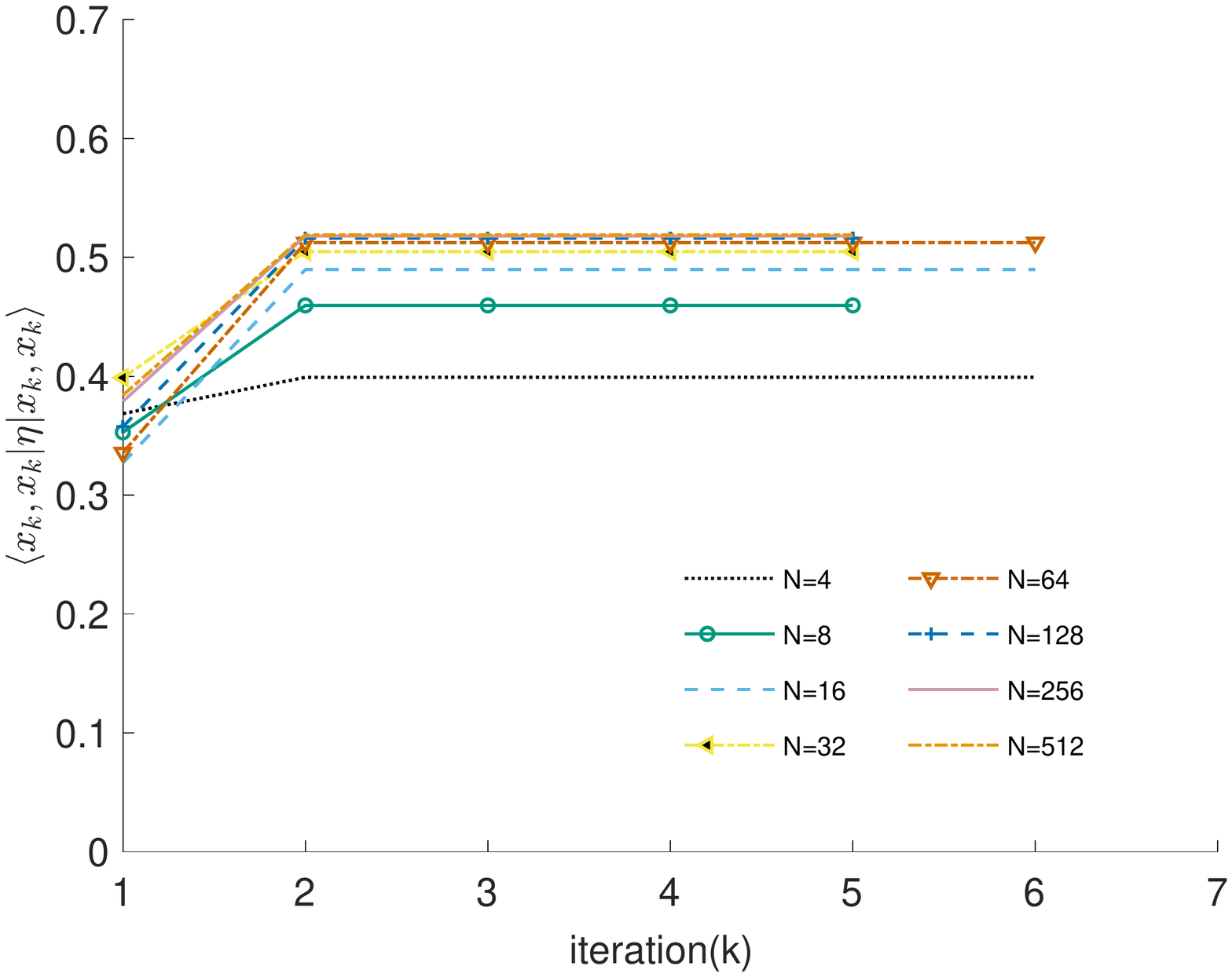}
      \caption{Monotone increasing of $\bra{x_{k},x_{k}}\eta\ket{x_k,x_{k}}$}
      \end{subfigure}
    \caption{Convergence behaviour of Alg.~\ref{combineNewton} for Example~\ref{ex3}}
  \end{figure}
  \vskip-60pt
  \begin{figure}[H]
    \centering
    \begin{subfigure}{0.495\linewidth}
      \includegraphics[width=\linewidth]{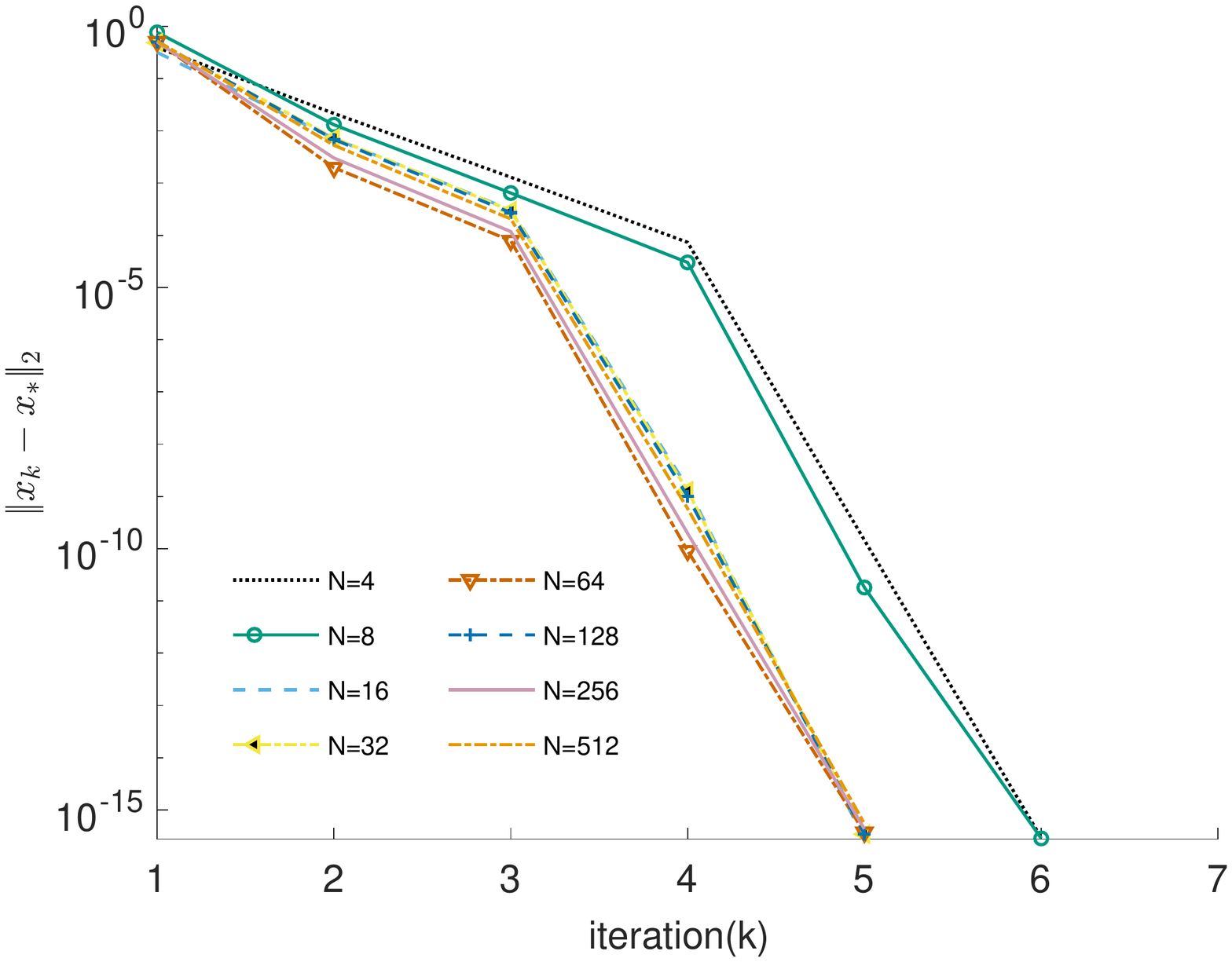}
      \caption{ Convergence of $x_{k}$}
      \end{subfigure}
    \begin{subfigure}{0.495\linewidth}
      \includegraphics[width=\linewidth]{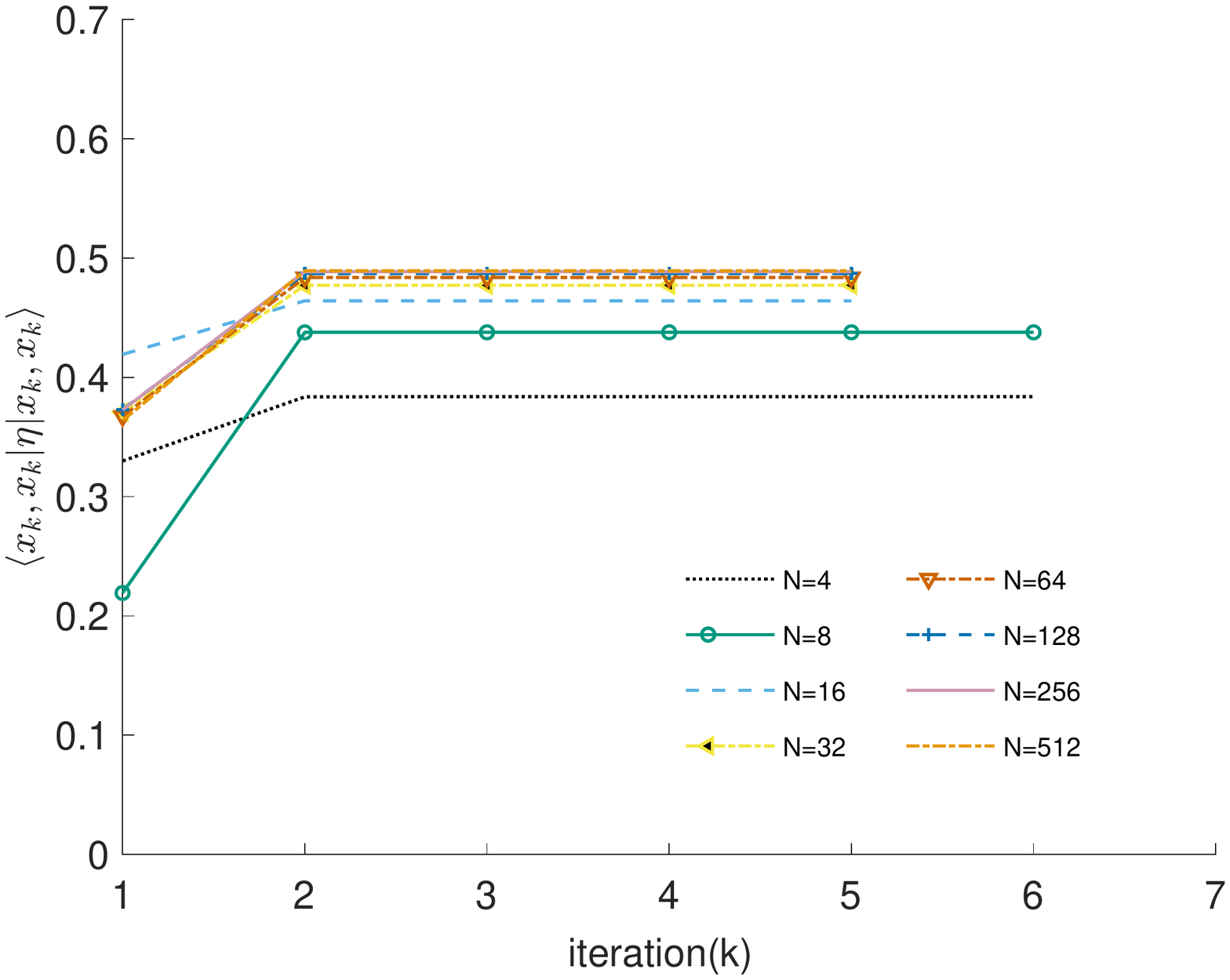}
      \caption{Monotone increasing of $\bra{x_{k},x_{k}}\eta\ket{x_k,x_{k}}$}
    \end{subfigure}
    \caption{Convergence behaviour of Alg.~\ref{combineNewton}  for Example~\ref{ex4}}
  \end{figure}
 
 From these figures, we can see that the Algorithm~\ref{combineNewton} converges globally and locally quadratically. And the
 iteration number of Algorithm~\ref{combineNewton} will not increase with the dimension of matrix $\eta$.  Besides,
 $\bra{x_{k},x_{k}}\eta\ket{x_k,x_{k}}$ is monotonically increasing, which converges to a local maximum of the
 optimization problem~\eqref{goesqp:eq:98}. In practice, in  order to save the memory when the matrix size is very
 large,
 we don not
 generate $\eta$. For Example \ref{ex1} and \ref{ex2}, only $\lambda_{i},x_{i}$ is saved. Then the gradient vector and Hessian matrix can
 be calculated directly by these gradients. For Example \ref{ex3} and \ref{ex4}, we have explicit formula to calculate the gradient vector
 and Hessian matrix.
 One problem in this algorithm is that we cannot guarantee that the local
 maximum found is the  global maximum. It requires further research to propose a global optimization algorithm.

 We also test Algorithm~\ref{mainalgorithm} by choosing $\rho$ as $\eta$ as  in Example \ref{ex1},\ref{ex2},\ref{ex3}, and \ref{ex4}.
  Note that, in Example~\ref{ex1}, the closest S-separable state of $\rho$ 
  is $\rho$ itself, the distance $\norm{\rho - \proj(\rho)}_{F}$ is thus $0$. For Example \ref{ex2},\ref{ex3}, abd \ref{ex4}, the distance
  $\norm{\rho-\proj(\rho)}_{F}$ is not 0, which implies that these states are not S-separable.
  We compare the different results 
  for $N=4,8,16,\ldots,512$.

Some parameters for Algorithm~\ref{mainalgorithm} are set as follows:
\begin{table}[H]
  \centering
  \begin{tabularx}{0.9\linewidth}{XXX}
    \toprule
      Variables &  Value & Meaning \\\midrule
      $\epsilon_1$ &  $10^{-12}$ & Tolerance in Alg.~\ref{mainalgorithm} \\
      $\epsilon_3$ & $10^{-12}$ & Tolerance in Alg.~\ref{combineNewton}. \\
      $N_{\max}$& 1000& Maximal iteration number in Alg.~\ref{mainalgorithm}\\
      $\tilde N_{\max}$ & 500 & Maximal iteration number in Alg.~\ref{combineNewton}\\
      $x_0$&  rand(N,1)& Initial  vector in Alg.~\ref{combineNewton}\\
      $\rho_{0}$& zeros(N*N) & Initial candidate  in Alg.~\ref{mainalgorithm}\\\bottomrule
    \end{tabularx}
  \caption{Parameters in Algorithm~\ref{mainalgorithm}}
\end{table}

  ~
 \vskip-80pt\noindent
\begin{minipage}[t]{0.5\linewidth}
     \begin{figure}[H]
        \includegraphics[width=0.9\linewidth]{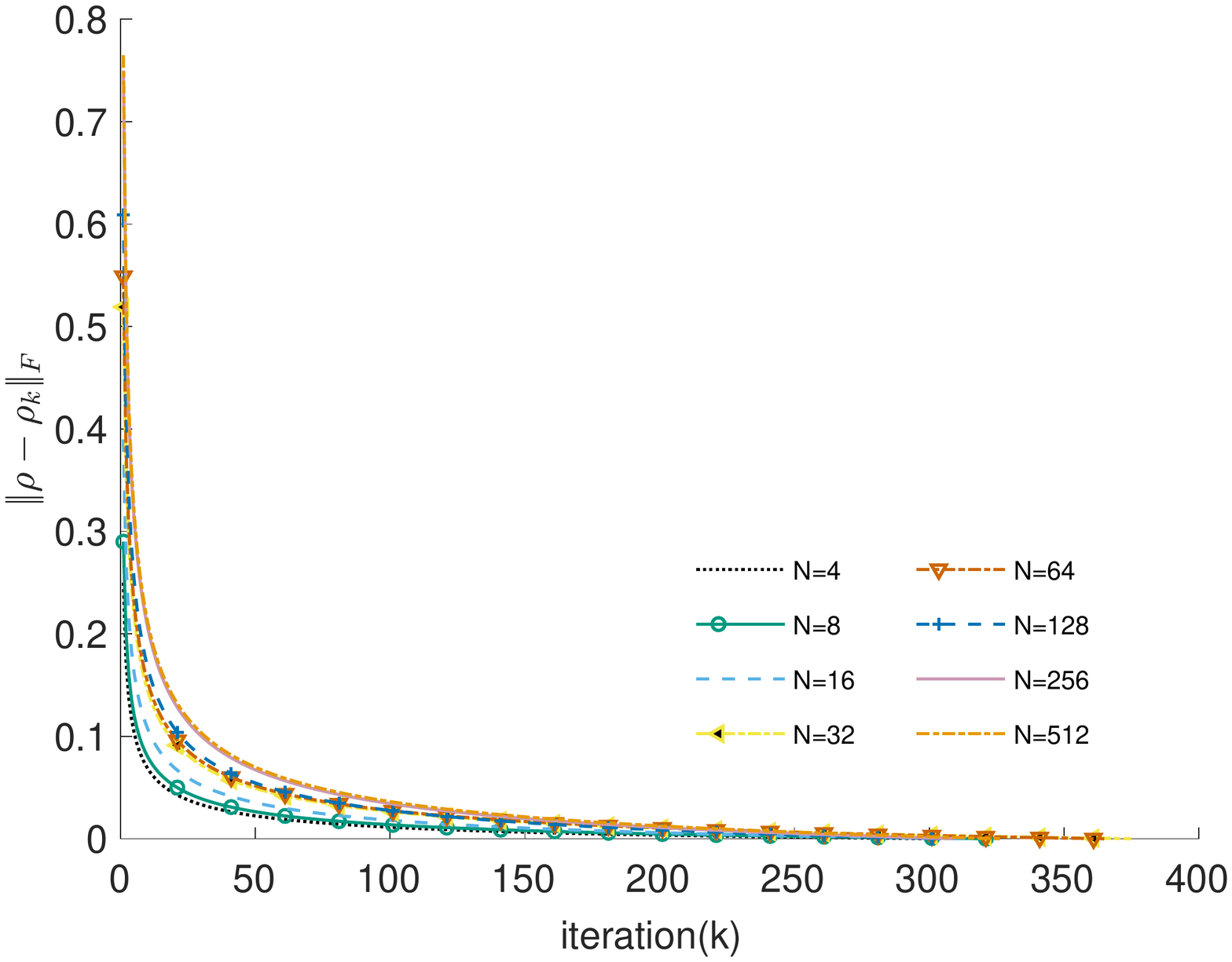}
 \caption{$\norm{\rho-\rho_{k}}_{F}$ in Example~\ref{ex1}}
\end{figure}
 \end{minipage}
  \begin{minipage}[t]{0.5\linewidth}
     \begin{figure}[H]
    \includegraphics[width=0.9\linewidth]{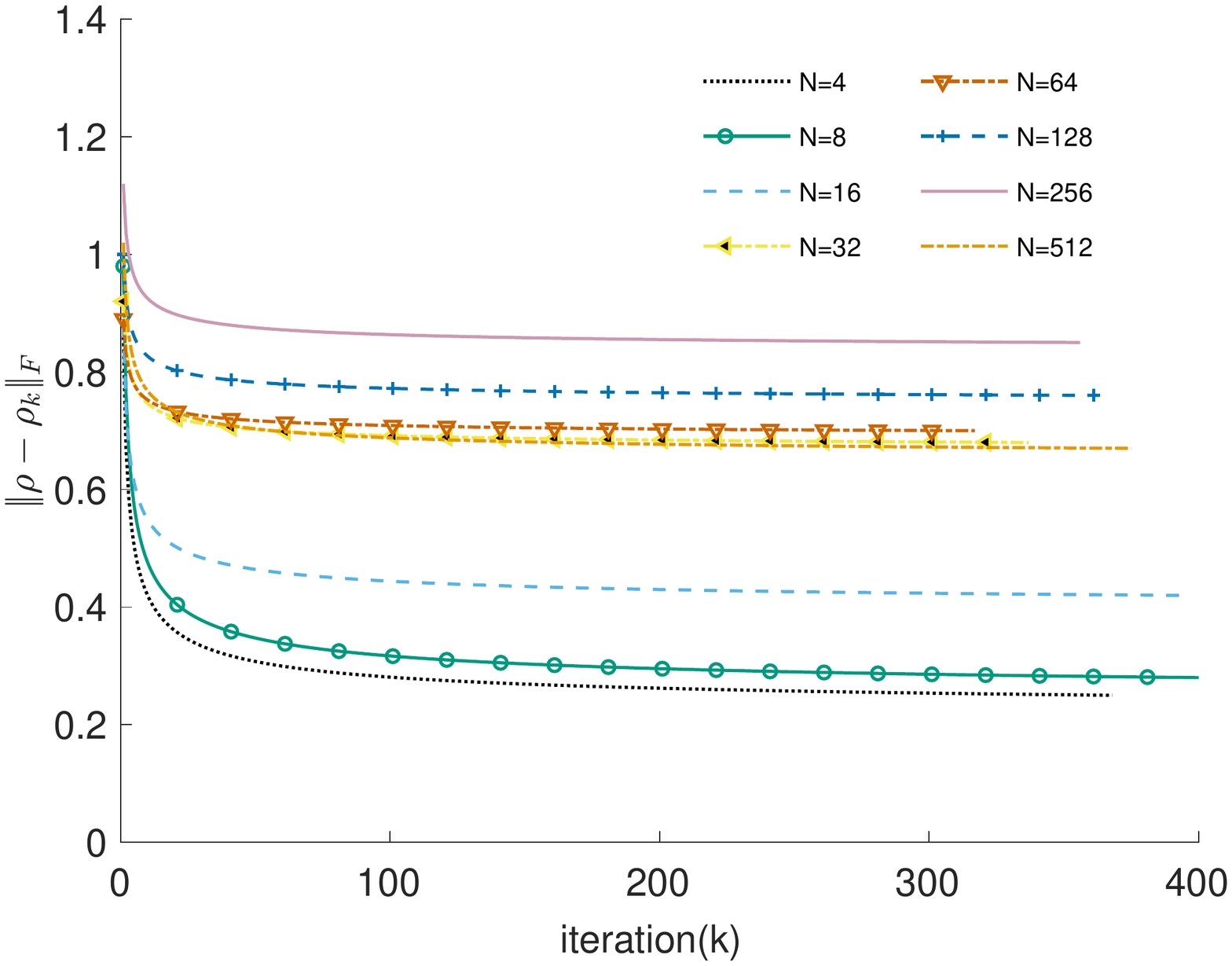}
   \caption{$\norm{\rho-\rho_{k}}_{F}$ in Example~\ref{ex2} }
 \end{figure}
 \end{minipage}
\vskip-60pt \noindent
  \begin{minipage}[t]{0.5\linewidth}
    \begin{figure}[H]
      \centering
      \includegraphics[width=0.9\linewidth]{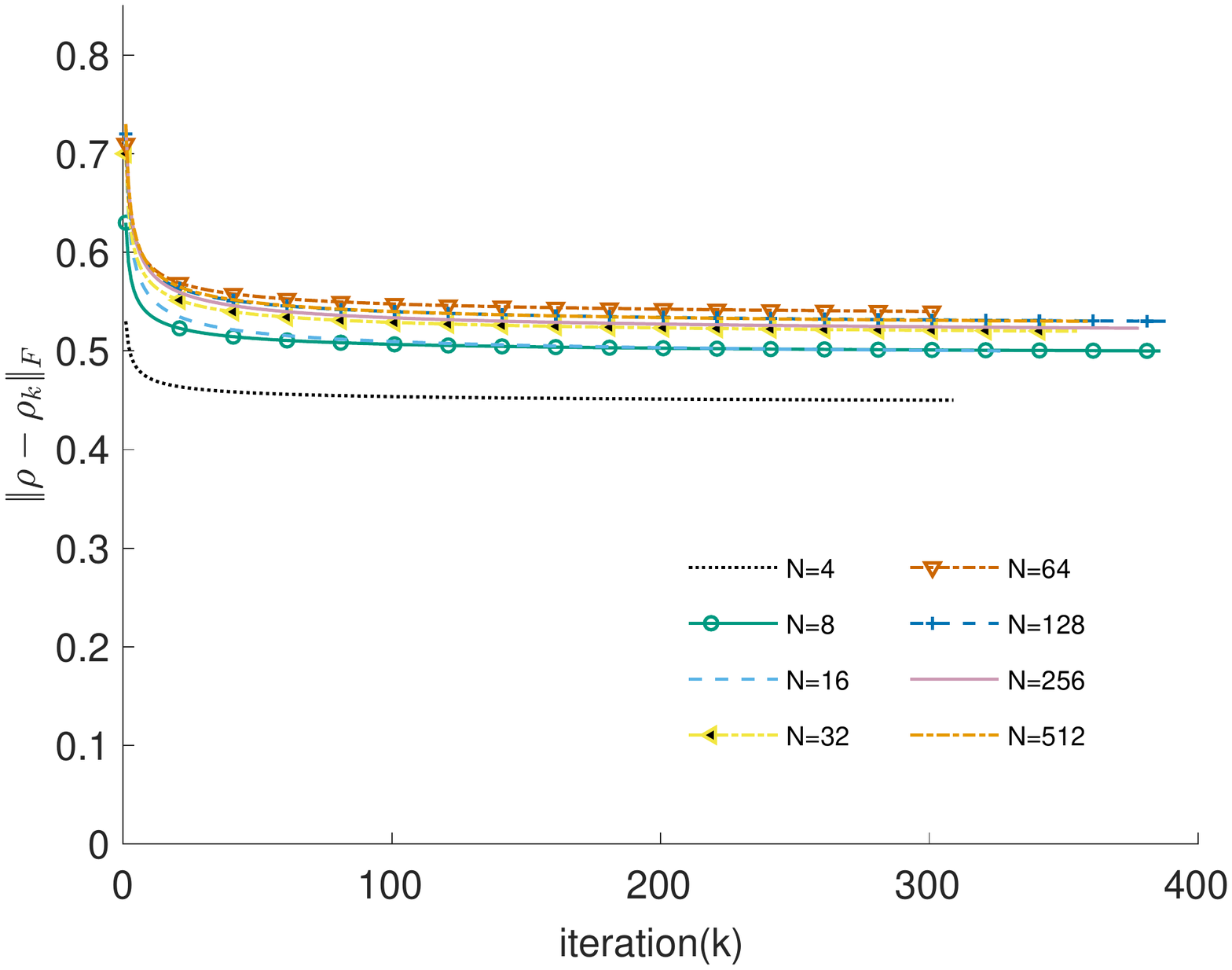}
       \caption{$\norm{\rho-\rho_{k}}_{F}$ in Example~\ref{ex3} }
    \end{figure}
  \end{minipage}
 \begin{minipage}[t]{0.5\linewidth}
    \begin{figure}[H]
      \centering
      \includegraphics[width=0.9\linewidth]{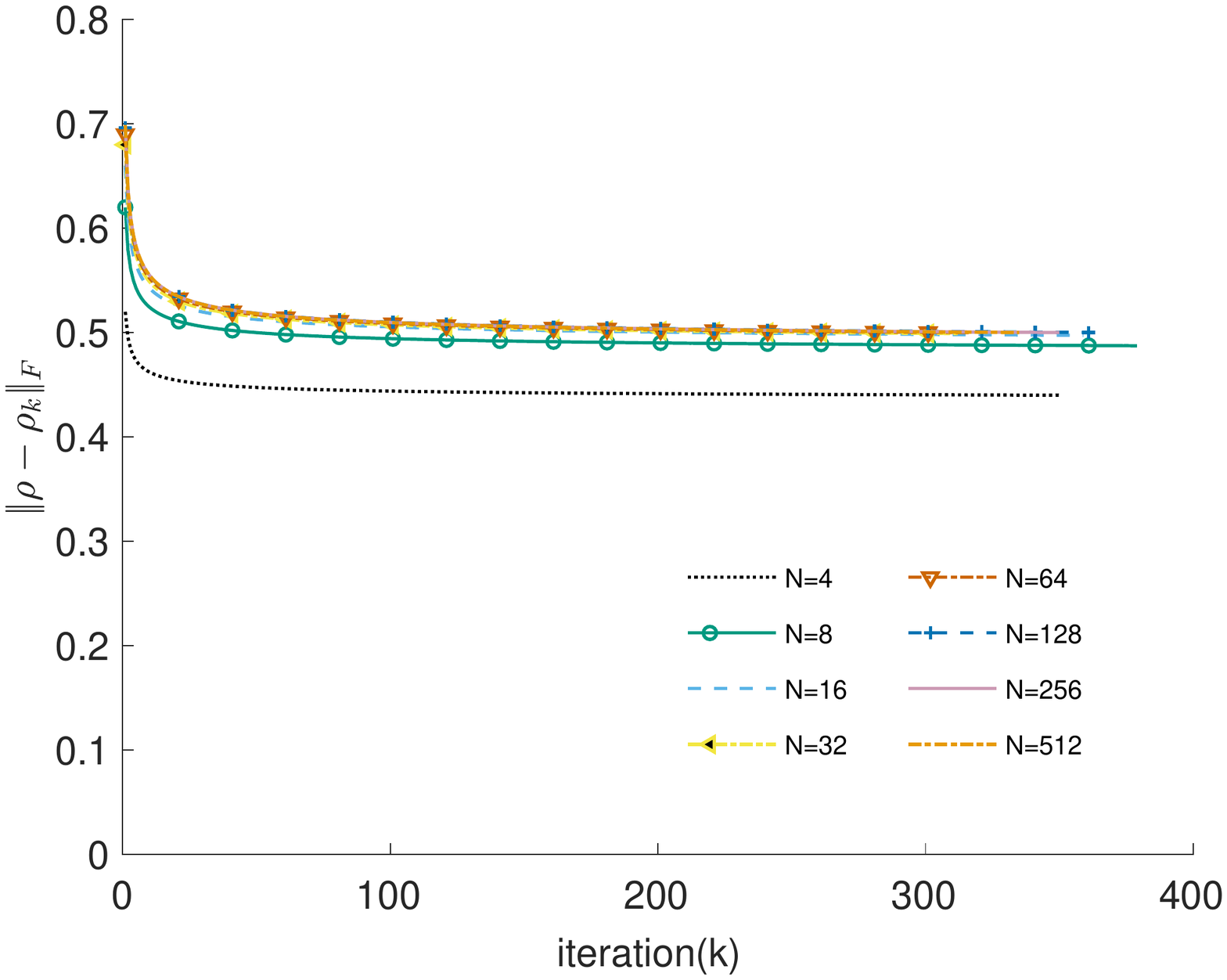}
        \caption{$\norm{\rho-\rho_{k}}_{F}$ in Example~\ref{ex4} }
    \end{figure}
  \end{minipage}

\section{Improvement of the Algorithm~\ref{mainalgorithm}}
\label{sec:improve}
 In fact, we can accelerate the convergence by solving the following quadratic programming
  \begin{equation}
    \label{S-sep-v7:eq:3}
    \begin{array}[t]{rl}
      \max\limits_{\lambda_i}& \norm{\rho-\sum\limits_{i=0}^{k-1}\lambda_{i}\ket{x_{i},x_{i}}\bra{x_{i},x_{i}}}_{F}\\
      \text{ s.t.}  & \lambda_i\geqslant 0,i=0,1,\ldots,k-1,
    \end{array}
  \end{equation}
  at $k$-th step to update the coefficient of $\rho_k$, which is closer to $\rho$.
  
    Note that the Algorithm~\ref{mainalgorithm} may not continue if a small local maximum of $\langle
    x,x|\rho-\rho_{k}|x,x\rangle$ on the unit sphere is found
    such that
    \begin{equation}
      \label{S-sep-v7:eq:4}
      \langle \rho-\rho_k, \ket{x,x}\bra{x,x}-\rho_{k}\rangle <0.
    \end{equation}
    Therefore, global optimization algorithm is need to guarantee the convergence of Algorithm~\ref{mainalgorithm}. For example,
    the multi-start strategy can be used to have better results then just using Algorithm~\ref{combineNewton}.

    In order to be specific, we consider in this paper mostly the real matrices, however, our algorithm can be easily
    generalized to real case. It turns out that the states in complex case is the bosonic states, that is the states in
    the  system of indistinguishable particles~\cite{Eckert_2002}. For a bosonic state $\rho$, it is separable if and only it
    can be written as
    \begin{equation}
      \label{S-sep-v8:eq:1}
     \rho = \sum_i\lambda_i\ket{x_i,x_i}\bra{x_i,x_i}
   \end{equation}
   where $\ket{x_i}\in\complex^{N}$. Algorithm~\ref{mainalgorithm} can be generalized to the complex case only if we
   modify the transposition of real operators to Hermitian conjugation of complex
   operators. Algorithm~\ref{combineNewton} can also be generalized to the complex case, that it find the maximum value
   of
   \begin{equation}
     \label{S-sep-v8:eq:2}
     \bra{x_i,x_i}\sigma \ket{x_i,x_i},
   \end{equation}
   with $\sigma $ being a matrices in the indistinguishable system. We can utilize the Wirtinger calculus~\cite{wirtinger1927formalen} to compute the
   gradient vectors  and Hessian matrices. The Algorithm~\ref{combineNewton} thus can be modified accordingly.
   
   It should also be noted that our algorithm can be extended easily to multipartite case, where we need only calculate the gradient vector and Hessian maxtrix and modify the Algorithm~\ref{combineNewton} and \ref{mainalgorithm} respectively. In order to be concise, we omit describing the implementation details here. 
    
\section{Conclusion}
   \label{remarks}
In this paper, we consider a subclass of quantum states in the $N\otimes N$ space, namely, the completely symmetric  
 states, which is a similar conception of “supersymmetric tensor” originally in the filed of tensor decomposition.
 Inspired by the special structure, we conjecture that the completely symmetric state is separable if and only if it is S-separable, which is proved to be true when the rank is less than $4$ or $N+1$. However, it need further research to check whether the validness of this conjecture.

 Besides, we propose a numerical algorithm which is able to detect such property with the convergence behavior analyzed. The numerical results show our algorithm is efficient. Furthermore, our algorithm is suitable to check the separability of the Bosonic states. Our algorithm is hopeful to be  a aided tool to the study of entanglement in the symmetric system.
 
 In the future work,  Conjecture~\ref{conj2} should be investigated in more complicate cases, for example, the states are
 supported in the higher dimensional space or
 of higher ranks.  It is also of interest to consider this problem in the multipartite system. Another interesting
 problem is the consider which kind of  states can be transformed to the  completely symmetric states by the invertible
 local operator. The Example~\ref{ex0} and~\ref{ex01} are two kind states which  satisfy this condition.

\appendix
\section{Proof of  Eq. ~\eqref{goesqp:eq:43}}%
\label{apdeq1}
In this appendix, we prove Eq.~\eqref{goesqp:eq:43}. Let us describe this question formally with a lemma.
\begin{lem}
  Let $x_{*}$ and $x_{\bot}$ are two orthogonal unit vectors in the $\real^{N}$ space and 

 \begin{equation}
   x = \cos(\theta)x_{*}+ \sin(\theta)x_{\bot},\theta\in[0,2\pi].
 \end{equation}
 Then we have
  \begin{equation}
   \abs{ \sin(\theta)} = \sqrt{\norm{x-x_{*}}^{2} - \frac{\norm{x-x_{*}}^4}{4}}.
  \end{equation}
\end{lem}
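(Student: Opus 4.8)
The plan is to reduce the identity to a direct trigonometric computation, exploiting only the orthonormality of $x_{*}$ and $x_{\bot}$. First I would expand the difference $x-x_{*}=(\cos\theta-1)x_{*}+\sin\theta\,x_{\bot}$ and compute its squared norm. Since $x_{*}$ and $x_{\bot}$ are orthogonal unit vectors, the cross term vanishes and the Pythagorean relation gives
\begin{equation*}
\norm{x-x_{*}}^{2}=(\cos\theta-1)^{2}+\sin^{2}\theta=2(1-\cos\theta).
\end{equation*}

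Next I would substitute $u:=\norm{x-x_{*}}^{2}=2(1-\cos\theta)$ into the right-hand side of the claimed identity and simplify. Writing $c=\cos\theta$, a short algebraic manipulation together with the factorization $(1-c)(1+c)=1-c^{2}$ yields
\begin{equation*}
u-\frac{u^{2}}{4}=2(1-c)-(1-c)^{2}=(1-c)(1+c)=1-c^{2}=\sin^{2}\theta,
\end{equation*}
so that taking the square root gives $\sqrt{u-u^{2}/4}=\abs{\sin\theta}$, which is exactly the asserted equality.

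There is essentially no obstacle here: the statement is a routine consequence of the above two computations, the only substantive input being the half-angle type factorization $(1-\cos\theta)(1+\cos\theta)=\sin^{2}\theta$. The one point worth recording is that the square root on the right-hand side is well defined, since $u=2(1-\cos\theta)\in[0,4]$ forces $u-u^{2}/4=u(1-u/4)\geqslant 0$; this nonnegativity also reflects the geometric bound $\norm{x-x_{*}}\leqslant 2$ for unit vectors, which is invoked at the end of the proof of Lemma~\ref{lem4}.
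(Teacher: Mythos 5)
Your proof is correct. It differs from the paper's in how the key quantity $\norm{x-x_{*}}^{2}=2(1-\cos\theta)$ is obtained and in presentation of the final algebra: the paper argues geometrically, drawing the plane spanned by $x_{*}$ and $x_{\bot}$, reading off side lengths of a right triangle ($\abs{OC}=\abs{\cos\theta}$, $\abs{BC}=1-\cos\theta$, $\abs{AC}=\abs{\sin\theta}$), treating the cases $\langle x,x_{*}\rangle>0$ and $\langle x,x_{*}\rangle<0$ with separate figures, and then computing $\abs{\sin\theta}=\sqrt{1-\cos^{2}\theta}$ with $\cos\theta=1-\abs{AB}^{2}/2$. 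You instead expand $x-x_{*}=(\cos\theta-1)x_{*}+\sin\theta\,x_{\bot}$ and use orthonormality directly, then verify the identity via the factorization $u-u^{2}/4=(1-\cos\theta)(1+\cos\theta)=\sin^{2}\theta$. The underlying trigonometric content is the same, but your route is purely algebraic: it needs no figure, no case split on the sign of $\cos\theta$, and it works verbatim in $\real^{N}$ without implicitly reducing to the two-dimensional picture. Your closing remark that $u(1-u/4)\geqslant0$ (so the square root is well defined, reflecting $\norm{x-x_{*}}\leqslant2$) is a small but genuine addition the paper leaves implicit when it invokes this bound in the proof of Lemma~\ref{lem4}.
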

\begin{proof}
  The following graph shows the relations of $x$,$x_{*}$, and $x_{\bot}$ when $\langle
  x,x_{*}\rangle >0$ (left one) and  $\langle x,x_{*}\rangle <0$ (right one):      
  \begin{figure}[H]
    \centering
      \includegraphics[width=0.30\linewidth]{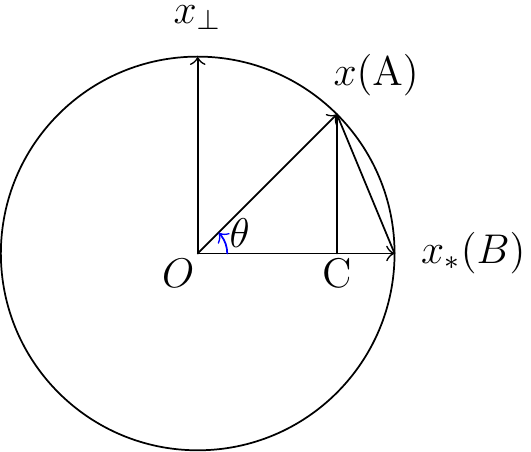}\qquad
      \includegraphics[width=0.30\linewidth]{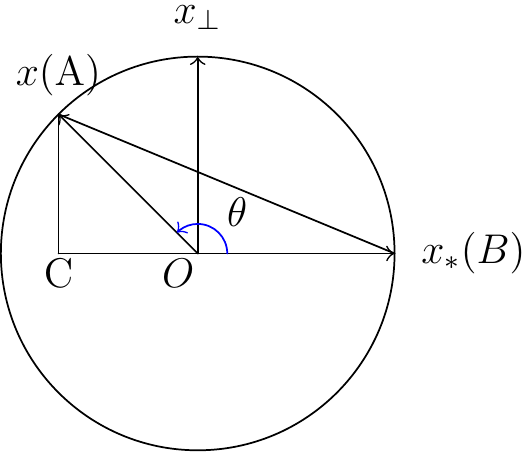}         
        \caption{$ x = \cos(\theta)x_{*}+ \sin(\theta)x_{\bot}$}
  \end{figure}
  From the above graphs, we have
  \begin{equation}
    \label{S-sep-v5.1:eq:2}
    \norm{x-x_{*}} = \abs{AB}.
  \end{equation}
  And
  \begin{equation}
    \label{S-sep-v5.1:eq:3}
    \begin{split}
      \abs{OC} &= \abs{\cos(\theta)},\\
      \abs{BC}& = 1 - \cos(\theta),\\
      \abs{AC}& = \sin(\theta).
    \end{split}
    \end{equation}
    Moreover,
    \begin{equation}
      \label{S-sep-v5.1:eq:4}
      \begin{split}
        \abs{AB}& = \sqrt{\abs{AC}^2+\abs{BC}^2}\\
        & = \sqrt{(1-\cos(\theta))^2+\sin^2(\theta)}\\
        & = \sqrt{2-2\cos(\theta)}.
      \end{split}
    \end{equation}
    Therefore,
    \begin{equation}
      \label{S-sep-v5.1:eq:5}
      \cos(\theta) = 1 - \frac{\abs{AB}^2}{2}.
    \end{equation}
    Forward,
    \begin{equation}
      \label{S-sep-v5.1:eq:6}
      \begin{split}
        \abs{  \sin(\theta)} & = \sqrt{1-\cos^2(\theta)}\\
        & = \sqrt{1 - (1-\frac{\abs{AB}^2}{2})^2}\\
        & = \sqrt{\abs{AB}^2-\frac{\abs{AB}^2}{4}}\\
        & = \sqrt{\norm{x-x_{*}}^{2} - \frac{\norm{x-x_{*}}^4}{4}}.
      \end{split}
    \end{equation}
  \end{proof}
  \section{Proof of Eq.~\eqref{xsqp}}
  \label{apdxsqp}
  In this appendix, we prove Figure~\ref{xsqp}, that is to prove
  \begin{equation}
    \label{S-sep-v5.1:eq:7}
    \norm{x_{\sqp}-x_{*}}\leqslant \norm{x_{\nt}-x_*},
  \end{equation}
  where $\norm{x_{\sqp}}\leqslant \norm{x_{\nt}}$ and $x_{*},x_{\sqp}$  are unit vectors.
  The following Figure~\ref{fig:xsqp} shows the relationship of $x_{\sqp}$ and $x_{\nt}$.
  \newline
  \begin{figure}[H]%
    \centering
      \includegraphics[width=0.25\linewidth]{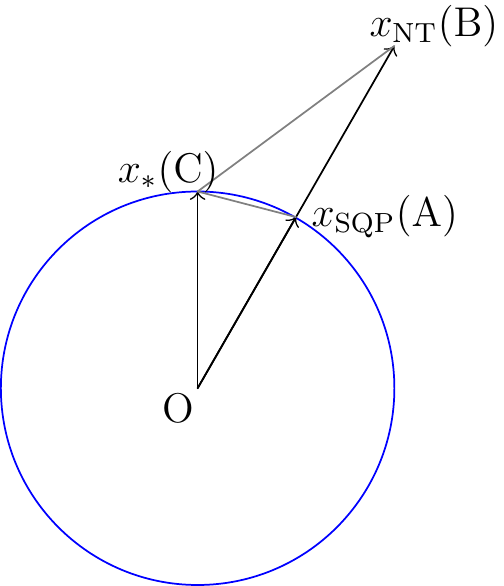}
      \caption{Relation between $x_{\sqp}$ and $x_{\nt}$}
         \label{fig:xsqp}
       \end{figure}
       Note that
       \begin{equation}
         \label{S-sep-v5.1:eq:8}
         \angle OAC = \frac{1}{2}(\pi - \angle AOC).
       \end{equation}
       
       Hence,
       \begin{equation}
         \label{S-sep-v5.1:eq:9}
         \begin{split}
           \angle BAC &= \pi -\angle OAC\\
           & = \pi - \frac{1}{2}(\pi - \angle AOC)\\
           & = \frac{\pi}{2}+\frac{1}{2}\angle AOC\\
           & > \frac{\pi}{2}.
           \end{split}
         \end{equation}
         Moreover,
         \begin{equation}
           \label{S-sep-v5.1:eq:10}
           \sin(\angle ABC) \leqslant \sin (\pi - \angle BAC) = \sin(\angle BAC).
         \end{equation}
         Forward,
         \begin{equation}
           \label{S-sep-v5.1:eq:11}
           \frac{\norm{x_{\sqp}-x_{*}}}{\norm{x_{NT}-x_{*}}}= \frac{\sin(\angle ABC)}{\sin(\angle BAC)}.
         \end{equation}
         By Eq.~\ref{S-sep-v5.1:eq:10}, we have
         \begin{equation}
           \label{S-sep-v5.1:eq:12}
             \frac{\norm{x_{\sqp}-x_{*}}}{\norm{x_{NT}-x_{*}}}\leqslant 1,
           \end{equation}
           which completes our proof.


\begin{thebibliography}{10}

\bibitem{einstein1935can}
A.~Einstein, B.~Podolsky, and N.~Rosen.
\newblock Can quantum-mechanical description of physical reality be considered
  complete?
\newblock {\em Phys. Rev.}, 47(10):777--780, May 1935.

\bibitem{schrodinger1935gegenwartige}
Erwin Schr{\"o}dinger.
\newblock Die gegenw{\"a}rtige situation in der quantenmechanik.
\newblock {\em Naturwissenschaften}, 23(49):823--828, 1935.

\bibitem{nielsen2002quantum}
Michael~A. Nielsen and Isaac~L. Chuang.
\newblock Quantum information theory, 2002.

\bibitem{gurvits2003classical}
Leonid Gurvits.
\newblock Classical deterministic complexity of edmonds' problem and quantum
  entanglement.
\newblock In {\em Proceedings of the thirty-fifth ACM symposium on Theory of
  computing - STOC '03}, pages 10--19. ACM, ACM Press, 2003.

\bibitem{gharibian2008strong}
Sevag Gharibian.
\newblock Strong {NP}-hardness of the quantum separability problem.
\newblock {\em arXiv preprint arXiv:0810.4507}, 2008.

\bibitem{Peres1996a}
Asher Peres.
\newblock Separability criterion for density matrices.
\newblock {\em Phys. Rev. Lett.}, 77(8):1413--1415, August 1996.

\bibitem{Horodecki1996}
Micha\l{} Horodecki, Pawe\l{} Horodecki, and Ryszard Horodecki.
\newblock Separability of mixed states: {Necessary} and sufficient conditions.
\newblock {\em Phys. Lett. A}, 223(1-2):1--8, November 1996.

\bibitem{Woronowikz1976}
S.L. Woronowicz.
\newblock Positive maps of low dimensional matrix algebras.
\newblock {\em Rep. Math. Phys.}, 10(2):165--183, October 1976.

\bibitem{Horodecki2006}
Micha\l{} Horodecki, Pawe\l{} Horodecki, and Ryszard Horodecki.
\newblock Separability of mixed quantum states: {Linear} contractions and
  permutation criteria.
\newblock {\em Open Syst. Inf. Dyn.}, 13(01):103--111, March 2006.

\bibitem{chruscinski2008quantum}
Dariusz Chru{\'s}ci{\'n}ski, Jacek Jurkowski, and Andrzej Kossakowski.
\newblock Quantum states with strong positive partial transpose.
\newblock {\em Phys. Rev. A}, 77(2):022113, 2008.

\bibitem{ha2010entangled}
Kil-Chan Ha.
\newblock Entangled states with strong positive partial transpose.
\newblock {\em Phys. Rev. A}, 81(6):064101, 2010.

\bibitem{Qian_2018}
Lilong Qian.
\newblock Separability of multipartite quantum states with strong positive
  partial transpose.
\newblock {\em Phys. Rev. A}, 98:012307, Jul 2018.

\bibitem{kraus2000separability}
B.~Kraus, J.~I. Cirac, S.~Karnas, and M.~Lewenstein.
\newblock Separability {in2$\times$Ncomposite} quantum systems.
\newblock {\em Phys. Rev. A}, 61(6):062302, May 2000.

\bibitem{horodecki2000operational}
Pawe\l{} Horodecki, Maciej Lewenstein, Guifr\'{e} Vidal, and Ignacio Cirac.
\newblock Operational criterion and constructive checks for the separability of
  low-rank density matrices.
\newblock {\em Phys. Rev. A}, 62(3):032310, August 2000.

\bibitem{fei2003separability}
Shao-Ming Fei, Xiu-Hong Gao, Xiao-Hong Wang, Zhi-Xi Wang, and Ke~Wu.
\newblock Separability and entanglement in $c^2\otimes c^3\otimes c^n
  $composite quantum systems.
\newblock {\em Phys. Rev. A}, 68(2):022315, 2003.

\bibitem{Li_2014}
Ming Li, Jing Wang, Shao-Ming Fei, and Xianqing Li-Jost.
\newblock Quantum separability criteria for arbitrary-dimensional multipartite
  states.
\newblock {\em Phys. Rev. A}, 89:022325, Feb 2014.

\bibitem{doherty2004complete}
Andrew~C. Doherty, Pablo~A. Parrilo, and Federico~M. Spedalieri.
\newblock Complete family of separability criteria.
\newblock {\em Phys. Rev. A}, 69(2):022308, February 2004.

\bibitem{ioannou2004improved}
L.~M. Ioannou, B.~C. Travaglione, D.~C. Cheung, and A.~K. Ekert.
\newblock Improved algorithm for quantum separability and entanglement
  detection.
\newblock {\em Phys. Rev. A}, 70:060303, Dec 2004.

\bibitem{dahl2007tensor}
Geir Dahl, Jon~Magne Leinaas, Jan Myrheim, and Eirik Ovrum.
\newblock A tensor product matrix approximation problem in quantum physics.
\newblock {\em Linear Algebra and its Applications}, 420(2-3):711--725, January
  2007.

\bibitem{bertsekas1999nonlinear}
Dimitri~P Bertsekas.
\newblock {\em Nonlinear programming}.
\newblock Athena scientific Belmont, 1999.

\bibitem{kofidis2002best}
Eleftherios Kofidis and Phillip~A Regalia.
\newblock On the best rank-1 approximation of higher-order supersymmetric
  tensors.
\newblock {\em SIAM Journal on Matrix Analysis and Applications},
  23(3):863--884, 2002.

\bibitem{qi2005eigenvalues}
Liqun Qi.
\newblock Eigenvalues of a real supersymmetric tensor.
\newblock {\em Journal of Symbolic Computation}, 40(6):1302--1324, December
  2005.

\bibitem{kolda2009tensor}
Tamara~G Kolda and Brett~W Bader.
\newblock Tensor decompositions and applications.
\newblock {\em SIAM rev.}, 51(3):455--500, 2009.

\bibitem{Chen_2013Dim}
Lin Chen and Dragomir~{\v Z} \DJ{okovi\'c}.
\newblock Dimensions, lengths, and separability in finite-dimensional quantum
  systems.
\newblock {\em J. Math. Phys.}, 54(2):022201, February 2013.

\bibitem{Chen_2011red}
Lin Chen and Dragomir~{\v Z} \DJ{okovi\'c}.
\newblock Distillability and {PPT} entanglement of low-rank quantum states.
\newblock {\em J. Phys. A: Math. Theor.}, 44(28):285303, June 2011.

\bibitem{Chen_2013properties}
Lin Chen and Dragomir~{\v Z} \DJ{okovi\'c}.
\newblock Properties and construction of extreme bipartite states having
  positive partial transpose.
\newblock {\em Commun. Math. Phys.}, 323(1):241--284, July 2013.

\bibitem{bertsekas2005dynamic}
Angelia~Nedi\'{c} Bertsekas, Dimitri~P. and Asuman~E. Ozdaglar.
\newblock {\em Convex Analysis and Optimization}, volume~1.
\newblock Athena scientific Belmont, 2005.

\bibitem{bazaraa2013nonlinear}
Mokhtar~S. Bazaraa, Hanif~D. Sherali, and C.~M. Shetty.
\newblock {\em Nonlinear Programming}.
\newblock John Wiley \& Sons, Inc., October 2005.

\bibitem{Pagonis2006}
Vasilis Pagonis, George Kitis, and Claudio Furetta.
\newblock {\em Numerical and Practical Exercises in Thermoluminescence}.
\newblock Springer New York, New York, NY, 2006.

\bibitem{qi2009z}
Liqun Qi, Fei Wang, and Yiju Wang.
\newblock Z-eigenvalue methods for a global polynomial optimization problem.
\newblock {\em Math. Program.}, 118(2):301--316, September 2007.

\bibitem{Eckert_2002}
K.~Eckert, J.~Schliemann, D.~Bru{\ss}, and M.~Lewenstein.
\newblock Quantum correlations in systems of indistinguishable particles.
\newblock {\em Ann. Phys.}, 299(1):88--127, jul 2002.

\bibitem{wirtinger1927formalen}
Wilhelm Wirtinger.
\newblock Zur formalen theorie der funktionen von mehr komplexen
  ver{\"a}nderlichen.
\newblock {\em Math. Ann.}, 97(1):357--375, 1927.

\bibitem{guan2018svd}
Yu~Guan, Moody~T Chu, and Delin Chu.
\newblock {SVD}-based algorithms for the best rank-1 approximation of a
  symmetric tensor.
\newblock {\em SIAM J. Matrix Anal. Appl.}, 39(3):1095--1115, 2018.

\bibitem{guan2018convergence}
Yu~Guan, Moody~T Chu, and Delin Chu.
\newblock Convergence analysis of an {SVD}-based algorithm for the best rank-1
  tensor approximation.
\newblock {\em Linear Algebra Appl.}, 2018.

\bibitem{hao2015feasible}
C~Hao, C~Cui, and Y~Dai.
\newblock A feasible trust-region method for calculating extreme z-eigenvalues
  of symmetric tensors.
\newblock {\em Pacific Journal of Optimization}, 11(2):291--307, 2015.

\bibitem{li2013z}
Guoyin Li, Liqun Qi, and Gaohang Yu.
\newblock The z-eigenvalues of a symmetric tensor and its application to
  spectral hypergraph theory.
\newblock {\em Numerical Linear Algebra with Applications}, 20(6):1001--1029,
  2013.

\end{thebibliography}
\end{document}